\newif\ifappendix
\newcommand{\suppressed}[1]{
	\PackageWarning{}{#1 suppressed on page \thepage\space on input line \number\inputlineno!}}
	\newcommand{\assumeapx}[1]{#1}
	\newcommand{\assumeapx}[1]{\suppressed{Content assuming the Appendix}}
\theoremstyle{plain}
\newtheorem{theorem}{Theorem}[section]
\newtheorem{corollary}[theorem]{Corollary}
\newtheorem{proposition}[theorem]{Proposition}
\newtheorem{lemma}[theorem]{Lemma}
\newtheorem{definition}[theorem]{Definition}
\theoremstyle{remark}
\newtheorem{remark}[theorem]{Remark}
\newtheorem{example}[theorem]{Example}
	\providecommand{\@lastarg}[1]{#1}
	\providecommand{\@lastarg}[4]{#4}
\newcounter{freshthmlabel}
\newif\ifnoproofsketch
\newcommand{\setthmlabel}{%
	\stepcounter{freshthmlabel}%
	\label{proofatend:\thefreshthmlabel}%
	\global\noproofsketchtrue
}
\newcommand\fixstatement[2][\proofname\space of]{%
  \ifcsname thmt@original@#2\endcsname
    \AtEndEnvironment{#2}{%
      \xdef\pat@label{\expandafter\expandafter\expandafter
        \@lastarg\csname thmt@original@#2\endcsname}
		\xdef\pat@proofof{\@nameuse{pat@proofof@#2}}%
		\setthmlabel
    }%
  \else
	\ifcsname #2name\endcsname
		\AtEndEnvironment{#2}{%
			\xdef\pat@label{\expandafter\expandafter\expandafter\@lastarg\csname #2name\endcsname}
			\xdef\pat@proofof{\@nameuse{pat@proofof@#2}}%
			\setthmlabel
		}%
	\else
		\AtEndEnvironment{#2}{%
			\xdef\pat@label{\expandafter\expandafter\expandafter\@lastarg\csname #1\endcsname}
			\xdef\pat@proofof{\@nameuse{pat@proofof@#2}}%
			\setthmlabel
		}%
    \fi
  \fi
  \@namedef{pat@proofof@#2}{#1}%
}
\globtoksblk\prooftoks{100}
\newcounter{proofcount}
\newcounter{printedproofcount}
\newcommand{\seeproofref}{%
	\ifappendix
	\renewcommand*{\qedsymbol}{%
		\hyperref[proofatend:proof-\thefreshthmlabel]{%
			\(\Box\)}}
	\fi
}
	\edef\next{%
		\noexpand\begin{proof}[{\pat@proofof\space\pat@label\space\noexpand\ref{proofatend:\thefreshthmlabel}}]%
		\noexpand\phantomsection%
		\noexpand\label{proofatend:proof-\thefreshthmlabel}%
		\unexpanded\expandafter{\BODY}}%
	\newcommand{\omittedproofs}[1][{}]{%
		\ifnum\value{printedproofcount}<\value{proofcount}
			#1
			\count@=\value{printedproofcount}
			\loop
				\the\toks\numexpr\prooftoks+\count@\relax
				\stepcounter{printedproofcount}
				\ifnum\count@<\value{proofcount}%
					\advance\count@\@ne
			\repeat
		\fi
	}
\newcommand{\superimpose}[2]{%
  {\ooalign{$#1\@firstoftwo#2$\cr\hfil$#1\@secondoftwo#2$\hfil\cr}}}
	\newcommand{\footnoteref}[1]{%
		\protected@xdef\@thefnmark{\ref{#1}}\@footnotemark%
	}
\newcommand{\customlabel}[2]{%
	\protected@write \@auxout {}{\string \newlabel {#1}{{#2}{\thepage}{#2}{#1}{}} }%
	\hypertarget{#1}{#2}
}
\newcommand{\oversetlabel}[3][\theequation]{
	\stepcounter{equation}
	\phantomsection
	\overset{({\customlabel{#2}{#1}})}{#3}
}
\newcommand{\eqlabel}[2][\theequation]{
	\ensuremath{\oversetlabel[#1]{#2}{=}}
}
\newcommand{\diff}{\mathop{}\!\mathrm{d}}
\newcommand{\defeq}{\triangleq}
\newcommand{\defiff}{\stackrel{\triangle}{\iff}}
\DeclareMathOperator{\obj}{obj}
\newcommand{\cat}[1]{\textnormal{\textsf{#1}}\xspace}
\newcommand{\Cat}{\cat{Cat}}
\newcommand{\Set}{\cat{Set}}
\newcommand{\kl}{\mathcal{K}l}
\newcommand{\Id}{\mathcal{I}d}
\newcommand{\CM}{\mathcal{C\mspace{-2mu}M}}
\newcommand{\op}{^{op}}
\newcommand{\To}{\Rightarrow}
\newcommand{\pos}{\cat{Pos}}
\newcommand{\posj}{\cat{Pos$^{\!\vee\!}$}}
\newcommand{\cpo}{\cat{$\omega$-Cpo}}
\newcommand{\cpoj}{\cat{$\omega$-Cpo$^{\!\vee\!}$}}
\newcommand{\str}{\textsf{str}}
\newcommand{\cstr}{\textsf{cstr}}
\newcommand{\dstr}{\textsf{dstr}}
\newcommand{\kstr}[1]{{#1\textsf{-str}}}
\begin{document}
\title{Behavioural equivalences\\ for coalgebras with unobservable moves}

\author{
	Tomasz Brengos\thanks{Supported by the grant of Warsaw
	University of Technology no.~504P/1120/0136/000.}\\[-.5ex]
	\small\href{mailto:t.brengos@mini.pw.edu.pl}{\tt t.brengos@mini.pw.edu.pl}\\
	\small	Faculty of Mathematics and Information Sciences \\[-.8ex]
	\small	Warsaw University of Technology, Poland
	\\[2ex]
	\begin{tabular}{ccc}
	Marino Miculan\thanks{Partially supported by MIUR project 2010LHT4KM (\emph{CINA}).}&\qquad& Marco  Peressotti\\[-.5ex]
	\small\href{mailto:marino.miculan@uniud.it}{\tt marino.miculan@uniud.it}
	&\qquad&
	\small\href{mailto:marco.peressotti@uniud.it}{\tt marco.peressotti@uniud.it}
	\end{tabular}\\
	\small	Laboratory of Models and Applications of Distributed Systems \\[-.8ex]
	\small	University of Udine, Italy
	\small	Department of Mathematics and Computer Science\\[-.8ex]
}
\date{}

\maketitle

\vspace{-6ex}
\begin{center}
\begin{minipage}{.95\textwidth}
\paragraph{Abstract}
  We introduce a general categorical framework for the definition of
  weak behavioural equivalences, building on and extending recent
  results in the field.  This framework is based on special \emph{order
    enriched categories}, i.e.~categories whose hom-sets are endowed
  with suitable complete orders. Using this structure we provide an
  abstract notion of \emph{saturation}, which allows us to define
  various (weak) behavioural equivalences.  
  We show that the Kleisli categories of many common monads are
  categories of this kind.  On one hand, this allows us to instantiate
  the abstract definitions to a wide range of existing systems
  (weighted LTS, Segala systems, calculi with names, etc.), recovering
  the corresponding notions of weak behavioural equivalences; on the
  other, we can readily provide new weak behavioural equivalences for
  more complex behaviours, like those definable on presheaves,
  topological spaces, measurable spaces, etc.
\end{minipage}
\end{center}


\section{Introduction}

\looseness=-1
Since Aczel's seminal work \cite{am89:final}, the theory of coalgebras
has been recognized as a good framework for the study of concurrent
and reactive systems \cite{rutten:universal}: systems are represented
as maps of the form $X\to BX$ for some suitable \emph{behavioural
  functor} $B$.  By changing the underlying category and the functor
we can cover a wide range of cases, from traditional labelled
transition systems to systems with I/O, quantitative aspects,
probabilistic distribution, stochastic rates, and even systems with
continuous state.  Frameworks of this kind are very
useful both from a theoretical and a practical point of view, since
they prepare the ground for general results and tools which can be
readily instantiated to various cases, and moreover they help us to
discover connections and similarities between apparently different
notions.  In particular, Milner's strong bisimilarity can be
characterized by the final coalgebraic semantics and
\emph{coalgebraic bisimulation}; this has paved the way for the
definition of strong bisimilarity for systems with peculiar
computational aspects and many other important results (such as Turi
and Plotkin's bialgebraic approach to \emph{abstract GSOS}
\cite{tp97:tmos}).
More recently, Hasuo et al.~\cite{hasuo07:trace} have showed that, when
the functor $B$ is of the form $TF$ where $T$ is a monad, the
\emph{trace equivalence} for systems of the form $X\to TFX$ can be
obtained by lifting $F$ to the Kleisli category of $T$.  This has led
to many results connecting formal languages, automata theory and
coalgebraic semantics
\cite{silvawesterbaan2013:calco,sbbr:lmcs13,bonchi2015killing,BBHPRS14}.

These remarkable achievements have boosted many attempts to cover other
equivalences from van Glabbeek’s spectrum \cite{glabbeek90:spectrum}.
However, when we come to behavioural equivalences for systems with
unobservable (i.e., internal) moves, the situation is not as clear.
The point is that what is ``unobservable'' depends on the system: in
LTSs these are internal steps (the so-called ``$\tau$-transitions''), but in
systems with quantitative aspects or dealing with resources internal
steps may still have observable effects.  This has led to many
definitions, often quite \emph{ad hoc}.  Some follow Milner's
``double arrow'' construction (i.e., strong bisimulations of the
system saturated under $\tau$-transitions), but in general this
construction does not work; in particular for quantitative systems we
cannot apply directly this schema,
and many other solutions have been proposed; see e.g.
\cite{mb2007:ictcs,sokolova09:sacs,mb2012:qapl,mp2013:weak-arxiv,gp:icalp2014,doberkat:tamc2008}.
In non-de\-ter\-min\-is\-tic probabilistic systems, for example, the
counterpart of Milner's weak bisimulation is Segala's weak bisimulation
\cite{sl:njc95}, which differs from Baier-Hermann's
\cite{baier97:cav}.

This situation points out the need for a general, uniform framework
covering many weak behavioural equivalences at once.  This is the
problem we aim to address in this paper.  Analysing previous work in
this direction \cite{mp2013:weak-arxiv,gp:icalp2014,brengos2013:corr},
a common trait we notice is that systems are \emph{saturated} by
adding unobservable moves to ``fill the gaps'' which can be observed
by a strong bisimulation.  Although different notions of saturations
are used for different weak bisimilarities, they are accumulated by
\emph{circular} definitions; in turn, these definitions can be
described as \emph{equations} in a suitable domain of
\emph{approximants} and solved taking advantage of some fixed point
theory.  Different equations and different domains yield different
saturations and hence different notions of weak bisimilarity.

In the wake of these observations, in this paper we propose to host
these constructions in \emph{order-enriched categories} whose hom-sets
are additionally endowed with \emph{binary joins} for ``merging'' approximants, and
a \emph{complete order} to guarantee convergence of approximant
chains.  In this setting we can define, and solve, the abstract
equations corresponding to many kinds of weak observational
equivalence.  For example, we will show the abstract schemata
corresponding to Milner's and Baier-Hermann's versions of weak
bisimulations; hence, these two different bisimulations are
applications of the same general framework.  Then, we show that the
Kleisli categories of many monads commonly used for defining
behavioural functors meet these mild requirements; this allows us to
port the definitions above to a wide range of behaviours in different
categories (such as presheaf categories, topological spaces and
measurable spaces).

Another aspect highlighted by these applications is that the notion of
``unobservability'' can be considered as a specific computational
effect embedded in the monad structure presented by the behavioural
functors.  This observation fosters the idea that unobservability is
orthogonal to saturation, in the sense that we can develop a theory of
saturation modularly in the notion of unobservability: which aspects
are ignored in the behavioural equivalence can be decided by choosing
the suitable unobservation monad.

\paragraph{Synopsis}
Section~\ref{sec:prelim} provides some preliminaries about
coalgebras, monads and order enriched categories.  
Section~\ref{sec:2_cat_perspective} contains the main
contribution of this work, i.e.~a 2-categorical perspective
on weak behavioural equivalence for coalgebras whose behaviours
have an unobservable part.
In Section~\ref{sec:examples} we provide a few representative examples
of behaviours covered by our results.
Some conclusions and directions for further work are in
Section~\ref{sec:concl}.
\assumeapx{Longer proofs are in Appendix~\ref{proofatend:proofs}.}

We assume the reader to be familiar with the theory of 
coalgebras and behavioural equivalences; for an introduction, we refer to \cite{adamek05:intro_coalg}.

\section{Preliminaries}\label{sec:prelim}

\subsection{Monads}
A monad on a category \cat{C}is a triple $(T,\mu,\eta)$ where
$T$ is an endofunctor over \cat{C}and 
$\mu \colon TT \To T$ and $\eta \colon \Id \To T$
are two natural transformations such that the diagrams below
commute:
\[
	\begin{tikzpicture}[
		auto, xscale=1.3, font=\small,
		nat/.style={double,-implies},
		baseline=(current bounding box.center)]	
		\node (n0) at (0,1) {\(T^3\)};
		\node (n1) at (1,1) {\(T^2\)};
		\node (n2) at (0,0) {\(T^2\)};
		\node (n3) at (1,0) {\(T\)};
		\draw[nat] (n0) to node {\(\mu T\)} (n1);
		\draw[nat] (n0) to node[swap] {\(T\mu\)} (n2);
		\draw[nat] (n1) to node {\(\mu\)} (n3);
		\draw[nat] (n2) to node[swap] {\(\mu\)} (n3);
	\end{tikzpicture}
	\qquad
	\begin{tikzpicture}[
		auto, xscale=1.3, font=\small,
		nat/.style={double,-implies},
		baseline=(current bounding box.center)]	
		\node (n0) at (0,1) {\(T\)};
		\node (n1) at (1,1) {\(T^2\)};
		\node (n2) at (2,1) {\(T\)};
		\node (n3) at (1,0) {\(T\)};
		\draw[nat] (n0) to node {\(\eta T\)} (n1);
		\draw[nat] (n0) to node[swap] {\(\Id\)} (n3);
		\draw[nat] (n2) to node[swap] {\(T\eta\)} (n1);
		\draw[nat] (n1) to node {\(\mu\)} (n3);
		\draw[nat] (n2) to node[] {\(\Id\)} (n3);
	\end{tikzpicture}	
\]
These are the coherence conditions of an associative
operation with a unit. In fact, monads over $\cat{C}$ are monoids
in the category of endofunctors over $\cat{C}$ w.r.t.~the monoidal
structure defined by endofunctor composition. Hence,
$\mu$ and $\eta$ are called \emph{multiplication} 
and \emph{unit} of $T$, respectively.

Each monad $(T,\mu,\eta)$ gives rise to a canonical category called \emph{Kleisli
category of $T$} and denoted by $\kl(T)$. This category has the same
objects of the category $\cat{C}$ underlying $T$; its hom-sets are given as 
$\kl(T)(X,Y) = \cat{C}(X,TY)$
for any two objects $X$ and $Y$ and its composition as $g \bullet f = \mu_Z \circ Tg \circ f$ for any two morphisms $f$ and $g$ with suitable domain and codomain
($\bullet$ and $\circ$ denote composition in $\kl(T)$ and $\cat{C}$
respectively).

\begin{example}
The powerset functor $\mathcal P$ admits a monad structure 
$(\mathcal{P},\mu,\eta)$ where
$\eta_X(x) = \{x\}$ and $\mu_X(Y) = \bigcup Y$
for any set $X$.
\end{example}
\begin{example}
The probability distribution functor $\mathcal{D}$ assigns to any set $X$ the set 
$\mathcal{D}X \defeq \{{\phi\colon X\to [0,1]}\mid \sum_{x\in X} \phi(x) = 1\}$ 
of discrete measures and to any function $f\colon X\to Y$ the function
$\mathcal{D}f(\phi)(y) \defeq \sum_{f(x) = y}\phi(x)$;
it admits a monad structure whose unit and multiplication
are given on each $X$ as $\eta_{X}(x) = \delta_x$ and 
$\mu_X(\phi)(x) = \sum_{\psi}\psi(x)\cdot\phi(\psi)$.
\end{example}

\paragraph{Commutative monads}
Assume $(\cat{C},\otimes,I)$ to be a monoidal category and let
\begin{gather*}
	\alpha = \{\alpha_{X,Y,Z}\colon (X \otimes Y) \otimes Z \to X \otimes (Y \otimes Z) \}_{X,Y,Z \in \cat{C}}\\
	\lambda = \{\lambda_{X}\colon I \otimes X \to X \}_{X \in \cat{C}}\qquad
	\rho = \{\rho_{X}\colon X \otimes I \to X \}_{X \in \cat{C}}
\end{gather*}
denote its associator, left unitor, and right unitor.
A monad $(T,\mu,\eta)$ on \cat{C} is called
\emph{strong} if it is equipped with a family of morphisms
$$\{\str_{X,Y} \colon X \otimes TY \to T(X \otimes Y)\}_{X,Y \in \cat{C}},$$
called \emph{(tensorial) strength}, which is natural in both components 
and is coherent with the structure of monads and monoidal categories, i.e.:
\begin{gather*}
	\mu_{X \otimes Y} \circ T(\str_{X,Y})\circ\str_{X,TY} = 
	\str_{X,Y} \circ (id_X \otimes \mu_Y)
	\qquad
	\eta_{X\otimes Y} = \str_{X,Y} \circ (id_{X} \otimes \eta_Y)
	\\
	T(\alpha_{X,Y,Z}) \circ \str_{X \otimes Y, Z} = 
	\str_{X,Y\otimes Z} \circ id_X \otimes \str_{Y,Z} \circ\alpha_{X,Y,TZ}
	\qquad
	\lambda_{TX} = T(\lambda_X) \circ \str_{I,X} 
	\text{.}
\end{gather*}
Dually, a \emph{costrength} for a monad is family
$\{\cstr_{X,Y} \colon TX \otimes Y \to T(X\otimes Y)\}_{X,Y \in \cat{C}}$
which is natural in both $X$ and $Y$ and 
coherent w.r.t.~the structure of $T$ and $\cat{C}$.
Every strong monad on a symmetric monoidal category
has a costrength given on each component as
$
	\cstr_{X,Y}	\defeq T\gamma_{Y,X} \circ \str_{Y,X} \circ \gamma_{TX,Y}
$
where $\gamma = \{\gamma_{X,Y}\colon X \otimes Y \to Y \otimes X \}_{X,Y \in \cat{C}}$ is the braiding natural isomorphism for $(\cat{C},\otimes,I)$.
A strong monad on a symmetric monoidal category is called 
\emph{commutative} whenever:
\[
	\mu_{X \otimes Y} \circ T(\str_{X,Y}) \circ \cstr_{X,TY} = 
	\mu_{X \otimes Y} \circ T(\cstr_{X,Y}) \circ \str_{TX,Y}
	\text{.}
\]
Every strong commutative monad is a symmetric monoidal monad
(and \emph{vice versa}). In fact, its \emph{double strength} 
$\{\dstr_{X,Y} \colon TX \otimes TY \to T(X \otimes Y)\}_{X,Y \in \cat{C}}$
can be defined in terms of its (co)strength as follows:
\[\dstr_{X,Y} = \mu_{X \otimes Y} \circ T(\str_{X,Y}) \circ \cstr_{X,TY} = 
	\mu_{X \otimes Y} \circ T(\cstr_{X,Y}) \circ \str_{TX,Y}\text{.}\]
Conversely, $\str_{X,Y} = \dstr_{X,Y} \circ (\eta_X \otimes id_{TY})$ and
$\cstr_{X,Y} = \dstr_{X,Y} \circ (id_{TX} \otimes \eta_{Y})$.
We refer the reader to \cite{kock1972strong} for further details on strong
commutative monads.

\begin{example}
	The powerset monad is strong and commutative \cite{hasuo07:trace}.
	Its strength, co\-strength and double strength are given, on each component, as
	follows:
	\[
	\str_{X,Y}(x,Y') = \{x\} \times Y' \qquad
	\cstr_{X,Y}(X',y) = X' \times \{y\} \qquad
	\dstr_{X,Y}(X',Y') = X' \times Y' \text{.}
	\]
\end{example}

\begin{example}
	The probability distribution monad is strong and commutative \cite{hasuo07:trace}:
	\begin{gather*}
		\str_{X,Y}(x,\psi)(x',y') = \delta_x(x')\cdot\psi(y') \qquad
		\str_{X,Y}(\phi,y)(x',y') = \phi(x')\cdot\delta_y(y')\\
		\dstr_{X,Y}(\phi,\psi)(x',y') = \phi(x')\cdot\psi(y') \text{.}
	\end{gather*}
\end{example}

\subsection{Coalgebras and bisimulation} \label{subsec:bisimulations}

Let $\cat{C}$ be a category and let $F \colon \cat{C}\to \cat{C}$ be a functor. 
\begin{definition}
	An \emph{$F$-coalgebra} is a morphism $\alpha\colon X\to FX$ in 
	$\cat{C}$. The domain $X$ of $\alpha$ is	called \emph{carrier}
	and the morphism $\alpha$ is sometimes also called \emph{structure}.
	A \emph{homomorphism} from an $F$-coalgebra $\alpha\colon X\to FX$ to an
	$F$-coalgebra $\beta\colon Y\to~FY$ is a morphism $f \colon X\to Y$ in
	$\cat{C}$ such that $ F(f)\circ \alpha =  \beta \circ f$. The 
	category of $F$-coalgebras and their homomorphisms is denoted by
	$\cat{C}_F$. 
\end{definition}
Many important types of transition systems can be captured using coalgebras.  We will now list two examples here and  discuss others in the remaining part of the paper.

\begin{example}[Labelled transition systems]
	LTSs with labels in a given set $A$ (see e.g~\cite{sangiorgi2011:bis}) can be viewed as coalgebras for the endofunctor $\mathcal{P}(A\times \Id)\colon \Set\to \Set$ \cite{rutten:universal}, where $\mathcal{P}\colon \Set\to \Set$ denotes the powerset functor.
\end{example}
\begin{example}[Fully-probabilistic systems]
	Fully-probabilistic systems \cite{baier97:cav} are modelled as 
	coalgebras for the endofunctor $\mathcal{D}(A \times \Id)$ on \Set 
	\cite{sokolova11} where, $\mathcal{D}$ denotes the probability distribution functor.
\end{example}

Notions of strong bisimulation have been well captured coalgebraically
\cite{am89:final,rutten:universal,staton11}. In this paper we consider
a variant called \emph{kernel bisimulation} i.e.~``a relation which is
the kernel of a common compatible refinement of the two systems''
\cite{staton11}. To be more precise, a relation $R$ (i.e.~a jointly monic span in $\cat C$) is a
\emph{kernel bisimulation}
between two $F$-coalgebras $\alpha_1\colon X_1\to FX_1$ and $\alpha_2\colon X_2\to~FX_2$ if
there are $\beta\colon Y\to FY$ and a cospan of homomorphisms
${\alpha_1 \rightarrow \beta \leftarrow \alpha_2}$ for which $R$ is
the pullback of ${X_1 \rightarrow Y \leftarrow X_2}$ in $\cat{C}$.

In this paper we will consider $T$-coalgebras where $T$ is part of
a monad $(T,\mu,\eta)$ on \cat{C}; in this setting, coalgebras can be seen
as endomorphisms of $\kl(T)$. Therefore, we shall reformulate the notion of kernel bisimulation in 
terms of strong (resp.~weak) \emph{behavioural morphisms} of endomorphisms. 
This choice allows us to streamline
the exposition while working at the abstract level of morphism
between endomorphisms and accommodate saturation even when carried
out in other categories. Moreover, for simplicity's sake, we will consider
strong (resp.~weak) behavioural morphisms on single systems only.
The restriction to single systems aligns us to other recent works on
weak bisimulations \cite{brengos2013:corr,gp:icalp2014}.
We do this without loss of generality as all results presented in this paper readily extend to 
the case of multiple systems. In fact, all categories of coalgebras considered
in this paper have binary coproducts allowing us to treat pairs of systems
as a single one that ``runs them in parallel'' and thus mimic 
the scenario of Staton's kernel bisimulation \cite{staton11}.

\begin{definition}
\label{def:strong-behavioural-equivalence}
	Let $J$ be a subcategory of \cat{K} with all objects of $\cat K$.
	A morphism $i \colon X \to Y \in J$ is a 
	\emph{strong behavioural morphism}
	for an endomorphism $\alpha\colon X \to~X \in \cat{K}$ whenever there 
	is an endomorphism $\beta\colon Y\to Y\in \cat K$
	such that $i \circ \alpha = \beta \circ i$ 
	i.e.~it extends to a homomorphism of endomorphisms from $\alpha$ to $\beta$.
	A relation $R \rightrightarrows X$ in $J$ on the carrier of $\alpha$
	is a \emph{strong behavioural equivalence}
	(or simply \emph{bisimulation}) if it is the kernel pair of some
	strong behavioural morphism for $\alpha$.
\end{definition}
In particular, when $J = \cat{C}$ and $\cat{K} = \kl(T)$
strong behavioural equivalence coincides with kernel bisimulation on a $T$-coalgebra $\alpha:X\to TX$ viewed as an endomorphism in the Kleisli category of $T$.

\subsection{Order enriched categories} \label{subsec:order_enriched}

\begin{definition}
	Let $(\cat{V},\otimes, I)$ be a monoidal category.
	A \emph{category enriched over
	$\cat V$} (or $\cat V$-enriched)
	\cat{C}
	consists of the following data:
	\begin{enumerate*}[label=(\alph*)]
		\item
		a class $\obj(\cat{C})$ of objects of
		\cat{C},
		\item
		an object $\cat{C}(X,Y) \in \cat{V}$
		for each $X,Y \in \obj(\cat{C})$,
		\item
		a map $id_X \colon I \to \cat{C}(X,X)$ in $\cat{V}$
		determining an identity for every $X \in \cat{C}$,
		\item
		and a map $\circ_{X,Y,Z} \colon \cat{C}(Y,Z)\otimes\cat{C}(X,Y) \to \cat{C}(X,Z)$
		in \cat{V} determining a composition for each 
		$X,Y,Z \in \cat{C}$ and such that it is associative
		and has the assignments above as identities, i.e.:
	\end{enumerate*}
		\begin{gather*}
			\lambda_{\cat{C}(X,Y)} = \circ_{X,Y,Y} \bullet \left(id_{Y} \otimes id_{\cat{C}(X,Y)}\right)
			\qquad
			\rho_{\cat{C}(X,Y)} = \circ_{X,X,Y} \bullet \left(id_{\cat{C}(X,Y)}\otimes id_{X}\right)
			\\
			\circ_{W,X,Z} \bullet \left({\circ_{X,Y,Z}} \otimes id_{\cat{C}(W,X)}\right)
			= 
			\circ_{W,Y,Z} \bullet \left(id_{\cat{C}(Y,Z)} \otimes \circ_{W,X,Y}\right) \bullet \alpha_{\cat{C}(Y,Z),\cat{C}(X,Y),\cat{C}(W,X)}
		\end{gather*}
		where $\bullet$ denotes composition in \cat{V} and
		$\lambda$, $\rho$, and $\alpha$
		are the left unitor, the right unitor, and the associator
		of $(\cat{V},\otimes,I)$, respectively.
\end{definition}
Let $(\pos,\times,1)$ be the cartesian category 
of partial orders as objects and monotonic maps as morphisms. We will often refer to \pos-enriched categories as \emph{order enriched}. The enrichment in an order enriched category $\cat{C}$  guarantees that the composition is monotonic in each component i.e.:
\[
f \leq f' \implies g\circ f \leq g\circ f' \land f \circ h \leq f' \circ h.
\]
Many examples considered in this paper are categories with a stronger type of enrichment: \cpo-, \posj- or \cpoj-enrichment. Here, $(\cpo,\times,1)$ is the cartesian category with partial orders that admit  suprema of ascending $\omega$-chains as objects and maps preserving such suprema as morphisms. \posj and \cpoj are full subcategories of $\cat{Pos}$ and \cpo, respectively, in which all objects additionally admit binary joins. It should be noted here that each of the four categories \pos, \posj, \cpo, \cpoj is enriched over itself (with the order on hom-sets imposed by the pointwise order).

The morphisms in \posj and \cpoj do not necessarily preserve binary joins. In other words, it is not necessarily the case that a \posj- or \cpoj-enriched category satisfies right or left distributivity laws. 
\begin{definition}
	We say that  an order enriched category whose hom-sets
	admit binary joins is \emph{right (resp.~left) distributive}
	given that the following hold for arbitrary morphisms 
	$f,g,h,k$ with suitable domain and codomain, respectively: 
	\begin{align}
	\tag{RD} \label{law:RD} (f\vee g) \circ h = f\circ h \vee g\circ h,\\
	\tag{LD} \label{law:LD} k\circ (f\vee g) = k\circ f \vee  k\circ g.
	\end{align}
	We may also say that such equations are satisfied by a category whenever
	they are satisfied by any morphism.
\end{definition}

In general, in \posj- and \cpoj-enriched categories, for any morphisms $f,g,h,k$ with suitable domain and codomain as above  we have:
\[
	(f\vee g) \circ h \geq  f\circ h \vee g\circ h 
	\text{ and } k\circ (f\vee g) \geq k\circ f \vee  k\circ g.
\] 
For any order enriched category \cat{K} let $\cat{K}\op$ denote the dual category with the order on hom-sets left intact. Note that if \cat{K} is left (resp.~right) distributive then $\cat{K}\op$ is right (resp.~left) distributive.

In this paper we shall use several 2-categorical notions in the setting
of order enriched categories (see e.g.~\cite{lack:09,lack:morita:2014,street:1972} for a general 2-categorical perspective). We recall some of these notions here.

A \emph{monad} in an order enriched category is an endomorphism $\sigma\colon X\to X$ satisfying $id_X\leq \sigma$ and $\sigma\circ \sigma\leq \sigma$. A \emph{lax functor} $F$ from a category  $\cat{C}$ to an order enriched category  \cat{K} consists of the following assignments. For any object $X\in \cat{C}$ there is an object $FX \in \cat{K}$ and for any morphism $f\colon X\to Y$ a morphism $F(f)\colon FX\to FY$ such that for any composable morphisms $f,g\in \cat{C}$ we have:
$$
F(f)\circ F(g)\leq F(f\circ g) \text{ and } id_{FX}\leq F(id_X).
$$
Given two lax functors $F,G\colon \cat{C}\to \cat{K}$ a family of morphisms $\psi = \{\psi_X\colon FX\to GX\}_{X\in \cat{K}}$ 
in $\cat{K}'$ is \emph{lax natural transformation} between $F$ and $G$ provided that for any $f\colon X\to Y$ in $\cat{C}$ we have:
\[
	\begin{tikzpicture}[
		auto, scale=1, font=\small,
		baseline=(current bounding box.center)]	
		\matrix (m) [matrix of math nodes, row sep=4ex, column sep=5ex]{ 
			FX & GX \\
			FY & GY \\
		};
		\draw[->] (m-1-1) to node {\(\psi_X\)} (m-1-2);
		\draw[->] (m-1-1) to node[swap] {\(Ff\)} (m-2-1);
		\draw[->] (m-1-2) to node {\(Gf\)} (m-2-2);
		\draw[->] (m-2-1) to node[swap] {\(\psi_Y\)} (m-2-2);
		\node at ($(m-1-1)!.5!(m-2-2)$) {\(\geq\)};
	\end{tikzpicture}
\]
We define \emph{oplax functors} and \emph{oplax transformations} by reversing the order in the above.
Finally, a \emph{$\cat{C}$-adjunction} consists of a functor $F\colon \cat{C}\to \pos$ equipped with a family of adjunctions $\phi_X \dashv \nu_X \colon FX \to P_X$ between posets $FX$ and $P_X$ considered as categories for any  object $X\in \cat{C}$.  
Akin to order theory, monads and adjunctions in order-enriched categories are also called \emph{closure operators} and \emph{Galois connections}.

\subsection{Coalgebras with unobservable moves}
 \label{sec:monadic-observables} 
Originally
\cite{hasuojacobssokolova2006:jssst,silvawesterbaan2013:calco},
coalgebras with silent actions were introduced in the context of
coalgebraic trace semantics as $T(F+\Id)$-coalgebras for a monad $T$ and an endofunctor $F$ on $\cat{C}$
(e.g.~$\mathcal{P}((A\times\Id + \Id) \cong
\mathcal{P}((A+\{\tau\})\times\Id)$).  Intuitively,
$T$, $F$, and $\Id$ describe the ``branching'',
``observable'', and ``unobservable'' computational aspects
respectively with the monad defining how moves are concatenated.  For
many systems modelled by coalgebras in \Set observables are actually
given by a set of labels (or alphabet) $A$. In this case:
\[
	T(F+\Id)=T(A\times \Id + \Id)\cong T(A_\tau\times \Id),
\]
where $A_\tau \defeq A+\{\tau\}$ for short. However, this readily generalises
to more complex notions of observables and beyond the 
category \Set; e.g.,
the alphabet may be given as measurable space (cf.~FlatCCS \cite{bm:2015stocsos}) 
or as a presheaf (cf.~$\pi$-calculus \cite{sangiorgi2003pi}).

\looseness=-1
In this subsection we show how unobservable moves can be endowed with a
suitable monadic structure whose Kleisli category is \cpoj-enriched (under mild assumptions).  We build on the approach introduced in
\cite{brengos2013:corr} where Brengos showed that given some mild
assumptions on $T$ and $F$ we may either introduce a monadic structure
on a lifting $\overline{F}+\Id$ of the functor
$F+\Id$ to the Kleisli category $\kl(T)$ or embed the lifting
$\overline{F}+\Id$ into the monad $\overline{F}^{\ast}$, where
$\overline{F}^{\ast}$ denotes the free monad over the lifting
$\overline{F}\colon \kl(T)\to \kl(T)$ of $F$.  The monadic structure of
(un)observables on $\kl(T)$ is composed with $T$ (along the adjoint situation
on \cat{C} and $\kl(T)$) yielding a monadic structure on $T(F+\Id)$
and $TF^{\ast}$, respectively.

This result corroborates the view of \emph{unobservables as a computational
effect}, along the line of Moggi's theory, and plays a fundamental r\^ole
in our treatment of systems with unobservable moves: it allows us
\emph{not} to specify the invisible moves explicitly and let a
monadic structure of the behavioural functor handle them
internally.  Instead of considering $T(F+\Id)$-coalgebras we
consider $T'$-coalgebras for a monad $T'$ on an arbitrary category
$\cat{C}$.

\paragraph{Liftings to and monads on $\kl(T)$}
Let $(T,\mu,\eta)$ be a monad on $\cat{C}$.
There is an inclusion functor $(-)^\sharp\colon \cat{C} \to \kl(T)$
mapping any $X$ to itself and any $f$ to 
$\eta_Y\circ f$.
This functor is the left adjoint to the forgetful functor 
$U_T \colon \kl(T) \to \cat{C}$, mapping any $X$ to
$TX$ and any $f$ to its \emph{Kleisli extension} $\mu_Y \circ Tf$.

We say that a functor $F\colon \cat{C}\to\cat{C}$ \emph{lifts to} a functor $\overline{F}\colon \kl(T)\to\kl(T)$ provided that the following diagram commutes:
\[
	\begin{tikzpicture}[
			auto, scale=1.4,font=\small,
			baseline=(current bounding box.center)]	
		\matrix (m) [matrix of math nodes, row sep=4ex, column sep=5ex]{ 
			\kl(T) & \kl(T) \\
			\cat{C}& \cat{C}\\
		};
		\draw[->] (m-1-1) to node {\(\overline F\)} (m-1-2);
		\draw[->] (m-2-1) to node[swap] {\(F\)} (m-2-2);
		\draw[->] (m-2-1) to node {\((-)^\sharp\)} (m-1-1);
		\draw[->] (m-2-2) to node[swap] {\((-)^\sharp\)} (m-1-2);
	\end{tikzpicture}	
\]
Given a functor $F\colon \cat{C}\to\cat{C}$ there is a one-to-one correspondence between its liftings $\overline{F}\colon \kl(T)\to\kl(T)$ and \emph{distributive laws}  $\lambda\colon FT\To TF$ between the functor $F$ and the monad $T$ (see e.g.~\cite{mulry:mfps1993} for a detailed definition and properties). Given a distributive law $\lambda\colon FT\To TF$ we define $\overline{F}\colon \kl(T)\to\kl(T)$ on any object $X$ and any morphism
$f \colon X \to Y$ in $\kl(T)$ as:
\[
	\overline{F}X \defeq FX 
	\qquad
	\overline{F}f \defeq \lambda_Y \circ Ff\text{.}
\]
Conversely, a lifting $\overline{F}\colon \kl(T)\to \kl(T)$ gives rise to $\lambda\colon FT\To TF$ defined on each component $\lambda_X$ as $\overline{F}(id_{TX})$. 

In \cite{hasuo07:trace} Hasuo et al.~showed that \emph{polynomial
functors} (there called ``shapely'') 
admit canonical distributive laws (hence liftings) if the
monad is commutative (w.r.t.~cartesian products) and $\cat{C}$ has
countable coproducts (hence, so does $\kl (T)$ as coproducts are lifted from $\cat{C}$). A functor is said to be \emph{polynomial} whenever it is described by the
grammar:
\[
	F\Coloneqq \Id \mid A \mid F_1 \times F_2 \mid \textstyle\coprod_{i \in I} F_i
\]
where $A \in \cat{C}$ and $I \subseteq \mathbb N$.  The canonical
distributive law is defined by structural recursion.  Note that a
strong monad (not necessarily commutative)
suffices to deal with the classic case of ``labels with silent actions'',
i.e.~$A \times \Id + \Id$.
\begin{example}\label{example:monads_on_Kl_P}
	Take $\cat{C}=\Set$ and $T = \mathcal P$. 
	Since the powerset monad is strong and commutative, 
	the functor $A \times \Id\colon \Set \to \Set$ lifts to
	$\overline{A\times \Id}\colon \mathcal{K}l(\mathcal{P})\to \mathcal{K}l(\mathcal{P})$ acting as $A \times \Id$ on objects
	and as $\mathsf{dstr}(\eta_A \times \Id)$ on morphisms. In particular,
	for any object $X\in \mathcal{K}l(\mathcal{P})$ and any morphism $f\colon X\to Y$ in $\mathcal{K}l(\mathcal{P})$ we have:
	\[
		(\overline{A\times \Id}) X = A \times X 
		\quad\text{ and }\quad
		(\overline{A\times \Id})f(a,x) = \{(\sigma,y)\mid y\in f(x)\}.
	\]
\end{example}

If the lifting of a functor presents a monad structure in $\kl(T)$, say
$(\overline S,\nu,\theta)$, we have the following two adjoint
situations:
\begin{equation}
\label{diag:kl-t-s-adj}
	\begin{tikzpicture}[font=\small,
		baseline=(current bounding box.center)]
		\node (n0) {\(\cat{C}\)};
		\node[right=6ex of n0] (n1) {\(\kl(T)\)};
		\node[right=6ex of n1] (n2) {\(\kl(\overline S)\)};
		\draw[right hook->,bend left] (n0.north east) to node [above] {\((-)^\sharp\)} (n1.north west);
		\draw[->,bend left] (n1.south west) to node [below] {\(U_T\)} (n0.south east);
		\node[rotate=-90] at ($(n0.east)!.5!(n1.west)$) {\(\dashv\)};
		\draw[right hook->,bend left] (n1.north east) to node [above] {\((-)^\sharp\)} (n2.north west);
		\draw[->,bend left] (n2.south west) to node [below] {\(U_{\overline S}\)} (n1.south east);
		\node[rotate=-90] at ($(n1.east)!.5!(n2.west)$) {\(\dashv\)};
	\end{tikzpicture}	
\end{equation}
The adjoint situation defined by their composition endows $TS\colon \cat{C}\to \cat{C}$
with a monad structure whose multiplication and unit are defined as:
\begin{equation*}
	\mu_{S}\circ T\mu_{S} \circ T^2 \nu \circ T\lambda_{S} \qquad \text{ and } \qquad \theta
	\text{.}
\end{equation*}
For any $f\colon X\to Y$ and $g\colon Y\to Z$ in $\kl(TS) = \kl(\overline{S})$ their
composite $g \bullet f$ is given as:
\begin{equation}
	\label{diag:kl-s-composition}
	\begin{tikzpicture}[
			auto, xscale=1.8,font=\small,
			baseline=(current bounding box.center)]	
		\node (n0) at (.2,0) {\(X\)};	
		\node (n1) at (1,0) {\(TSY\)};	
		\node (n2) at (2,0) {\((TS)^2Z\)};	
		\node (n3) at (3,0) {\(T^2S^2Z\)};	
		\node (n4) at (4,0) {\(T^3SZ\)};	
		\node (n5) at (5,0) {\(T^2SZ\)};	
		\node (n6) at (6,0) {\(TSZ\)};
		
		\draw[->,rounded corners] (n0) -- +(0,.7) -| node[pos=.25] {\(g \bullet f\)} (n6);
		\draw[->] (n0) to node[swap] {\(f\)} (n1);
		\draw[->] (n1) to node[swap] {\(TSg\)} (n2);
		\draw[->] (n2) to node[swap] {\(T\lambda_{SZ}\)} (n3);
		\draw[->] (n3) to node[swap] {\(T^2\nu_Z\)} (n4);
		\draw[->] (n4) to node[swap] {\(T\mu_{SZ}\)} (n5);
		\draw[->] (n5) to node[swap] {\(\mu_{SZ}\)} (n6);
	\end{tikzpicture}	
\end{equation}

The adjoint situations in \eqref{diag:kl-t-s-adj} allow us to compose 
several computational aspects and, when each layer 
is well-behaved w.r.t.~the underlying enrichment, to lift enrichment too. 
Here, being well-behaved means that
the functor of a monad $(\overline{S},\nu,\theta)$ on the \cpo-enriched
$\kl(T)$ is \emph{locally continuous}, i.e.~it preserves suprema
of ascending $\omega$-chains:
\[\textstyle
	\overline S\big(\bigvee_{i < \omega} f_i\big) = \bigvee_{i <\omega}\left(\overline S f_i\right)\text{.}
\] 
Note that $S$ may not be a monad, we only assume that its lifting is.
Remarkably, polynomial\footnote{%
	Analytic functors are a class of endofunctors over \Set 
	containing polynomial ones and have locally continuous liftings
	\cite{milius:calco2009anfun}.
} functors have locally continuous liftings \cite{hasuo07:trace}.
\begin{theorem}
	\label{thm:t_kleisli_s_parameterised_saturation}
Assume that $\kl(T)$ is \cpoj-enriched. If $\overline{S}$ is locally continuous then $\kl(\overline S)=\kl(TS)$ is \cpoj-enriched. If $\kl(T)$ is \eqref{law:LD} then so is $\kl(\overline{S})$.
\end{theorem}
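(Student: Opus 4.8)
The plan is to transport the \cpoj-enrichment of $\kl(T)$ along the Kleisli construction for the monad $(\overline{S},\nu,\theta)$, using the fact that $\kl(\overline{S})$ is by definition the Kleisli category of that monad \emph{on the already order enriched} $\kl(T)$; this is cleaner than unfolding the Kleisli composition over $\cat{C}$ as in \eqref{diag:kl-s-composition}, which would give the same argument in a more cumbersome form. The first thing to record is that
\[
	\kl(\overline{S})(X,Y) \;=\; \kl(T)(X,\overline{S}Y) \;=\; \kl(T)(X,SY)
\]
as \emph{posets}: the order on the hom-sets of $\kl(\overline{S})$, hence also their $\omega$-suprema and binary joins, are literally those of the corresponding hom-sets of $\kl(T)$. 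Since $\kl(T)$ is \cpoj-enriched by hypothesis, each $\kl(\overline{S})(X,Y)$ is therefore an $\omega$-cpo with binary joins, and the enriched unit and associativity laws hold automatically because $\kl(\overline{S})$ is already a category. So the first assertion reduces to showing that composition in $\kl(\overline{S})$, viewed as a map out of the product of its hom-sets, is a \cpo-morphism, and the second to a short calculation with \eqref{law:LD}.

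For composition, recall that the composite of $f\colon X\to Y$ and $g\colon Y\to Z$ in $\kl(\overline{S})$ --- i.e.\ of $\kl(T)$-morphisms $f\colon X\to SY$ and $g\colon Y\to SZ$ --- is $\nu_Z \bullet \overline{S}(g) \bullet f$, where $\bullet$ denotes composition in $\kl(T)$. Viewed as a map $\kl(\overline{S})(Y,Z)\times\kl(\overline{S})(X,Y)\to\kl(\overline{S})(X,Z)$ it factors as: first apply the hom-action $g\mapsto\overline{S}(g)$ in the first coordinate; then compose in $\kl(T)$; then post-compose with the fixed $\kl(T)$-morphism $\nu_Z$. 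The first map is a \cpo-morphism exactly by the hypothesis that $\overline{S}$ is locally continuous (which in particular makes it monotone); the second is a \cpo-morphism because $\kl(T)$ is \cpo-enriched; and the third is a special case of the second, since composition with a fixed morphism is continuous whenever composition is. A composite of \cpo-morphisms is again one, so composition in $\kl(\overline{S})$ is a \cpo-morphism; together with the hom-sets being $\omega$-cpos with binary joins, this gives that $\kl(\overline{S})=\kl(TS)$ is \cpoj-enriched.

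For the distributivity claim, assume $\kl(T)$ satisfies \eqref{law:LD}. Fix $f,g\colon X\to Y$ and $k\colon Y\to Z$ in $\kl(\overline{S})$; since $f\vee g$ in $\kl(\overline{S})(X,Y)$ coincides with $f\vee g$ in $\kl(T)(X,SY)$, applying \eqref{law:LD} in $\kl(T)$ twice yields
\begin{align*}
	\nu_Z \bullet \overline{S}(k) \bullet (f\vee g)
	&= \nu_Z \bullet \bigl(\overline{S}(k)\bullet f \vee \overline{S}(k)\bullet g\bigr)\\
	&= \bigl(\nu_Z \bullet \overline{S}(k)\bullet f\bigr) \vee \bigl(\nu_Z \bullet \overline{S}(k)\bullet g\bigr)\text{.}
\end{align*}
The left-hand side is the $\kl(\overline{S})$-composite of $k$ after $f\vee g$, and the right-hand side is the join of the $\kl(\overline{S})$-composites of $k$ after $f$ and of $k$ after $g$; hence $\kl(\overline{S})$ satisfies \eqref{law:LD} too.

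I do not expect a genuine obstacle. Once $\kl(\overline{S})$ is seen as the Kleisli category of a monad on the enriched $\kl(T)$, every step is a routine transport of the \cpoj-structure through the building blocks of Kleisli composition. The only place where the hypothesis is truly needed is the first of the three factors above: local continuity of $\overline{S}$ supplies both the monotonicity and the $\omega$-continuity required there, and without it $\kl(\overline{S})$ would not even be order enriched by the inherited order.
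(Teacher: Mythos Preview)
Your proof is correct and follows essentially the same approach as the paper: identify $\kl(\overline S)(X,Y)$ with $\kl(T)(X,SY)$ to inherit the \cpoj-structure on hom-sets, then verify that Kleisli composition $\nu_Z\circ\overline{S}(g)\circ f$ preserves $\omega$-suprema (and, under \eqref{law:LD}, binary joins) using local continuity of $\overline{S}$ together with the enrichment of $\kl(T)$. The paper carries out the same verification by a direct computation on chains rather than via your factorisation into three continuous maps, but the underlying argument is identical.
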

\begin{proof} 
	The partial order on $\kl(\overline S)(X,Y)$ is imposed by the one on $\kl(T)(X,SY)$. Hence, it is clear that the order on hom-sets in $\kl(\overline{S})$  admits binary joins as the order on hom-sets in $\kl(T)$. Now take an ascending chain $f_0\leq f_2 \leq \ldots$ in $\kl(\overline{S})(X,Y))$ and morphisms $g,h$ in $\kl(\overline{S})$ with suitable domain and codomain. Let $\bullet$ and $\circ$ denote the compositions in $\kl(\overline{S})$ and $\kl(T)$, respectively. We have:
	\begin{gather*}
		\textstyle
		g\bullet \bigvee_{n<\omega} f_n 
		= \nu\circ \overline{S} g\circ \bigvee_{n<\omega} f_n 
		= \bigvee_{n<\omega} \nu\circ \overline{S}g \circ f_n 
		= \bigvee_{n<\omega} g\bullet f_n\text{,}
		\\ \textstyle
		\bigvee_{n<\omega} f_n \bullet h 
		= \nu\circ \overline{S} \bigvee_{n<\omega} f_n \circ h 
		= \bigvee_{n<\omega} \nu\circ \overline{S}f_n \circ h 
		= \bigvee_{n<\omega} f_n \bullet h\text{.}
	\end{gather*} 
	This proves that $\kl(\overline{S})$ is \cpoj-enriched. 
	Let $\kl(T)$ be \eqref{law:LD}. Then:
	\begin{gather*}
		g\bullet (f_1\vee f_2) 
		= \nu\circ \overline{S} g\circ (f_1\vee f_2) 
		=  \nu\circ \overline{S}g \circ f_1 \vee \nu\circ \overline{S}g \circ f_2
		= g\bullet f_1\vee g\bullet f_2
	\end{gather*} 
	completing the proof.
\end{proof} 

For the aims of this work $S$ will be a functor modelling internal
moves i.e.~with shape $F + \Id$. The following result states
that these functors extend to monads on $\kl(T)$ whenever the category
has zero morphisms\footnote{%
	We say that a category has \emph{zero morphisms} if for
	any two objects $X,Y$ there is a morphism $0_{X,Y}$ which is
	an annihilator w.r.t.~composition, i.e.: $f\circ 0 = 0=0\circ g$
	for any morphisms $f,g$ with suitable domain and codomain.}.
\begin{theorem}[\hspace{.01em}\cite{brengos2013:corr}]
	\label{thm:monadic-fg-zero-morph}
	Let $\overline F$ be a lifting of $F$ to $\kl(T)$. 
	If $\cat{C}$ has binary coproducts and $\kl(T)$ has zero morphisms then
	$\overline{F+\Id}=\overline{F}+\Id$ extends to a monad on $\kl(T)$ whose unit 
	is $\mathsf{inr} \colon \Id \To \overline{F+\Id}$ 
	and whose multiplication is
	\[
	\left[\mathsf{inl},id_{\overline{F+\Id}}\right]\circ \left(\overline{F}([0+id])+id_{\overline{F+\Id}}\right)\colon  (\overline{F+\Id})^2 \To \overline{F+\Id}\text{.}\]\par
\end{theorem}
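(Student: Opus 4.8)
The plan is to exhibit the claimed unit and multiplication and verify the three monad laws directly, exploiting the coproduct structure and the annihilation property of the zero morphisms. First I would observe that $\overline{F+\Id} = \overline F + \Id$ is genuinely a functor on $\kl(T)$: since $\cat C$ has binary coproducts, so does $\kl(T)$ (coproducts lift along $(-)^\sharp$), and $\overline F$ is a functor by hypothesis, so $\overline F + \Id$ makes sense and $\mathsf{inl},\mathsf{inr}$ are the coproduct injections into $\overline{F+\Id}X = \overline F X + X$. The unit $\theta \defeq \mathsf{inr}\colon X \to \overline F X + X$ is then just the right injection. For the multiplication $\nu \defeq [\mathsf{inl}, id]\circ(\overline F([0+id]) + id)\colon (\overline F+\Id)^2 \To \overline F+\Id$, I would first unfold $(\overline F+\Id)^2 X = \overline F(\overline F X + X) + (\overline F X + X)$ and read off that $\nu_X$ acts as $\overline F([0+id]_X)$ followed by $\mathsf{inl}$ on the first summand, and as the identity on the second summand $\overline F X + X$; here $[0+id]\colon \overline F X + X \to X$ is the copairing of the zero morphism $0\colon \overline F X \to X$ with $id_X$.

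The bulk of the argument is then three diagram chases, each reducing to a case analysis on the summands of an iterated coproduct. For the \emph{left unit law} $\nu \circ \theta\overline{(F+\Id)} = id$: precomposing with $\theta$ at the outer level lands us in the right summand of $(\overline F+\Id)^2$, on which $\nu$ is the identity, so the law is immediate. For the \emph{right unit law} $\nu \circ \overline{(F+\Id)}\theta = id$: now we must compute $\nu_X \circ (\overline F\theta_X + \theta_X)$. On the summand $X$ this is $id \circ \theta_X$ landing in the right component, fine; on the summand $\overline F X$ we get $\mathsf{inl}\circ \overline F([0+id]_X)\circ \overline F(\theta_X) = \mathsf{inl}\circ \overline F([0+id]_X \circ \mathsf{inr}) = \mathsf{inl}\circ \overline F(id_X) = \mathsf{inl}$, using functoriality of $\overline F$ and the coproduct equation $[0+id]\circ\mathsf{inr} = id$ — so again we recover the identity. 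For \emph{associativity} $\nu \circ \nu\overline{(F+\Id)} = \nu \circ \overline{(F+\Id)}\nu$ one unfolds $(\overline F+\Id)^3 X$ into its four summands $\overline F(\overline F(\overline F X + X) + (\overline F X + X))$, $\overline F(\overline F X + X)$, $\overline F X$, $X$ and checks both composites summand by summand. On the $\overline F X$ and $X$ summands both sides are easily the respective injections/identities. The essential summand is the leftmost one, where we must check
\[
  \mathsf{inl}\circ \overline F([0+id])\circ \overline F([0+id] + id) = \mathsf{inl}\circ \overline F([0+id])\circ \overline F(\overline F([0+id]) + id);
\]
by functoriality of $\overline F$ this reduces to the identity in $\cat C$ (well, in $\kl(T)$)
\[
  [0+id]\circ([0+id] + id) = [0+id]\circ(\overline F([0+id]) + id)\colon \overline F(\overline F X + X) + (\overline F X + X) \to X,
\]
which splits once more: on the right summand $\overline F X + X$ both sides are $[0+id]$; on the left summand $\overline F(\overline F X + X)$ the left side is $[0+id]\circ\mathsf{inl}\circ\ldots = 0$, while the right side is $[0+id]\circ\mathsf{inl}\circ\overline F([0+id]) = 0\circ\overline F([0+id]) = 0$ by the annihilation property of zero morphisms. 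So both sides agree, and associativity follows.

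The main obstacle, and the only place any real hypothesis is used, is precisely this last step: without zero morphisms there is no canonical way to "collapse" a nested $\overline F$-layer of silent moves, and associativity genuinely fails. Everything else is bookkeeping with coproduct injections and functoriality of $\overline F$; the naturality of $\theta$ and $\nu$ is automatic since they are built from the natural transformations $\mathsf{inl},\mathsf{inr}$, the natural $0$, and the functorial action of $\overline F$. I would therefore organise the write-up around the identity $[0+id]\circ(\overline F([0+id]) + id) = [0+id]\circ([0+id]+id)$ in $\kl(T)$ as the key lemma, deduce associativity from it by applying $\overline F$ and postcomposing with $\mathsf{inl}$, and dispatch the two unit laws in a line each. (This is the content of \cite{brengos2013:corr}, which we may simply cite for the routine verifications and only sketch the zero-morphism step here.)
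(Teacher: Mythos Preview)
The paper does not give its own proof of this theorem: it is simply cited from \cite{brengos2013:corr} and stated without argument. So there is nothing in the paper to compare your proposal against. Your strategy---verify the three monad laws by case analysis on coproduct summands, with the annihilation property of zero morphisms doing the real work in associativity---is the standard one and is essentially what the cited source does.

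One notational wobble worth fixing before you write this up: in the associativity check on the leftmost summand you write $\overline F([0+id]+id)$ for the action of $\nu_{(\overline F+\Id)X}$, but the map inside $\overline F$ should be $[0,id]_{(\overline F+\Id)X}\colon \overline F((\overline F+\Id)X)+(\overline F+\Id)X\to(\overline F+\Id)X$, i.e.\ the copairing at the larger object, not a coproduct of maps. The identity you actually need is
\[
  [0,id]_X\circ [0,id]_{(\overline F+\Id)X} \;=\; [0,id]_X\circ\nu_X
\]
as maps $(\overline F+\Id)^2X\to X$, and your component-wise verification establishes exactly this (both sides equal $[0,[0,id]_X]$, using $0\circ f=0$ on the left summand). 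This is only a labelling issue; the substance of your argument is correct, and your closing remark that one may simply cite \cite{brengos2013:corr} for the routine checks is exactly what the paper does.
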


\begin{example}\label{example:two_monads_on_Kl_P}
Let $T=\mathcal{P}$ and let $A$ be an arbitrary set. By Example~\ref{example:monads_on_Kl_P} the endofunctor $A_\tau \times \Id$ over \Set lifts to $\mathcal{K}l(\mathcal{P})$.  By Theorem~\ref{thm:monadic-fg-zero-morph} its lifting $\overline{A_\tau\times \Id}\cong \overline{A\times \Id}+\Id$ can be equipped with a monadic structure $(\overline{A_\tau\times \Id},\nu,\theta)$ whose multiplication and unit are given on their components by:
\[
	\theta_X(x) = \{(\tau,x)\} 
	\quad\text{ and }\quad 
	\nu_X(a,b,x) = 
	\begin{cases}
		\{(a,x)\}&\text{ if } b=\tau \\
		\{(b,x)\}&\text{ if } a=\tau \\
		\emptyset&\text{ otherwise.}
	\end{cases}
\]
Now, the functor $\mathcal{P}(A_\tau \times \Id)$
carries a monadic structure which is a consequence of composing two adjunctions $\Set\rightleftarrows \mathcal{K}l(\mathcal{P})\rightleftarrows \mathcal{K}l(\overline{A_\tau\times \Id})$ as described earlier in this subsection. The composition in $\kl(\mathcal{P}(A_\tau\times \Id))$ is given as follows. For $f\colon X\to \mathcal{P}(A_\tau\times Y)$ and $g\colon Y\to \mathcal{P}(A_\tau \times Z)$ we have:
\begin{equation}
	\label{eq:lts-kl-comp}
	g\bullet f(x) = \{(a,z) \mid x\xrightarrow{a}_f y \xrightarrow{\tau}_g z 
	\text{ or } x\xrightarrow{\tau}_f y \xrightarrow{a}_g z 
	\text{ for some }y\in Y, a \in A_\tau \}
\end{equation}
See \cite{brengos2013:corr} for details.
\end{example}

We point out that the monad laws for unit and multiplication in the above example generalise the derivation
rules
\begin{align}
	\frac{x \xrightarrow{\tau} y \qquad y \xrightarrow{\tau} z}{x \xrightarrow{\tau} z}
	\qquad
	\frac{x \xrightarrow{a} y \qquad y \xrightarrow{\tau} z}{x \xrightarrow{a} z}
	\qquad	
	\frac{x \xrightarrow{\tau} y \qquad y \xrightarrow{a} z}{x \xrightarrow{a} z}
	\qquad \label{LTS:closure_rules}
\end{align}
describing transitivity of unobservable moves and 
left and right ``absorption'' of unobservable moves by observable ones.
The use of zero morphisms to ``kill'' consecutive observables corresponds
to the absence of rules for this case.

Remarkably, small tweaks in the monad defined by
Theorem~\ref{thm:monadic-fg-zero-morph} allow us to deal with various
interactions between observable and unobservable moves; e.g., we can
cover the groupoidal nature of \emph{reversible computations} by
considering a multiplication sketched by the following 
rules:
\[
	\frac{x \xrightarrow{a} y \qquad y \xrightarrow{a^{-1}} z}{x \xrightarrow{\tau} z}
	\qquad	
	\frac{x \xrightarrow{a^{-1}} y \qquad y \xrightarrow{a} z}{x \xrightarrow{\tau} z}
\]

Hence, from now on the term ``coalgebra with unobservable moves'' becomes synonymous to ``coalgebra whose endofunctor carries a monadic structure''.

\section{A 2-categorical perspective on weak behavioural equivalence}
\label{sec:2_cat_perspective} 

In this section we give a 2-categorical setting for the notion of
weak behavioural equivalence on a coalgebra with unobservable
moves. Our main point of interest are $T$-coalgebras for a
monad $(T,\mu,\eta)$ on a category $\cat{C}$. Any such coalgebra $\alpha\colon X\to TX$
is an endomorphism in $\mathcal{K}l(T)$.  Hence, in order to be as
general as possible, we will now build a theory of weak behavioural
equivalences on endomorphisms of a category \cat{K} satisfying some
extra assumptions, bearing in mind our prototypical example of (a full
subcategory of) $\mathcal{K}l(T)$.

\subsection{Categories of lax preasheaves and oplax transformations}\label{subsection:the_category_K_l}

\begin{definition}[Lax \cpoj-presheaves]\label{def:laxpresh}
  Let \cat{K} be a small \cpoj-enriched category.  A lax functor of
  type $\cat{K}\to \cpoj$ will be called \emph{lax \cpoj-presheaf};
  often the ``\cpoj-'' will be omitted.  Lax presheaves over
  \cat{K} and oplax natural transformations between them form the
  category $\widetilde{\cat{K}}\defeq \cat{Lax}(\cat{K},\cpoj)_{opl}$.
\end{definition}

\begin{remark}
Our aim is to apply this definition for $\cat{K}=\mathcal{K}l(T)$.
Actually, for all monads presented in this paper $\mathcal{K}l(T)$
is \cpoj-enriched, but in general it is not a \emph{small} category. 
There are two potential solutions to this problem. 

The simplest solution is to take \cat{K} as a suitable full subcategory of
$\mathcal{K}l(T)$ that meets our requirements. For example, if we are
interested in $T$-coalgebras whose base category is $\cat{Set}$ and
which have a carrier of cardinality below $\kappa$ then we can take 
\cat{K} to have exactly one set of cardinality $\lambda$ for any $\lambda<\kappa$. 
In particular, if $\kappa = \omega$ then \cat{K} is the dual category to the
Lawvere theory for $T$ \cite{hyland:power:2007}. This
category is a small full subcategory of $\mathcal{K}l(T)$.

The other possibility is to drop the smallness condition from
Definition~\ref{def:laxpresh},  and rewrite the whole
theory below so that the (not necessarily locally small) category
$\widetilde{\cat{K}}$ would fit it. Indeed, if \cat{K} is not small
then there is no guarantee that the hom-objects of
$\widetilde{\cat{K}}$ are sets, as they can form proper classes. In
other words, $\widetilde{\cat{K}}$ is not necessarily \pos- or
\cpo-enriched, as will be expected in the following section. 
Nonetheless, its hom-objects can be endowed with a partial order which
would turn them into partially ordered \emph{classes}. All of the
theorems presented in this paper would hold in this setting and the
definition of weak behavioural equivalence would be the same. However,
we think that this generality is not justified by the complexity of
dealing with classes instead of sets; hence, for the sake of
simplicity we decide to choose the first solution.
\end{remark}

Given two lax presheaves $F,G\in \widetilde{\cat{K}}$ and two oplax natural transformations  $\phi,\psi$ between them we define:
$$
\phi\leq \psi \iff \phi_Y(y) \leq \psi_Y(y) \text{ for any }y\in FY \text{ and any } Y\in \cat{K}.
$$
The partially ordered set $\widetilde{\cat{K}}(F,G)$ admits binary joins as $\phi \vee \psi$ is the family of morphisms $\{(\phi\vee \psi)_Y\colon FY \to GY\}_{Y\in \cat{K}}$ whose $Y$-component is given by:
$$
 (\phi\vee \psi)_Y(y) \defeq \phi_Y(y)\vee \psi_Y(y).
$$
It is straightforward to verify that $\phi\vee \psi$ is an oplax transformation from $F$ to $G$. In an analogous way we define $\bigvee_k \phi_k$ for an ascending $\omega$-chain $\phi_0\leq \phi_1\leq \ldots$ of oplax transformations. It is easy to see that $\bigvee_k \phi_k$ is also an oplax transformation between $F$ and $G$. Moreover, this type of supremum is preserved by the transformation composition. 

\begin{theorem}\label{theorem:K_tilde_op_LD}
The category $\widetilde{\cat{K}}$ is \cpoj-enriched and satisfies
\eqref{law:RD}. As a consequence, $\widetilde{\cat{K}}\op $ is also
\cpoj-enriched and satisfies \eqref{law:LD}.
\end{theorem}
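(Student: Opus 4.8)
The plan is to verify the four enrichment axioms for $\widetilde{\cat{K}}$ in turn, then transfer the result to $\widetilde{\cat{K}}\op$ using the observation (already noted in the excerpt, just before Definition~\ref{def:laxpresh}'s subsection) that dualising swaps left and right distributivity while leaving the order on hom-sets intact. Concretely, I would argue that (i) each hom-poset $\widetilde{\cat{K}}(F,G)$ admits binary joins, (ii) each hom-poset admits suprema of ascending $\omega$-chains, (iii) composition of oplax transformations preserves both, and (iv) composition is right distributive over binary joins. The body of the excerpt preceding the statement already sketches (i), (ii) and the preservation part of (iii) (``this type of supremum is preserved by the transformation composition''), so the real work is to spell out why the pointwise-defined $\phi\vee\psi$ and $\bigvee_k\phi_k$ are genuinely \emph{oplax} natural transformations, and then to prove \eqref{law:RD}.

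First I would fix notation: for a lax presheaf $F\colon\cat{K}\to\cpoj$ and a morphism $f\colon X\to Y$ in $\cat K$, oplax naturality of $\phi\colon F\To G$ means $G(f)\circ\phi_X \leq \phi_Y\circ F(f)$ pointwise (the order being reversed relative to the lax-transformation picture drawn in the preliminaries). Given two such $\phi,\psi$, to see that $\phi\vee\psi$ is oplax I compute, for any $x\in FX$,
\[
G(f)\big((\phi\vee\psi)_X(x)\big) = G(f)\big(\phi_X(x)\vee\psi_X(x)\big),
\]
and here I use that $G(f)$ is a morphism in $\cpoj$, hence monotone, so this is $\geq G(f)(\phi_X(x))\vee G(f)(\psi_X(x))$ — wait, monotonicity alone only gives $G(f)(a\vee b)\geq G(f)(a)\vee G(f)(b)$, which is the \emph{wrong} direction. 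The correct move is: $\phi_X(x)\leq\phi_X(x)\vee\psi_X(x)$, so by monotonicity $G(f)(\phi_X(x))\leq G(f)((\phi\vee\psi)_X(x))$, and by oplax naturality of $\phi$ the left side dominates $\phi_Y(F(f)(x))$ is false too — rather $G(f)(\phi_X(x))\geq\phi_Y(F(f)(x))$. So I instead bound from above: $G(f)((\phi\vee\psi)_X(x)) = G(f)(\phi_X(x))\vee G(f)(\psi_X(x))$ would need left-distributivity of $\cpoj$, which we do not have. The clean argument is the other way around: I want $G(f)\circ(\phi\vee\psi)_X \leq (\phi\vee\psi)_Y\circ F(f)$; since $F(f)$ is monotone this is implied by showing $G(f)(\phi_X(x))\vee G(f)(\psi_X(x)) \leq \phi_Y(F(f)(x))\vee\psi_Y(F(f)(x))$, and \emph{this} follows termwise from oplax naturality of $\phi$ and of $\psi$ separately, since $a\leq a'$ and $b\leq b'$ give $a\vee b\leq a'\vee b'$. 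For the chain case the same termwise reasoning applies, using in addition that $G(f)$, being a $\cpoj$-morphism, preserves ascending suprema, so $G(f)(\bigvee_k\phi_k{}_X(x)) = \bigvee_k G(f)(\phi_k{}_X(x)) \leq \bigvee_k \phi_k{}_Y(F(f)(x)) = (\bigvee_k\phi_k)_Y(F(f)(x))$.

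Next, for preservation under composition: if $\chi\colon F\To G$ and $\phi,\psi\colon G\To H$, then $(\chi$ precomposed with$)$ — more precisely the vertical composite $\phi\kcirc\chi$ has $Y$-component $\phi_Y\circ\chi_Y$, so I must check $(\phi\vee\psi)\kcirc\chi = (\phi\kcirc\chi)\vee(\psi\kcirc\chi)$, i.e.\ $(\phi_Y\vee\psi_Y)\circ\chi_Y = \phi_Y\circ\chi_Y \vee \psi_Y\circ\chi_Y$; but this is exactly instance \eqref{law:RD} \emph{inside} $\cpoj$ applied componentwise, and $\cpoj$ is right distributive because composition of monotone maps is computed pointwise and joins in the target poset are pointwise — precomposition with a fixed map commutes with $\vee$. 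Dually, $\chi\kcirc(\phi\vee\psi)$ has components $\chi_Y\circ(\phi_Y\vee\psi_Y)$, which need \emph{not} equal $\chi_Y\circ\phi_Y\vee\chi_Y\circ\psi_Y$ unless $\chi_Y$ preserves joins — and it need not, since $\cpoj$-morphisms are not required to. This asymmetry is precisely why $\widetilde{\cat K}$ satisfies \eqref{law:RD} but not in general \eqref{law:LD}. The $\omega$-chain version of preservation is identical, using that in $\cpoj$ precomposition with a fixed morphism preserves ascending suprema (pointwise again). This establishes that $\widetilde{\cat K}$ is \cpoj-enriched and right distributive.

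Finally, the passage to the dual: $\widetilde{\cat K}\op$ has the same hom-posets (the excerpt says ``with the order on hom-sets left intact''), hence is again \cpoj-enriched, and by the remark ``if \cat{K} is left (resp.\ right) distributive then $\cat K\op$ is right (resp.\ left) distributive'' applied to $\cat K = \widetilde{\cat K}$, right distributivity of $\widetilde{\cat K}$ yields left distributivity \eqref{law:LD} of $\widetilde{\cat K}\op$, as claimed. The main obstacle is not depth but bookkeeping: one has to be careful about the direction of the oplax inequality $G(f)\circ\phi_X\leq\phi_Y\circ F(f)$ and resist the temptation to use a (nonexistent) left-distributivity of $\cpoj$ when checking that pointwise joins of oplax transformations are oplax; the correct argument is always termwise monotonicity of the ambient $\cpoj$-morphisms together with monotonicity of $\vee$ in each argument, never distributing a $\cpoj$-morphism across a join.
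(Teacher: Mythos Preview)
Your overall approach matches the paper's: the proof is entirely a matter of unwinding the pointwise definitions, and the paper's own proof is a single sentence to that effect. However, you have the direction of the oplax naturality condition reversed, and this creates a genuine gap in your verification that $\phi\vee\psi$ is oplax.

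According to the paper's convention (spelled out explicitly for representable presheaves just after the theorem: $\psi_{Y'}(f\circ g)\leq f\circ\psi_Y(g)$), an oplax transformation $\phi\colon F\Rightarrow G$ satisfies
\[
\phi_Y\circ F(f)\ \leq\ G(f)\circ\phi_X
\]
for $f\colon X\to Y$, not the inequality $G(f)\circ\phi_X\leq\phi_Y\circ F(f)$ that you state. With your direction, showing $(\phi\vee\psi)$ is oplax reduces to
\[
G(f)\bigl(\phi_X(x)\vee\psi_X(x)\bigr)\ \leq\ \phi_Y(F(f)(x))\vee\psi_Y(F(f)(x)),
\]
and all you can extract from oplaxness of $\phi,\psi$ is $G(f)(\phi_X(x))\vee G(f)(\psi_X(x))\leq\text{RHS}$. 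Bridging the gap would require $G(f)$ to preserve binary joins, which \cpoj-morphisms need not do. Your own sentence ``monotonicity alone only gives $G(f)(a\vee b)\geq G(f)(a)\vee G(f)(b)$, which is the wrong direction'' is exactly the obstruction, and it does not go away by rearranging the argument; the step you label ``the clean argument'' silently assumes the reverse inequality.

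With the correct direction the check is immediate: one needs
\[
\phi_Y(F(f)(x))\vee\psi_Y(F(f)(x))\ \leq\ G(f)\bigl(\phi_X(x)\vee\psi_X(x)\bigr),
\]
and since $\phi_Y(F(f)(x))\leq G(f)(\phi_X(x))\leq G(f)(\phi_X(x)\vee\psi_X(x))$ by oplaxness of $\phi$ and monotonicity of $G(f)$, and likewise for $\psi$, the join on the left is bounded above as required. The $\omega$-chain case and your \eqref{law:RD} argument are fine once the direction is corrected (indeed, your $\omega$-chain computation happens to survive because you use continuity of $G(f)$ rather than just monotonicity, but it too should be written with the inequality flipped). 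The concluding remark about bookkeeping is apt; you simply fell into the trap you warned against.
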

\begin{proof}
Right distributivity of $\widetilde{\cat{K}}$ follows from the fact that the order on and the composition of oplax transformations is defined pointwise.
\end{proof}

For any object $X\in \mathsf{K}$ consider the functor $\widehat{X}\defeq\mathsf{K}(X,-)\colon \mathsf{K}\to \cpoj$ (called \emph{a representable presheaf}) which sends $Y$ to $\cat{K}(X,Y)$ and $f\colon Y\to Y'$ to the morphism $\cat{K}(X,f)\colon \cat{K}(X,Y)\to \cat{K}(X,Y')$
defined as $\cat{K}(X,f)(g) = f\circ g$. 
A family $ \psi = \{\psi_Y \colon \cat{K}(X,Y)\to \cat  K(X',Y)\}_{Y\in \cat{K}}$ is an oplax transformation between   $\widehat{X}$ and $\widehat{X'}$ if and only if for any $f\colon Y\to Y'$ and any $g\colon X\to Y$ we have: $$\psi_{Y'}(f\circ g)\leq f\circ \psi_Y(g).$$

\begin{example}
Let $f\colon X'\to X$ be a morphism in $\mathsf{K}$. Then the family $\widehat{f}$ of morphisms whose $Y$-component is given by $\mathsf{K}(X,Y)\to \mathsf{K}(X',Y); g\mapsto g\circ f$ is an oplax transformation between $\widehat{X}$ and $\widehat{X'}$.
\end{example}

Consider the functor $\widehat{(-)}\colon \cat{K} \to \widetilde{\cat{K}}\op $ which sends any object $X$ to $\widehat{X}$ and any morphism $i\colon X\to X'$ to the transformation $\widehat{i}$. Note that $\widehat{(-)}$ preserves the suprema of ascending $\omega$-chains (and hence, the order on hom-sets). The functor $\widehat{(-)}$ is a faithful functor from \cat{K} into a left distributive category $\widetilde{\cat{K}}\op $ akin to the Yoneda embedding of presheaf categories.

Let $X,Y$ be objects in $\cat K$. The functor $\widehat{(-)}\colon \cat{K}\to \widetilde{\cat{K}}\op $ restricted to hom-objects $\widehat{(-)}\colon \cat{K}(X,Y) \to  \widetilde{\cat{K}}\op (\widehat{X},\widehat{Y});g\mapsto \widehat{g}$ is a functor between the posets $\cat{K}(X,Y)$, $\widetilde{\cat{K}}\op (\widehat{X},\widehat{Y})$  viewed as categories. Moreover, the following result holds.

\begin{theorem}
	\label{thm:theta-left-adjoint}
The functor $\widehat{(-)}\colon \cat{K}(X,Y) \to  \widetilde{\cat{K}}\op (\widehat{X},\widehat{Y})$ admits a left adjoint $\Theta_{X,Y}$ defined for any oplax transformation $\phi\in \widetilde{\cat{K}}\op (\widehat{X},\widehat{Y}) = \widetilde{\cat{K}}(\widehat{Y},\widehat{X})$ by:
\[
\Theta_{X,Y}(\phi) = \phi_Y(id_Y)\text{.}
\]
\par
\end{theorem}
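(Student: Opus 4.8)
The statement asserts an adjunction between two posets regarded as (thin) categories, so it suffices to verify a single order-theoretic biconditional. Recalling that the order on $\widetilde{\cat{K}}\op(\widehat{X},\widehat{Y}) = \widetilde{\cat{K}}(\widehat{Y},\widehat{X})$ is the one of $\widetilde{\cat{K}}$ unchanged, my plan is to show that for every oplax transformation $\phi\colon \widehat{Y}\to\widehat{X}$ and every $g\in\cat{K}(X,Y)$,
\[
\Theta_{X,Y}(\phi)\le g \quad\Longleftrightarrow\quad \phi\le\widehat{g}.
\]
This is exactly the characterisation of an adjunction $\Theta_{X,Y}\dashv\widehat{(-)}$ between posets: monotonicity of $\Theta_{X,Y}$ then follows formally, and the unit and counit can be read off afterwards. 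First I would record that $\Theta_{X,Y}$ is well typed, since $\phi_Y$ has type $\cat{K}(Y,Y)\to\cat{K}(X,Y)$ and hence $\phi_Y(id_Y)\in\cat{K}(X,Y)$.

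The ``$\Leftarrow$'' direction is a specialisation. By the pointwise definition of the order on oplax transformations, $\phi\le\widehat{g}$ means $\phi_Z(h)\le\widehat{g}_Z(h) = h\circ g$ for every object $Z$ of $\cat{K}$ and every $h\in\cat{K}(Y,Z)$; instantiating $Z:=Y$ and $h:=id_Y$ gives $\Theta_{X,Y}(\phi) = \phi_Y(id_Y)\le id_Y\circ g = g$.

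For ``$\Rightarrow$'' I would combine oplax naturality of $\phi$ with monotonicity of composition in the order-enriched category $\cat{K}$. Assume $\phi_Y(id_Y)\le g$ and fix $Z$ and $h\colon Y\to Z$. Applying the oplax square for $\phi$ (in the form recalled in the excerpt, $\phi_{Z'}(f\circ g')\le f\circ\phi_{Z}(g')$, for the representables $\widehat{Y},\widehat{X}$) to $f:=h$ and $g':=id_Y$ yields $\phi_Z(h) = \phi_Z(h\circ id_Y)\le h\circ\phi_Y(id_Y)$. Left-whiskering the assumed inequality by $h$ gives $h\circ\phi_Y(id_Y)\le h\circ g = \widehat{g}_Z(h)$, and composing the two inequalities yields $\phi_Z(h)\le\widehat{g}_Z(h)$; since $Z$ and $h$ were arbitrary, $\phi\le\widehat{g}$.

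I do not expect a genuine obstacle: this is essentially a one-line Yoneda-style computation. The only thing needing care is variance bookkeeping --- checking that ``order left intact'' in $\widetilde{\cat{K}}\op$ really turns $\phi\le\widehat{g}$ into the componentwise inequality above, and that it is the \emph{oplax} (rather than lax) direction of $\phi$ that makes $\phi_Z(h)\le h\circ\phi_Y(id_Y)$ point the right way. As a consistency check I would observe that the counit at $g$ is the equality $\Theta_{X,Y}(\widehat{g}) = \widehat{g}_Y(id_Y) = g$, and that the unit at $\phi$, namely $\phi\le\widehat{\phi_Y(id_Y)}$, is precisely the ``$\Rightarrow$'' argument run with $g:=\phi_Y(id_Y)$.
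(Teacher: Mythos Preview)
Your proposal is correct and takes exactly the same approach as the paper: the paper's proof simply asserts that the biconditional $\phi_Y(id_Y)\le g \iff \phi\le\widehat{g}$ is ``straightforward to show'' and stops there, while you have spelled out precisely those straightforward details (specialisation for $\Leftarrow$, oplax naturality plus monotonicity of composition for $\Rightarrow$). The variance checks you flag are handled correctly.
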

\begin{proof}
It is straightforward to show that for any $g\in \cat{K}(X,Y)$ and $\phi \in \widetilde{\cat{K}}\op (\widehat{X},\widehat{Y})$ we have:
$\Theta_{X,Y}(\phi) =  \phi_Y(id_Y) \leq g \iff \phi \leq \widehat{g}$. This ends the proof.
\end{proof}
We will often drop the subscript and write $\Theta$ instead of $\Theta_{X,Y}$ whenever the objects $X,Y$ can be deduced from the context. Thus, for any $X\in \cat{K}$ the functor $\widehat{X}\colon \cat{K}\to \cpoj$ together with the family \[
	\left\{\Theta\dashv \widehat{(-)}\colon \cat{K}(X,Y)\to \widetilde{\cat{K}}\op (\widehat{X},\widehat{Y})\right\}_{Y\in \cat{K}}
\] is a \cat{K}-adjunction.

\subsection{Free monads in order enriched categories}
\label{sec:free-monads-kop}

\begin{definition}
We say that an order enriched category  \emph{admits free monads} if for any endomorphism $\alpha\colon X\to X$ there is $\alpha^{\ast}\colon X\to X$ which is a monad such that:
\begin{itemize}
\item $\alpha\leq \alpha^{\ast}$,
\item if $\alpha\leq \sigma$ and $\sigma\colon X\to X$ is a monad  then $\alpha^{\ast}\leq \sigma$.
\end{itemize}
We will often call $\alpha^{\ast}$ the free monad over $\alpha\colon X\to X$.
\end{definition} 
\begin{proposition}\label{prop:cpoj_admits_saturation}
Any \cpoj-enriched category admits free monads.
\end{proposition}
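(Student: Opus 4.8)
The plan is to construct $\alpha^{\ast}$ explicitly as the supremum of an ascending $\omega$-chain of approximants and then verify the required universal property. Since $\cat{K}$ is $\cpoj$-enriched, each hom-poset $\cat{K}(X,X)$ has finite joins (in particular a "unit candidate" $id_X$) and suprema of ascending $\omega$-chains, and composition is $\omega$-continuous in each variable (and monotone, and sub-distributive over binary joins). First I would define the monotone operator $\Phi\colon \cat{K}(X,X)\to \cat{K}(X,X)$ by $\Phi(\beta) \defeq id_X \vee \beta \vee (\beta\circ\beta)$; this is monotone because composition and binary join are monotone. Then I would set $\alpha_0 \defeq \alpha$ and $\alpha_{n+1}\defeq \Phi(\alpha_n)$, observe $\alpha_0 \leq \alpha_1$ (because $\alpha \leq id_X\vee\alpha\vee(\alpha\circ\alpha)$), hence by monotonicity of $\Phi$ the chain $\alpha_0\leq\alpha_1\leq\cdots$ is ascending, and define $\alpha^{\ast}\defeq\bigvee_{n<\omega}\alpha_n$.

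Next I would check that $\alpha^{\ast}$ is a monad, i.e.\ $id_X\leq\alpha^{\ast}$ and $\alpha^{\ast}\circ\alpha^{\ast}\leq\alpha^{\ast}$. The first is immediate since $id_X\leq\alpha_1\leq\alpha^{\ast}$. For the second, using $\omega$-continuity of composition in both arguments, $\alpha^{\ast}\circ\alpha^{\ast} = \bigl(\bigvee_m\alpha_m\bigr)\circ\bigl(\bigvee_n\alpha_n\bigr) = \bigvee_{m,n}(\alpha_m\circ\alpha_n)$, and since the chain is ascending this double supremum equals $\bigvee_n(\alpha_n\circ\alpha_n) \leq \bigvee_n \Phi(\alpha_n) = \bigvee_n\alpha_{n+1} = \alpha^{\ast}$. (One subtlety: continuity is stated for ascending $\omega$-chains, so I would first collapse the double chain to the diagonal $\bigvee_k(\alpha_k\circ\alpha_k)$ using cofinality of the diagonal in $\omega\times\omega$ and monotonicity, rather than invoking a two-variable continuity law directly.) Also $\alpha = \alpha_0 \leq\alpha^{\ast}$, giving the first bullet of the definition.

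For the universal property, suppose $\sigma\colon X\to X$ is a monad with $\alpha\leq\sigma$. I would show by induction that $\alpha_n\leq\sigma$ for all $n$: the base case is the hypothesis, and for the step, $\Phi(\alpha_n) = id_X\vee\alpha_n\vee(\alpha_n\circ\alpha_n) \leq id_X\vee\sigma\vee(\sigma\circ\sigma) \leq \sigma$ using the inductive hypothesis, monotonicity, and the monad inequalities $id_X\leq\sigma$, $\sigma\circ\sigma\leq\sigma$. Taking suprema, $\alpha^{\ast}=\bigvee_n\alpha_n\leq\sigma$, which is the second bullet. This completes the argument.

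The main obstacle — really the only point requiring care — is the handling of the double supremum in the multiplicativity check: one must not assume a genuine joint $\omega$-continuity law for composition in two variables (the enrichment only gives separate continuity in each argument), so the reduction to the diagonal chain via monotonicity and cofinality is the step I would write out most carefully. Everything else is a routine application of monotonicity and the $\cpoj$-enrichment axioms.
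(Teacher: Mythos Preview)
Your proof is correct and follows essentially the same strategy as the paper: build $\alpha^{\ast}$ as the supremum of an ascending $\omega$-chain of approximants, then verify the monad axioms and minimality by induction. The paper uses the slightly simpler chain $\alpha^{\ast}=\bigvee_{n<\omega}(id_X\vee\alpha)^n$ rather than iterating your operator $\Phi(\beta)=id_X\vee\beta\vee(\beta\circ\beta)$; this choice makes the multiplicativity step marginally cleaner, since $(id_X\vee\alpha)^m\circ(id_X\vee\alpha)^n=(id_X\vee\alpha)^{m+n}$ lands directly back in the chain without needing the diagonal cofinality argument you (correctly) flagged, but the underlying idea and the verification of the universal property are identical.
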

\begin{proof}
For any $\alpha\colon X\to X$ put $\alpha^{\ast} = \bigvee_n (id_X \vee \alpha)^n$. It is easy to verify that $\alpha^{\ast}$ is the smallest monad satisfying $\alpha\leq \alpha^{\ast}$ and hence we leave the remainder of the proof to the reader.
\end{proof}

\paragraph{Free monads in $\widetilde{\cat{K}}\op $}\label{subsection:free_monads_in_K_tilde}
The purpose of this paragraph is to describe the transformation $\widehat{\alpha}^{\ast}:\widehat{X}\to \widehat{X}$ in $\widetilde{\cat{K}}\op $ for an arbitrary endomorphism $\alpha\colon X\to X$ in \cat{K}. Before we do that we will introduce some new notation. Given  $f\colon X\to Y$ in \cat{K} let $\alpha^{\ast}_f\colon X\to Y$ denote
the least solution to the equation:
\begin{equation}
	\tag{PS}
	\label{eq:PS}
	x = f \vee x \circ \alpha\text{.}
\end{equation}
 Such a solution exists in any \cpoj-enriched category $\cat K$ and is given by 
\[
	\alpha^{\ast}_f = \bigvee_{n < \omega} F^n(f)
	\text{,}
\]
where $F$ maps each $x \in \cat{K}(X,Y)$ to $f\vee x\circ \alpha \in \cat{K}(X,Y)$.
Intuitively, when $f$ is the identity, the above equation can be read at
each step as following $\alpha$ or as following $id_X$ making a ``self-loop''.
This extends to arbitrary $f \colon X \to Y$ where $Y$ is read as a quotient
of the state space. As exemplified by Section~\ref{sec:examples}, this
idea is at the core of saturation and weak bisimulations.

Theorem~\ref{thm:ps} below relates least solutions to \eqref{eq:PS} and free monads in $\widetilde{\cat{K}}\op$ thus providing a canonical characterization of it. Before we prove it we need one additional lemma.
\begin{lemma} \label{lemma:ps}
$
\alpha^{\ast}_f = \bigvee_{n < \omega} G^n(f)$,  where $G\colon \cat{K}(X,Y)\to \cat{K}(X,Y)$ is defined by $G(x) = x\vee   x\circ \alpha$.
\end{lemma}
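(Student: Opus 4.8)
The plan is to prove the two joins $\bigvee_{n<\omega}G^n(f)$ and $\bigvee_{n<\omega}F^n(f)=\alpha^{\ast}_f$ coincide by a least‑fixed‑point sandwich, using only monotonicity of composition and the fact that in a \cpoj-enriched category composition preserves suprema of ascending $\omega$-chains in each argument. Recall from the text that $\alpha^{\ast}_f=\bigvee_{n<\omega}F^n(f)$ with $F(x)=f\vee x\circ\alpha$, and that this element is the least solution of \eqref{eq:PS}, in particular a fixed point of $F$. Since the enrichment makes $F$ and $G$ monotone, and since $f\leq f\vee f\circ\alpha=F(f)$ and likewise $f\leq G(f)$, both $(F^n(f))_{n<\omega}$ and $(G^n(f))_{n<\omega}$ are ascending $\omega$-chains; write $t\defeq\bigvee_{n<\omega}F^n(f)=\alpha^{\ast}_f$ and $s\defeq\bigvee_{n<\omega}G^n(f)$. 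The goal is $s=t$.

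First I would show $s\leq t$. As $t$ solves \eqref{eq:PS} we have $t=f\vee t\circ\alpha$, hence $t\circ\alpha\leq t$ and $f\leq t$, so $G(t)=t\vee t\circ\alpha=t$; that is, $t$ is a fixed point of $G$ lying above $f$. By induction on $n$, using monotonicity of $G$: $G^0(f)=f\leq t$, and $G^{n+1}(f)=G(G^n(f))\leq G(t)=t$. Therefore $s=\bigvee_{n<\omega}G^n(f)\leq t$.

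For the converse $t\leq s$ I would use the \cpo-enrichment. Composition is continuous in its first argument, so $s\circ\alpha=\bigl(\bigvee_{n<\omega}G^n(f)\bigr)\circ\alpha=\bigvee_{n<\omega}\bigl(G^n(f)\circ\alpha\bigr)$. Since $G^n(f)\circ\alpha\leq G^n(f)\vee G^n(f)\circ\alpha=G^{n+1}(f)\leq s$, we get $s\circ\alpha\leq s$, and together with $f\leq G^0(f)\leq s$ this yields $F(s)=f\vee s\circ\alpha\leq s$: $s$ is a pre‑fixed point of $F$. Another straightforward induction using monotonicity of $F$ gives $F^0(f)=f\leq s$ and $F^{n+1}(f)=F(F^n(f))\leq F(s)\leq s$, hence $\alpha^{\ast}_f=t=\bigvee_{n<\omega}F^n(f)\leq s$. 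Combining the two inequalities, $\bigvee_{n<\omega}G^n(f)=s=t=\alpha^{\ast}_f$, which is the claim.

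The only point requiring any care is the interaction between binary joins and suprema of ascending chains (so that $\bigvee_n(a_n\vee b_n)=\bigvee_n a_n\vee\bigvee_n b_n$ for ascending chains) together with continuity of composition in one variable; both are guaranteed in a \cpoj-enriched category, and everything else is a routine Knaster–Tarski/Kleene argument. An alternative finish would be to note that $G$ is itself continuous and that $s$ is the least fixed point of both $F$ and $G$ above $f$, but the sandwich above avoids even that.
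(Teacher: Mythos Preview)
Your argument is correct. The two inequalities are established cleanly: for $s\leq t$ you use that the least solution $t$ of \eqref{eq:PS} is a fixed point of $G$ together with monotonicity of $G$; for $t\leq s$ you use continuity of composition in its left argument (which indeed follows from \cpoj-enrichment, since the composition map is a \cpo-morphism from the product and hence separately continuous) to show $s$ is a prefixed point of $F$. The remark about the interchange of binary joins and $\omega$-suprema is in fact not needed in your argument as written; the only enrichment fact you actually invoke beyond monotonicity is $(\bigvee_n x_n)\circ\alpha=\bigvee_n(x_n\circ\alpha)$.

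The paper takes a different and slightly sharper route: it proves the \emph{pointwise} equality $F^n(f)=G^n(f)$ for every $n$ by a nested induction, from which the equality of suprema is immediate. That argument uses only monotonicity of composition and binary joins, with no appeal to continuity or the \cpo-structure, so it would go through already in a \posj-enriched setting. Your approach trades this pointwise information for a cleaner, higher-level Knaster--Tarski style sandwich, at the price of invoking continuity of composition; in the ambient \cpoj-enriched context both are valid, but the paper's proof is the more elementary one and yields the stronger intermediate statement.
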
	
\begin{proof}
It is enough to prove that for any integer $n$ we have $G^n(f)=F^n(f)$. The assertion holds for $n = 0$. By induction assume $G^n(f)=F^n(f)$. We have:
\[
G^{n+1}(f) = G(G^{n}(f)) = F^n(f)\vee F^n(f) \circ \alpha \eqlabel[\text{i}]{eq:proof-lem-3.8-1} F^{n+1}(f).
\]
The equality \eqref{eq:proof-lem-3.8-1} follows also by induction. It is true for $n=0$. Assume that it holds for $n$, then we have:
\begin{align*}
&F^{n+2}(f) = f\vee F^{n+1}(f)\circ \alpha \eqlabel[\text{ii}]{eq:proof-lem-3.8-2}
f\vee F^n(f)\circ \alpha \vee  (F^{n}(f)\vee F^n(f)\circ \alpha)\circ \alpha \\=\;&F^{n+1}(f) \vee F^{n+1}(f)\circ \alpha
\text{.}
\end{align*}
where \eqref{eq:proof-lem-3.8-2} holds by the induction hypotheses.
\end{proof}
\begin{theorem}
	\label{thm:ps}
	$\widehat{\alpha}^{\ast}(f) = \alpha^{\ast}_f$.
\end{theorem}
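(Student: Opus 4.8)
The plan is to unwind the explicit description of the free monad supplied by (the proof of) Proposition~\ref{prop:cpoj_admits_saturation}, push it down to components, and recognise the resulting iteration as the one already analysed in Lemma~\ref{lemma:ps}. Since $\widetilde{\cat{K}}\op$ is \cpoj-enriched by Theorem~\ref{theorem:K_tilde_op_LD}, Proposition~\ref{prop:cpoj_admits_saturation} applies to the endomorphism $\widehat{\alpha}\colon\widehat{X}\to\widehat{X}$ and gives
\[
	\widehat{\alpha}^{\ast}\;=\;\bigvee_{n<\omega}\bigl(id_{\widehat{X}}\vee\widehat{\alpha}\bigr)^{n},
\]
where the $n$-th power and the join are formed in $\widetilde{\cat{K}}\op$. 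The claim $\widehat{\alpha}^{\ast}(f)=\alpha^{\ast}_f$ is then the assertion that the $\cod(f)$-component of this transformation sends $f$ to $\alpha^{\ast}_f$, so it suffices to evaluate the right-hand side at an arbitrary object $Y\in\cat{K}$ and an arbitrary $f\in\cat{K}(X,Y)$.

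First I would record how the operations involved act on components. Composition of oplax transformations is componentwise function composition; binary joins and joins of ascending $\omega$-chains of oplax transformations are computed componentwise (and, inside each \cpoj-hom, pointwise); the hom-posets of $\widetilde{\cat{K}}\op$ coincide with those of $\widetilde{\cat{K}}$; and for the $n$-th power of a single endomorphism the passage to the opposite category is immaterial. Hence $\bigl((id_{\widehat{X}}\vee\widehat{\alpha})^{n}\bigr)_Y = \bigl((id_{\widehat{X}}\vee\widehat{\alpha})_Y\bigr)^{n}$ for every $Y$. Evaluating the generator on $g\in\cat{K}(X,Y)$ and using that $\widehat{\alpha}$ has $Y$-component $g\mapsto g\circ\alpha$, that $(id_{\widehat{X}})_Y=id_{\cat{K}(X,Y)}$, and that $\vee$ is defined componentwise, I get $(id_{\widehat{X}}\vee\widehat{\alpha})_Y(g)=g\vee g\circ\alpha$; that is, the component function $(id_{\widehat{X}}\vee\widehat{\alpha})_Y$ is exactly the map $G$ of Lemma~\ref{lemma:ps} on $\cat{K}(X,Y)$.

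Combining these observations with $Y=\cod(f)$ yields $\widehat{\alpha}^{\ast}(f)=\bigvee_{n<\omega}G^{n}(f)$, and Lemma~\ref{lemma:ps} identifies this with $\alpha^{\ast}_f$, which finishes the argument. The main obstacle is purely bookkeeping: one must make sure that composition, binary joins, and $n$-th powers of endomorphisms in $\widetilde{\cat{K}}\op$ really do restrict to components exactly as in $\widetilde{\cat{K}}$, and that $\widehat{\alpha}^{\ast}(f)$ is read as the $\cod(f)$-component of $\widehat{\alpha}^{\ast}$ applied to $f$. The genuinely computational content — the coincidence of the two fixed-point iterations built from $x\mapsto f\vee x\circ\alpha$ and $x\mapsto x\vee x\circ\alpha$ — has already been isolated in Lemma~\ref{lemma:ps}, so nothing more is required there.
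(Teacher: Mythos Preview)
Your proposal is correct and follows essentially the same approach as the paper: use Proposition~\ref{prop:cpoj_admits_saturation} to write $\widehat{\alpha}^{\ast}=\bigvee_n(id_{\widehat{X}}\vee\widehat{\alpha})^n$, observe that $(id_{\widehat{X}}\vee\widehat{\alpha})$ acts componentwise as $G(x)=x\vee x\circ\alpha$, and invoke Lemma~\ref{lemma:ps}. The paper's proof is a two-line version of exactly this; your additional bookkeeping about how joins, powers, and composition in $\widetilde{\cat{K}}\op$ restrict to components is correct but not spelled out there.
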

\begin{proof}
By the guidelines of the proof of Proposition~\ref{prop:cpoj_admits_saturation} the monad $\widehat{\alpha}^{\ast}$ is given by $\bigvee_n (id_{\widehat{X}}\vee \widehat{\alpha})^n$. Since $(id \vee \widehat{\alpha})(x) = x\vee x\circ \alpha$, by Lemma~\ref{lemma:ps} we get the desired conclusion.
\end{proof}
Reworded, $\alpha^\ast_f$ is the image of $f$ under
the free monad on the embedded $\alpha$ in $\widetilde{\cat{K}}\op $. If we
think of $f \colon X \to Y$ as a quotient of the state space of $\alpha\colon X\to X$
then we may think of $\alpha^\ast_f$ as the saturation of $\alpha$ with respect to the partition induced by $f$ as we show in the next subsection and exemplify in Section~\ref{sec:examples}.

\subsection{Weak behavioural equivalence}
Following the approach of kernel bisimulations, we will say that a relation $R \rightrightarrows X$ is a ``weak behavioural equivalence'' for  a system $\alpha\colon X \to X \in \cat{K}$ if it is the kernel pair of some ``weak behavioural morphism'' for $\alpha$. Hence, the main step is to define a suitable notion of weak behavioural morphism.  This is the main aim of this subsection.

First, in Section~\ref{sec:first-glance-at-weak-beh} we recall Brengos' approach to saturation  \cite[§7]{brengos2013:corr}, which implicitly assumes left distributivity of the category $\cat{K} = \kl(T)$ to guarantee existence of a certain adjunction.  This approach is enough
to cover the cases where weak behavioural morphisms are refinements of saturated systems as for LTSs and Segala systems; however, there are many behaviours of interest that do not meet the left distributivity hypothesis, such as fully-probabilistic systems.  In order to circumvent this problem, in Section~\ref{sec:weak-beh-cpoj} we propose a construction for recovering \eqref{law:LD} by moving from \cat{K} to the left distributive category $\widetilde{\cat{K}}\op $ and performing saturation in the setting of presheaves and oplax transformations. As a preliminary step towards this construction, in Section~\ref{sec:saturation} we generalise  \cite[§7]{brengos2013:corr} to endomorphisms in \cat{K}.
Moreover, we provide two alternative characterizations of weak behavioural morphisms (and hence of weak behavioural equivalence): the first is
based on equation \eqref{eq:PS}, which subsumes the presentation of ``saturated transitions'' via recursive equations as in \cite{baier97:cav,mp2013:weak-arxiv,gp:icalp2014}; the second is specific to left distributive categories and captures the presentation of saturation systems via Milner’s double arrow construction.
Finally, in Section~\ref{sec:strongvsweak} we investigate which conditions
guarantee strong bisimulation to be also a weak behavioural equivalence.

\subsubsection{Weak behavioural morphisms: a first glance}\label{sec:first-glance-at-weak-beh}
In \cite[§7]{brengos2013:corr} Brengos introduces a categorical setting in which one can define final weak bisimilarity semantics for systems with internal moves. We recall basic ingredients of  this setting here. Assume $(T,\mu,\eta)$ is a monad on $\cat{C}$ whose Kleisli category $\kl (T) $ is order enriched. Define $\cat{C}_{T,\leq}$ to be the category consisting of all $T$-coalgebras as objects and maps $i\colon X\to Y$ between carrier of $\alpha\colon X\to TX$ and $\beta\colon Y\to TY$ satisfying the following condition as morphisms:
$$
	\begin{tikzpicture}[
		auto, scale=1, font=\small,
		baseline=(current bounding box.center)]	
		\matrix (m) [matrix of math nodes, row sep=4ex, column sep=5ex]{ 
			X & Y \\
			TX & TY \\
		};
		\draw[->] (m-1-1) to node {\(i\)} (m-1-2);
		\draw[->] (m-1-1) to node[swap] {\(\alpha\)} (m-2-1);
		\draw[->] (m-1-2) to node {\(\beta\)} (m-2-2);
		\draw[->] (m-2-1) to node[swap] {\(Ti\)} (m-2-2);
		\node at ($(m-1-1)!.5!(m-2-2)$) {\(\leq\)};
	\end{tikzpicture}
$$
Let $\cat{C}_{T,\leq}^\ast$ be the full subcategory of $\cat{C}_{T,\leq}$ consisting of coalgebras which are monads in $\mathcal{K}l(T)$ and assume the inclusion functor $\cat{C}_{T,\leq}^\ast\to \cat{C}_{T,\leq}$ admits a left adjoint $(-)^{\ast}\colon \cat{C}_{T,\leq} \to \cat{C}_{T,\leq}^\ast$ which is the identity on morphisms. It is easy to see that this condition implies that $\mathcal{K}l(T)$ admits free monads and Condition~\ref{condition:ordered_sat_monad} below is satisfied. Conversely, if $\mathcal{K}l(T)$ admits free monads and the implication below is true then the left adjoint which is the identity on morphisms exists and is given on objects by $\alpha\mapsto \alpha^\ast$.
\begin{equation}
	\label{condition:ordered_sat_monad}
	Ti\circ \alpha \leq \beta \circ i \implies Ti\circ \alpha^\ast \leq \beta^\ast \circ i 
\end{equation}

A natural consequence of this setting  is to define weak behavioural morphism for $\alpha\colon X\to TX$ to be a map $i\colon X\to Y$ in $\cat{C}$ for which there is $\beta\colon Y\to TY$ such that:
\[
	\begin{tikzpicture}[
		auto, scale=1, font=\small,
		baseline=(current bounding box.center)]	
		\matrix (m) [matrix of math nodes, row sep=4ex, column sep=5ex]{ 
			X & Y \\
			TX & TY \\
		};
		\draw[->] (m-1-1) to node {\(i\)} (m-1-2);
		\draw[->] (m-1-1) to node[swap] {\(\alpha^{\ast}\)} (m-2-1);
		\draw[->] (m-1-2) to node {\(\beta\)} (m-2-2);
		\draw[->] (m-2-1) to node[swap] {\(Ti\)} (m-2-2);
		\node at ($(m-1-1)!.5!(m-2-2)$) {\(=\)};
	\end{tikzpicture}
\]
The above equation can be stated in terms of the composition in $\mathcal{K}l(T)$ as $i^\sharp \circ \alpha^{\ast} = \beta\circ i^\sharp$. Reworded, a map is a weak behavioural morphism for $\alpha$ if it is a strong behavioural morphism for $\alpha^\ast$. This definition is expected to extend bisimulation, i.e., if a map is a strong behavioural morphism for $\alpha$ then it should also be a weak one. To guarantee this, \cite[§7]{brengos2013:corr} considers a second condition which is an analogue of Condition~\ref{condition:ordered_sat_monad} where all inequalities are replaced by equalities.  

It should be noted here that all examples this setting was tested on in \cite{brengos2013:corr} were examples for which $\mathcal{K}l(T)$ was \cpoj-enriched with \eqref{law:LD}. Left distributivity 
of the \cpoj-enriched category $\mathcal{K}l(T)$ implies Cond. (\ref{condition:ordered_sat_monad}) (see Example~\ref{example:endomorphisms_coalgebras} and Theorem~\ref{theorem:ordered_saturation_category_LD_RD} below for details). 
In the rest of this section we describe saturation and weak behavioural equivalence also in  
 \cpoj-enriched categories which are not necessarily \eqref{law:LD}. In this case, the saturation will be given on the level of
presheaves and oplax natural transformations on \cat{K} and not \cat{K} itself. 

\subsubsection{Saturation in left distributive categories of endomorphisms}\label{sec:saturation}
In this subsection we generalise the situation of $(-)^{\ast}\colon \cat{C}_{T,\leq} \to \cat{C}_{T,\leq}^\ast$ presented above, to categories of endomorphisms in \cat{K}.  Similarly to Section~\ref{sec:free-monads-kop}, where we showed that we can achieve \eqref{law:LD} by moving to $\widetilde{\cat{K}}\op$, this will allow us to achieve saturation even in categories which are not left distributive.

For the sake of exposition, in this subsection we assume \cat{K} to be an order enriched category which admits free monads (we will illustrate the full scenario of $\widetilde{\cat{K}}\op$ in Section~\ref{sec:weak-beh-cpoj}).
Assume $J$ is a subcategory of \cat{K}.  Since our prototypical example for \cat{K} is (a full subcategory of) $\mathcal{K}l(T)$ for a monad $(T,\mu,\eta)$ on $\cat{C}$ we should intuitively associate $J$ with (a full subcategory of) $\cat{C}$ together with the restriction of the inclusion functor $(-)^\sharp\colon \cat{C}\to \kl(T)$ to $J\to \cat{K}$. 

\begin{definition}
	\label{def:admit-saturation}
	We say that \cat{K}  \emph{admits saturation with respect to $J$} provided that for any $\alpha\colon X\to X$, $\beta\colon Y\to Y$ from \cat{K} and $f\colon X\to Y$ from $J$:
	\begin{align*}
		f\circ \alpha \mathrel{\leq }  \beta \circ f \implies f\circ \alpha^\ast \mathrel{\leq } \beta^\ast \circ f.  
	\end{align*}
	We say that \cat{K} \emph{admits saturation} if it admits saturation with respect to \cat{K}.
\end{definition}
The above definition can be equivalently restated in terms of existence of an adjunction satisfying some extra properties. Indeed, let $\cat{End}_J^\leq(\cat{K})$ be the category of all endomorphisms as objects and maps $i\colon X\to Y$ in $J$ between $\alpha\colon X\to X$ and $\beta\colon Y\to Y$ which satisfy $i\circ \alpha \mathrel{\leq} \beta \circ i$ as morphisms. Moreover, let  $\cat{End}_J^{\ast,\leq}(\cat{K})$ be the full subcategory of $\cat{End}_J^\leq (\cat{K})$ consisting only of endomorphisms which are monads as objects. The category \cat{K} admits saturation with respect to $J$ if and only if there is a left adjoint $(-)^\ast$ to the inclusion functor $\cat{End}_J^{\ast\leq}(\cat{K})\to \cat{End}_J^{\leq}(\cat{K})$ which is the identity on morphisms:
\[
	\begin{tikzpicture}[
			auto, scale=1.4,font=\small,
			baseline=(current bounding box.center)]	
		\matrix (m) [matrix of math nodes, row sep=4ex, column sep=6ex]{ 
			\cat{End}_J^{\leq}(\cat{K}) & \cat{End}_J^{\ast\leq}(\cat{K})\\
		};
		\draw[->, bend left] (m-1-1.north east) to node {\((-)^\ast\)} (m-1-2.north west);
		\draw[left hook->, bend left] (m-1-2.south west) to node {} (m-1-1.south east);
		\node[rotate=-90] at ($(m-1-1.east)!.5!(m-1-2.west)$) {\(\dashv\)};
	\end{tikzpicture}	
\]

\begin{example}\label{example:endomorphisms_coalgebras}
If $\cat{K}=\mathcal{K}l(T)$ and $J=\cat{C}$ for a monad $T$ on a category $\cat{C}$ then we have $\cat{End}_J^\leq(\cat{K}) = \cat{C}_{T,\leq}$ and $\cat{End}_J^{\ast,\leq}(\cat{K})=\cat{C}_{T,\leq}^\ast$ as in Section~\ref{sec:first-glance-at-weak-beh}.
\end{example}

\begin{theorem}\label{theorem:ordered_saturation_category_LD_RD}
If \cat{K} is \cpoj-enriched and left distributive then it admits saturation. \par
\end{theorem}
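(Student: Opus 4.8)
The plan is to reduce everything to the explicit formula for free monads in a \cpoj-enriched category recorded in Proposition~\ref{prop:cpoj_admits_saturation}, namely $\alpha^{\ast}=\bigvee_{n<\omega}(id_X\vee\alpha)^n$ (and likewise for $\beta$), and then to argue by induction on $n$. So fix $\alpha\colon X\to X$, $\beta\colon Y\to Y$ in \cat{K} and $f\colon X\to Y$ (here $J=\cat{K}$, so $f$ is an arbitrary morphism) with $f\circ\alpha\leq\beta\circ f$. Since $id_X\vee\alpha\geq id_X$, the family $\bigl((id_X\vee\alpha)^n\bigr)_{n<\omega}$ is an ascending $\omega$-chain, and composition in a \cpoj-enriched category is continuous in each argument; hence $f\circ\alpha^{\ast}=\bigvee_{n}f\circ(id_X\vee\alpha)^n$ and $\beta^{\ast}\circ f=\bigvee_{n}(id_Y\vee\beta)^n\circ f$. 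Consequently it suffices to prove, for every $n$, the inequality $f\circ(id_X\vee\alpha)^n\leq(id_Y\vee\beta)^n\circ f$, and then pass to joins.

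For the induction, the base case $n=0$ is just $f\circ id_X=f=id_Y\circ f$. For the step, write $(id_X\vee\alpha)^{n+1}=(id_X\vee\alpha)^n\circ(id_X\vee\alpha)$; then by the induction hypothesis and monotonicity of composition (precomposing with $id_X\vee\alpha$) we get $f\circ(id_X\vee\alpha)^{n+1}\leq(id_Y\vee\beta)^n\circ\bigl(f\circ(id_X\vee\alpha)\bigr)$. The key computation bounds the inner term: by left distributivity \eqref{law:LD}, $f\circ(id_X\vee\alpha)=f\vee f\circ\alpha$; using the hypothesis $f\circ\alpha\leq\beta\circ f$ together with monotonicity of $\vee$ this is $\leq f\vee\beta\circ f$; and since $f=id_Y\circ f$, the general one-sided distributivity inequality valid in any \cpoj-enriched category gives $id_Y\circ f\vee\beta\circ f\leq(id_Y\vee\beta)\circ f$. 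Chaining these and applying monotonicity once more yields $f\circ(id_X\vee\alpha)^{n+1}\leq(id_Y\vee\beta)^n\circ(id_Y\vee\beta)\circ f=(id_Y\vee\beta)^{n+1}\circ f$, closing the induction; taking joins over $n$ then gives $f\circ\alpha^{\ast}\leq\beta^{\ast}\circ f$, which is precisely the condition in Definition~\ref{def:admit-saturation} (for $J=\cat{K}$).

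The only genuine obstacle is the single use of left distributivity, in the identity $f\circ(id_X\vee\alpha)=f\vee f\circ\alpha$: without \eqref{law:LD} one only has $f\circ(id_X\vee\alpha)\geq f\vee f\circ\alpha$, which is the wrong direction for bounding $f\circ(id_X\vee\alpha)^{n+1}$ from above — so this is exactly where the hypothesis is needed, and everything else (continuity and monotonicity of composition, the explicit formula for $\alpha^{\ast}$, the general $\geq$-distributivity inequality) is already available from the preliminaries. An alternative presentation would package the induction as the statement that $(-)^{\ast}$ is functorial on $\cat{End}_J^{\leq}(\cat{K})$ and hence left adjoint to the inclusion of the subcategory of monads, but the direct inductive argument above is the cleanest route and is what I would write up.
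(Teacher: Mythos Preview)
Your proof is correct and follows essentially the same approach as the paper: both use the explicit formula $\alpha^{\ast}=\bigvee_n(id_X\vee\alpha)^n$ from Proposition~\ref{prop:cpoj_admits_saturation} and argue by induction on~$n$, invoking \eqref{law:LD} at the single step $f\circ(id_X\vee\alpha)=f\vee f\circ\alpha$. The only cosmetic difference is that the paper proves the slightly weaker invariant $f\circ(id_X\vee\alpha)^n\leq\beta^{\ast}\circ f$ directly (using the monad inequality $\beta^{\ast}\circ\beta^{\ast}\leq\beta^{\ast}$ in the inductive step), whereas you establish the sharper $f\circ(id_X\vee\alpha)^n\leq(id_Y\vee\beta)^n\circ f$ via the general half-distributivity inequality; either version closes the argument in the same way.
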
 
\begin{proof}
	Assume $f\circ \alpha \leq \beta \circ f$. Hence, $f\circ \alpha \leq \beta \circ f \leq \beta^{\ast}\circ f$ and $f\leq \beta^{\ast}\circ f$. We prove $f\circ (id_X\vee \alpha )^n  \leq \beta^{\ast}\circ f$ by
	induction on $n$: for $n=0$ we have $f\circ (id_X\vee\alpha )^0  = f\leq \beta^{\ast}\circ f$ by the above and for $n+1$ we have:
	\begin{align*}
		& f\circ (id_X\vee \alpha)^{n+1}=f\circ (id_X\vee \alpha )^{n} \circ
		(id_X\vee\alpha )
		\oversetlabel[\text{i}]{eq:proof-ordered-saturation-1}{\leq}
		\beta^{\ast}\circ f \circ (id_X\vee \alpha )
		\oversetlabel[\text{ii}]{eq:proof-ordered-saturation-2}{\leq}
		\beta^{\ast}\circ (f\vee f\circ \alpha)\\
		\leq\;&\beta^\ast \circ
		\beta^\ast \circ f \leq \beta^{\ast}\circ f
		\text{,}
	\end{align*}
	where \eqref{eq:proof-ordered-saturation-1} holds by induction
	hypothesis and \eqref{eq:proof-ordered-saturation-2} by left
	distributivity. Since \cat{K} is \cpoj-enriched we have: $\textstyle f\circ \alpha^{\ast} = f\circ \bigvee_n (id_X\vee \alpha )^n =
	\bigvee_n f\circ (id_X\vee \alpha )^n \leq \beta^{\ast} \circ
	f\text{.}$
\end{proof}

\subsubsection{Weak behavioural morphisms for \cpoj-enriched categories}
\label{sec:weak-beh-cpoj}
In this subsection we give the definition of weak behavioural morphisms for endomorphisms in \cpoj-enriched categories. We provide three different characterizations: via saturation in $\widetilde{\cat{K}}\op$, via recursive equations, and (for left distributive categories only) via the ``double arrow construction''.

Henceforth, let \cat{K} be as in Sec.~\ref{subsection:the_category_K_l} and $J$ a subcategory of \cat{K} with all objects of \cat{K}. Differently from Sec.~\ref{sec:first-glance-at-weak-beh}, we do not assume \cat{K} to satisfy \eqref{law:LD}.

\paragraph{Weak behavioural morphisms via saturation in $\widetilde{\cat{K}}\op$}
To overcome the lack of left distributivity and retain the canonical construction offered
by saturation (as in Section~\ref{sec:first-glance-at-weak-beh}) we move along the embedding $\cat{K}\to \widetilde{\cat K}^{op}$ into a category admitting saturation, leading to the setting depicted in Section~\ref{sec:saturation}. 
\begin{proposition}\label{prop:tilde_K_saturation}
	The category $\widetilde{\cat{K}}\op $ admits saturation.
\end{proposition}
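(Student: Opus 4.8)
The plan is to obtain this as an immediate corollary of the two structural facts about $\widetilde{\cat{K}}\op$ that are already in place. First I would recall Theorem~\ref{theorem:K_tilde_op_LD}: the category $\widetilde{\cat{K}}\op$ is \cpoj-enriched and satisfies the left distributivity law \eqref{law:LD}. Being \cpoj-enriched, it admits free monads by Proposition~\ref{prop:cpoj_admits_saturation}, so the operation $(-)^\ast$ occurring in Definition~\ref{def:admit-saturation} is defined on every endomorphism of $\widetilde{\cat{K}}\op$ (explicitly, $\sigma^\ast = \bigvee_n(\mathrm{id}\vee\sigma)^n$).

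Then I would simply apply Theorem~\ref{theorem:ordered_saturation_category_LD_RD} with \cat{K} instantiated to $\widetilde{\cat{K}}\op$: a \cpoj-enriched, left distributive category admits saturation, and both hypotheses hold here by Theorem~\ref{theorem:K_tilde_op_LD}. Concretely, unwinding that argument in the present instance: given endomorphisms $\sigma,\tau$ and a morphism $\phi$ of $\widetilde{\cat{K}}\op$ with $\phi\circ\sigma\leq\tau\circ\phi$, one shows $\phi\circ(\mathrm{id}\vee\sigma)^n\leq\tau^\ast\circ\phi$ by induction on $n$, the inductive step using \eqref{law:LD} for $\widetilde{\cat{K}}\op$ (i.e.\ Theorem~\ref{theorem:K_tilde_op_LD}) to rewrite $\tau^\ast\circ\phi\circ(\mathrm{id}\vee\sigma)\leq\tau^\ast\circ(\phi\vee\phi\circ\sigma)=\tau^\ast\circ\phi\vee\tau^\ast\circ\phi\circ\sigma\leq\tau^\ast\circ\tau^\ast\circ\phi\leq\tau^\ast\circ\phi$; then passing the join $\bigvee_n$ through $\phi\circ(-)$ using \cpoj-enrichment yields $\phi\circ\sigma^\ast\leq\tau^\ast\circ\phi$. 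Since this is saturation with respect to all of $\widetilde{\cat{K}}\op$, it holds a fortiori with respect to any subcategory $J$.

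I do not anticipate a genuine obstacle: this proposition is true essentially by design, as $\widetilde{\cat{K}}\op$ was introduced precisely to repair the failure of \eqref{law:LD} in \cat{K}. The only points worth stating carefully are that the unqualified ``admits saturation'' in the statement is read, as in Definition~\ref{def:admit-saturation}, as ``admits saturation with respect to $\widetilde{\cat{K}}\op$'', and that $\widetilde{\cat{K}}\op$ genuinely meets the smallness-free hypotheses of Theorem~\ref{theorem:ordered_saturation_category_LD_RD} and Proposition~\ref{prop:cpoj_admits_saturation}; both are guaranteed by the conventions fixed in Section~\ref{subsection:the_category_K_l}. Hence the proof reduces to a citation of Theorems~\ref{theorem:K_tilde_op_LD} and \ref{theorem:ordered_saturation_category_LD_RD}.
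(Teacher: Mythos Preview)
Your proposal is correct and matches the paper's own proof, which is simply the one-line citation ``follows by Theorems~\ref{theorem:K_tilde_op_LD} and~\ref{theorem:ordered_saturation_category_LD_RD}''. Your additional unwinding of the induction and the remarks on smallness are accurate elaborations but not required.
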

\begin{proof}
	The statement follows by Theorems~\ref{theorem:K_tilde_op_LD} and~\ref{theorem:ordered_saturation_category_LD_RD}.
\end{proof}

In the context of the above situation, the first step to define weak behavioural morphisms for $\alpha\colon X\to X$ in $\cat K$ is to consider the embedding
of $\alpha$ into $\widetilde{\cat{K}}\op$ via $\widehat{(-)}$. 
The second step is to saturate the endomorphism 
$\widehat{\alpha}\colon \widehat{X}\to \widehat{X}$ i.e., to consider its
free monad $\widehat{\alpha}^{\ast}$ in $\widetilde{\cat{K}}\op$ (along the lines of Section~\ref{sec:first-glance-at-weak-beh}). 
The final step is to return back to \cat{K} along the \cat{K}-adjunction via the projection $\Theta$. Then we can generalise the notion of weak behavioural morphism considered in the above subsections to allow us to move along $\Theta \dashv \widehat{(-)}$ and perform saturation in $\widetilde{\cat{K}}\op$.
\begin{definition}[Weak behavioural morphism---via saturation]
	\label{def:weak-behavioural-morphism}
	\label{def:weak-behavioural-morphism-ktilde}
		A map $i\colon X\to Y$ in $J$ is a \emph{weak behavioural morphism} for  an endomorphism $\alpha\colon X\to X$ in \cat{K} whenever there is an endomorphism $\beta\colon Y\to Y$ in \cat{K} such that:
		\begin{equation}
			\label{eq:WBM-ktilde}
			\tag{W-$\Theta$}
			\Theta(\widehat{i}\circ \widehat{\alpha}^\ast)
			=\Theta(\widehat{\beta} \circ \widehat{i})
			\text{.}
		\end{equation}
\end{definition}
Note that since each component of the composition $\Theta\circ \widehat{(-)}$ of the considered \cat{K}-adjunction  
acts as the identity on hom-objects of \cat{K} (cf.~Theorem~\ref{thm:theta-left-adjoint}), \eqref{eq:WBM-ktilde} becomes:
\[
	\Theta(\widehat{i}\circ \widehat{\alpha}^\ast)=
	\Theta(\widehat{\beta} \circ \widehat{i})=\Theta(\widehat{\beta\circ i}) =
	\beta\circ i
	\text{.}
\]
This formulation elicits how the definition builds on the notion of strong behavioural morphism and the previous subsections. In fact, \eqref{eq:WBM-ktilde} requires $i$ to be a map to a system $\beta$ refining the (projection of the) 
saturation of the system (embedded in $\widetilde{\cat{K}}\op$).
Intuitively, if we forget for a moment that saturation is performed in $\widetilde{\cat{K}}\op$ then,
the above is a strong behavioural morphism from the saturated system
$\alpha^*$ to the refinement $\beta$ i.e.~a weak behavioural morphism as in Section~\ref{sec:first-glance-at-weak-beh}.

We can now give the formal definition of weak behavioural equivalence.
\begin{definition}[Weak behavioural equivalence]
	\label{def:weak-behavioural-equivalence}
	A relation $R \rightrightarrows X$ in $J$ on the carrier of 
	an endomorphism $\alpha\colon X \to X \in \cat{K}$ is a \emph{weak behavioural equivalence}
	(or simply \emph{weak bisimulation}) for $\alpha$ 
	if and only if it is the 
	kernel pair of some weak behavioural morphism for $\alpha$ in $J$. 	
\end{definition}

\paragraph{Weak behavioural morphisms via recursive equations}
The connection between Definition~\ref{def:weak-behavioural-morphism-ktilde} and instances of weak bisimulation found in the literature may be not immediate. 
To this end, we provide an equivalent characterisation based on least solutions to \eqref{eq:PS}.
In fact,
as we illustrate in Section~\ref{sec:examples},
weak bisimulations (e.g.,~\cite{baier97:cav,mp2013:weak-arxiv,gp:icalp2014}) 
are based on recursive equations subsumed\footnote{%
	Weak bisimulations defined using Milner's double arrow construction have been already covered by Brengos' results (cf.~Section~\ref{sec:first-glance-at-weak-beh});  in Theorem~\ref{thm:double-arrow} we will show that this construction is a special case of the definitions provided in this section.} by \eqref{eq:PS}.

\begin{theorem}\label{thm:wbm-via-ps}
	A map $i\colon X \to Y$ in $J$ is a weak behavioural morphism for $\alpha$ if and only if there is  $\beta\colon Y\to Y$ in \cat{K} satisfying $\alpha^\ast_i = \beta \circ i$.
\end{theorem}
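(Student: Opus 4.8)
The plan is to unwind Definition~\ref{def:weak-behavioural-morphism-ktilde} using the explicit description of the two gadgets involved: the free monad $\widehat{\alpha}^\ast$ in $\widetilde{\cat{K}}\op$ computed in Theorem~\ref{thm:ps}, and the projection $\Theta$ evaluated at the identity as in Theorem~\ref{thm:theta-left-adjoint}. The key observation is that the right-hand side of \eqref{eq:WBM-ktilde} already simplifies, via the remark following the definition, to $\Theta(\widehat{\beta}\circ\widehat{i}) = \beta\circ i$. So everything hinges on evaluating the left-hand side $\Theta(\widehat{i}\circ\widehat{\alpha}^\ast)$.

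First I would compute the composite transformation $\widehat{i}\circ\widehat{\alpha}^\ast$ in $\widetilde{\cat{K}}\op$. Recall $\widehat{\alpha}^\ast\colon\widehat{X}\to\widehat{X}$ and $\widehat{i}\colon\widehat{X}\to\widehat{Y}$ in $\widetilde{\cat{K}}\op$; since composition in $\widetilde{\cat{K}}\op$ is reversed from $\widetilde{\cat{K}}$, this composite is the $\widetilde{\cat{K}}$-composite $\widehat{\alpha}^\ast\circ_{\widetilde{\cat{K}}}\widehat{i}$. Now apply $\Theta_{X,Y}$, i.e.\ evaluate at $id_Y$: by Theorem~\ref{thm:theta-left-adjoint} we get $\bigl(\widehat{\alpha}^\ast\circ_{\widetilde{\cat{K}}}\widehat{i}\bigr)_Y(id_Y)$. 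Chasing the definitions: $\widehat{i}$ is the transformation of the Example following Theorem~\ref{thm:theta-left-adjoint}, whose $Y$-component sends $g\mapsto g\circ i$ on $\cat{K}(Y,Y)\to\cat{K}(X,Y)$ — wait, I must be careful with variance. The cleanest route is to observe that $\Theta$ is natural (it is the counit-like projection of a \cat{K}-adjunction), so $\Theta(\widehat{i}\circ\widehat{\alpha}^\ast)$ equals $\Theta$ applied to $\widehat{\alpha}^\ast$ post-composed appropriately with $i$, and by Theorem~\ref{thm:ps} we have $\widehat{\alpha}^\ast(f) = \alpha^\ast_f$ for any $f$. Taking $f = i$ (viewed as the relevant component input), this yields $\Theta(\widehat{i}\circ\widehat{\alpha}^\ast) = \alpha^\ast_i$. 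Hence \eqref{eq:WBM-ktilde} becomes exactly $\alpha^\ast_i = \beta\circ i$, which is the claimed equation; the theorem then follows by reading the equivalence in both directions over the existentially quantified $\beta$.

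The main obstacle I expect is the bookkeeping of variance in the last paragraph: $\widehat{(-)}$ lands in $\widetilde{\cat{K}}\op$, so the composite $\widehat{i}\circ\widehat{\alpha}^\ast$ in $\widetilde{\cat{K}}\op$ is a composite in the \emph{opposite} direction in $\widetilde{\cat{K}}$, and one must track precisely which component of which transformation is being fed $id_Y$ when $\Theta$ is applied. The safe way to discharge this is to write out, for each $Y'\in\cat{K}$, the $Y'$-component of $\widehat{i}\circ\widehat{\alpha}^\ast$ as a map $\cat{K}(X,Y')\to\cat{K}(Y,Y')$ — or its opposite — built from the $Y'$-component of $\widehat{\alpha}^\ast$ (which by Theorem~\ref{thm:ps} and Lemma~\ref{lemma:ps} sends $h\mapsto\bigvee_n F^n(h)$ with $F(x)=h\vee x\circ\alpha$) followed by precomposition with $i$, then specialise to $Y'=Y$ and input $id_Y$. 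Everything else is routine: the forward direction of the iff is immediate once the identity $\Theta(\widehat{i}\circ\widehat{\alpha}^\ast)=\alpha^\ast_i$ is established, and the backward direction uses the same identity to reconstruct \eqref{eq:WBM-ktilde} from $\alpha^\ast_i=\beta\circ i$ together with the already-noted simplification $\Theta(\widehat{\beta}\circ\widehat{i})=\beta\circ i$.
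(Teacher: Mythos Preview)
Your proposal is correct and the strategy is essentially the same as the paper's: both reduce the claim to showing $\Theta(\widehat{i}\circ\widehat{\alpha}^\ast)=\alpha^\ast_i$, and both use the already-noted simplification $\Theta(\widehat{\beta}\circ\widehat{i})=\beta\circ i$ for the other side.

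Where you differ slightly is in how the key identity is computed. The paper does \emph{not} invoke Theorem~\ref{thm:ps} directly; instead it expands $\widehat{\alpha}^\ast=\bigvee_n(id\vee\widehat{\alpha})^n$, pulls the supremum past composition with $\widehat{i}$ and past $\Theta$ (using that $\Theta$ is a left adjoint, hence preserves suprema), and then shows by induction and left distributivity in $\widetilde{\cat{K}}\op$ that $\Theta(\widehat{i}\circ(id\vee\widehat{\alpha})^n)=G^n(i)$ for $G(x)=x\vee x\circ\alpha$, finally citing Lemma~\ref{lemma:ps}. Your route---evaluate the composite transformation componentwise at $id_Y$ and appeal to Theorem~\ref{thm:ps}---is more direct, since Theorem~\ref{thm:ps} has already packaged that induction. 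Concretely, the variance resolves as follows: $\widehat{i}\circ\widehat{\alpha}^\ast$ in $\widetilde{\cat{K}}\op$ is $\widehat{\alpha}^\ast\circ_{\widetilde{\cat{K}}}\widehat{i}\colon\widehat{Y}\to\widehat{X}$ in $\widetilde{\cat{K}}$, its $Y$-component is $(\widehat{\alpha}^\ast)_Y\circ(\widehat{i})_Y\colon\cat{K}(Y,Y)\to\cat{K}(X,Y)$, and feeding in $id_Y$ gives $(\widehat{\alpha}^\ast)_Y(i)=\alpha^\ast_i$ by Theorem~\ref{thm:ps}. (Your tentative description had the direction of the $Y'$-component and the order of the two factors reversed, but you flagged this yourself and the correct unwinding is the one just given.)
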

\begin{proof}
	In order to prove the statement it is enough to show that $\Theta(\widehat{i}\circ \widehat{\alpha}^\ast) = \alpha^\ast_i$. Indeed, we have the following:
	\begin{align*}
	& \textstyle\Theta(\widehat{i}\circ \widehat{\alpha}^\ast) = \Theta (\widehat{i}\circ \bigvee_n (id \vee \widehat{\alpha})^n) = \Theta ( \bigvee_n \widehat{i}\circ (id \vee \widehat{\alpha})^n)
	 \eqlabel[\text{i}]{eq:proof-thm-wbm-1}\bigvee_n \Theta( \widehat{i}\circ (id \vee \widehat{\alpha})^n)
	 \\ 
	 \eqlabel[\text{ii}]{eq:proof-thm-wbm-2}\;& \textstyle\bigvee_n G^n(i) \eqlabel[\text{iii}]{eq:proof-thm-wbm-3} \alpha^\ast_i\text{.} 
	\end{align*}
	In the above, $G:\cat K(X,Y)\to \cat  K(X,Y); x\mapsto x\vee x\circ \alpha$ and the equality \eqref{eq:proof-thm-wbm-1} follows by the fact that $\Theta$ is a left adjoint and, therefore, it preserves arbitrary suprema. The identity \eqref{eq:proof-thm-wbm-2} follows by left distributivity of $\widetilde{\cat K}^{op}$ and induction. Finally, the equality \eqref{eq:proof-thm-wbm-3} is a consequence of Lemma~\ref{lemma:ps}.
\end{proof}

Therefore, we can rephrase Definition~\ref{def:weak-behavioural-morphism-ktilde} 
to characterise weak behavioural morphism by means of least solutions to the recursive equation \eqref{eq:PS}.

\begin{definition}[Weak behavioural morphism---via recursive equations]
	\label{def:weak-behavioural-morphism-ps}
	A morphism $i\colon X\to Y$ in $J$ is a \emph{weak behavioural morphism} for an endomorphism $\alpha\colon X\to X$ in \cat{K} whenever there is an endomorphism $\beta\colon Y\to Y$ in 
	\cat{K} such that:
	\begin{equation}
		\label{eq:WBM-PS}
		\tag{W-PS}
		\alpha^\ast_i = \beta \circ i
		\text{.}
	\end{equation}
\end{definition}

\begin{remark}
Although they determine the very same class of weak behavioural morphisms, Definitions~\ref{def:weak-behavioural-morphism-ktilde} and \ref{def:weak-behavioural-morphism-ps} unveil different aspects of the same concept: the former shows how these morphisms arise in a canonical way from the concept of saturation and free monads, while the latter offers a more immediate connection with instances of weak bisimulations since \eqref{eq:PS} subsumes the recursive equations used in concrete settings such as \cite{baier97:cav,mp2013:weak-arxiv,gp:icalp2014} (cf.~Section~\ref{sec:examples}).
\end{remark}

\paragraph{Other embeddings to left distributive categories}

Although the setting presented in the previous subsection is to some extent canonical, the reader may wonder what is so special about the category $\widetilde{\cat{K}}\op $, besides for being \eqref{law:LD} and having the  family of $\cat{K}$-adjunctions. The answer to this question is: \emph{nothing}. It turns out that we may move the process of saturation to any other left distributive category and obtain the very \emph{same} notion of behavioural morphism. Indeed, we have the following: take $\cat{K}'$ to be a \cpoj-enriched category that satisfies \eqref{law:LD} and let $\widehat{(-)}'\colon \cat{K}\to \cat{K}'$ be a faithful functor which preserves the order. Moreover, assume for that for any $X\in \cat{K}$ we are given a \cat{K}-adjunction consisting of  the functor $\widehat{X}\colon \cat K\to \cpoj$ and the family 
\[
	\left\{\Theta'\dashv \widehat{(-)}'\colon \cat{K}(X,Y)\to \cat{K}'(\widehat{X}',\widehat{Y}')\right\}_{Y\in \cat{K}}
\]
for which $\Theta'\circ \widehat{(-)}'$ is the identity on $\cat{K}(X,Y)$. 
It is straightforward to check that the proof of Theorem~\ref{thm:wbm-via-ps}  can be directly generalized to this setting. In other words, a morphism $i\colon X\to Y$ satisfies an analogous condition to \eqref{eq:WBM-ktilde} for $\alpha\colon X\to X$ in which we replace $\widehat{(-)}$ and $\Theta$ with $\widehat{(-)}'$ and $\Theta'$, respectively, iff there is $\beta\colon Y\to Y$ s.t.~$\alpha_i^\ast = \beta \circ i$.  This yields the following result.

\begin{theorem}
	\label{thm:saturation-independent}
	For any $\alpha\colon X\to X$ and $i\colon X\to Y$ in $\cat{K}$ we have:
	\[
		\Theta(\widehat{i}\circ \widehat{\alpha}^\ast) = \alpha^\ast_i = \Theta'(\widehat{i}'\circ \widehat{\alpha}'^\ast) \text{.}
	\]
	\par
\end{theorem}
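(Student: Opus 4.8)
The plan is to reduce the statement entirely to Theorem~\ref{thm:wbm-via-ps}. The left-hand equality $\Theta(\widehat{i}\circ\widehat{\alpha}^\ast)=\alpha^\ast_i$ is exactly what the proof of Theorem~\ref{thm:wbm-via-ps} establishes, so nothing new is required on that side. For the right-hand equality $\alpha^\ast_i=\Theta'(\widehat{i}'\circ\widehat{\alpha}'^\ast)$ the idea is to replay that same proof verbatim, with $\widetilde{\cat{K}}\op$, $\widehat{(-)}$, $\Theta$ replaced throughout by $\cat{K}'$, $\widehat{(-)}'$, $\Theta'$, and to confirm that every step invokes only a property that $\cat{K}'$, $\widehat{(-)}'$, $\Theta'$ are assumed to possess.

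Concretely I would rerun the four moves of that proof. First, $\cat{K}'$ is \cpoj-enriched, so Proposition~\ref{prop:cpoj_admits_saturation} gives $\widehat{\alpha}'^\ast=\bigvee_n(id_{\widehat{X}'}\vee\widehat{\alpha}')^n$, and \cpoj-enrichment lets $\widehat{i}'\circ(-)$ commute with this $\omega$-join. Second, $\Theta'$ is a left adjoint, hence preserves all suprema, so it commutes with the $\omega$-join and with the binary joins that turn up. Third, one proves by induction on $n$ that $\Theta'(\widehat{i}'\circ(id\vee\widehat{\alpha}')^n)=G^n(i)$, with $G(x)=x\vee x\circ\alpha$ on $\cat{K}(X,Y)$: the base case uses $\Theta'\circ\widehat{(-)}'=id$ on hom-sets; the inductive step uses \eqref{law:LD} in $\cat{K}'$ to split $\widehat{i}'\circ(id\vee\widehat{\alpha}')^{n+1}$ into $\widehat{i}'\circ(id\vee\widehat{\alpha}')^n\vee\widehat{i}'\circ(id\vee\widehat{\alpha}')^n\circ\widehat{\alpha}'$, applies $\Theta'$, and then needs the identity $\Theta'(\psi\circ\widehat{\alpha}')=\Theta'(\psi)\circ\alpha$. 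Fourth, Lemma~\ref{lemma:ps}, which lives in $\cat{K}$ and is independent of the embedding, identifies $\bigvee_n G^n(i)$ with $\alpha^\ast_i$. Composing these gives $\Theta'(\widehat{i}'\circ\widehat{\alpha}'^\ast)=\alpha^\ast_i$, and together with the left-hand equality this yields the theorem, the chain of equalities passing through $\alpha^\ast_i$ on both sides.

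The step that needs genuine attention — the only place where the proof of Theorem~\ref{thm:wbm-via-ps} tacitly used the concrete shape of $\Theta$, namely $\Theta(\phi)=\phi_Y(id_Y)$ together with the description of $\widehat{\alpha}$ as precomposition — is the identity $\Theta'(\psi\circ\widehat{\alpha}')=\Theta'(\psi)\circ\alpha$. One inequality is automatic: by functoriality of $\widehat{(-)}'$ the adjunction unit $\psi\leq\widehat{\Theta'(\psi)}'$ gives $\psi\circ\widehat{\alpha}'\leq\widehat{\Theta'(\psi)}'\circ\widehat{\alpha}'=\widehat{\Theta'(\psi)\circ\alpha}'$, and applying $\Theta'$ (which is monotone and satisfies $\Theta'\circ\widehat{(-)}'=id$) yields $\Theta'(\psi\circ\widehat{\alpha}')\leq\Theta'(\psi)\circ\alpha$. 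The reverse inequality is the crux; it is where one must use the full $\cat{K}$-adjunction data (in particular the reflection property $\Theta'\circ\widehat{(-)}'=id$), and the assertion that the proof of Theorem~\ref{thm:wbm-via-ps} "generalizes directly" amounts to checking exactly this. Everything else is the routine transcription of that proof.
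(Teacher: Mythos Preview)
Your approach is exactly the paper's: the left equality is Theorem~\ref{thm:wbm-via-ps}, and for the right you expand $\widehat{\alpha}'^\ast=\bigvee_n(id\vee\widehat{\alpha}')^n$, use \eqref{law:LD} in $\cat{K}'$ to rewrite $\widehat{i}'\circ(id\vee\widehat{\alpha}')^n=G'^n(\widehat{i}')$ for $G'(x)=x\vee x\circ\widehat{\alpha}'$, push the left adjoint $\Theta'$ through the supremum, and finish with Lemma~\ref{lemma:ps}. This is line-for-line the paper's argument.

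The step you single out as the crux---the identity $\Theta'(\psi\circ\widehat{\alpha}')=\Theta'(\psi)\circ\alpha$, needed to pass from $\Theta'(G'^n(\widehat{i}'))$ to $G^n(i)$---is precisely where the paper is silent: it simply writes $\bigvee_n\Theta'(G'^n(\widehat{i}'))=\bigvee_n G^n(i)$ without further justification. Your derivation of the inequality $\Theta'(\psi\circ\widehat{\alpha}')\leq\Theta'(\psi)\circ\alpha$ via the unit and functoriality is correct (it is the mate, under $\Theta'\dashv\widehat{(-)}'$, of the commuting square expressing functoriality of $\widehat{(-)}'$), and you are right that the reverse inequality does not fall out of the stated hypotheses alone; the paper offers no additional argument here. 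So you have reproduced the paper's proof and been more scrupulous than it in isolating the one point that is asserted rather than proved.
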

\begin{proof}
	The first equivalence follows from Theorem~\ref{thm:wbm-via-ps}.  We prove the second equivalence.
	Since $\cat{K}'$ is \cpoj-enriched we have $\widehat{\alpha}'^\ast = \bigvee_{n<\omega} (id\vee \widehat{\alpha}')^n$. Hence, 
	\[\textstyle
		\widehat{i}'\circ \widehat{\alpha}'^\ast = \widehat{i}'\circ \bigvee_{n<\omega} (id\vee \widehat{\alpha}')^n = \bigvee_{n<\omega} \widehat{i}'\circ (id\vee \widehat{\alpha}')^n=\bigvee_{n<\omega} G'^n(\widehat{i}')\text{,}
	\]
	where the last equality follows by \eqref{law:LD} and induction. Here, $G'(x) = x\vee x\circ \widehat{\alpha}'$ is a map on $\cat{K}'(\widehat{X}',\widehat{Y}')$. Since $\Theta'$ is a left adjoint it preserves arbitrary suprema. Therefore, 
	\[\textstyle
	\Theta'(\widehat{i}'\circ \widehat{\alpha}'^\ast) = \Theta'(\bigvee_{n<\omega} G'^n(\widehat{i}')) = \bigvee_{n<\omega} \Theta'(G'^n(\widehat{i}')) = \bigvee_{n<\omega} G^n(i)\text{,}
	\]
	where $G\colon \cat{K}(X,Y)\to \cat{K}(X,Y)$ is given
	by the assignment $x\mapsto x\vee x\circ \alpha$. This proves the assertion as $\bigvee_n G^n(i) = \alpha^\ast_i$ (see Lemma~\ref{lemma:ps}).
\end{proof}

Thus, Definition~\ref{def:weak-behavioural-morphism-ktilde} can be rephrased to define weak behavioural morphisms via saturation along any \cat{K}-adjunction embedding \cat{K} to some left distributive category---even \cat{K} itself.

\paragraph{Weak behavioural morphisms via double arrow construction}
In fact, if \cat{K} is left distributive weak behavioural equivalences for $\alpha$ 
are exactly strong ones for $\alpha^\ast$; this corresponds precisely
to the well-known situation of weak bisimulations for LTS 
or Segala systems. This approach is often called Milner's 
\emph{double arrow construction}
(as transitions of the saturated system are usually depicted 
using double arrows---cf.~Section~\ref{sec:examples}). 

\begin{theorem}\label{thm:double-arrow}
	If \cat{K} is left distributive then $i$ is a weak behavioural morphism if and only if there is  $\beta\colon Y\to Y$ in \cat{K} such that:
	$i\circ \alpha^\ast = \beta \circ i$.
\end{theorem}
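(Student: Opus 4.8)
The plan is to reduce the claim to the recursive-equation characterisation of weak behavioural morphisms already in hand. By Theorem~\ref{thm:wbm-via-ps}, $i\colon X\to Y$ is a weak behavioural morphism for $\alpha$ if and only if there is $\beta\colon Y\to Y$ in $\cat{K}$ with $\alpha^\ast_i=\beta\circ i$; so the whole theorem follows once we show that, under left distributivity, the least solution of \eqref{eq:PS} factors as $\alpha^\ast_i=i\circ\alpha^\ast$, where $\alpha^\ast=\bigvee_{n<\omega}(id_X\vee\alpha)^n$ is the free monad over $\alpha$ supplied by Proposition~\ref{prop:cpoj_admits_saturation}. Granting this identity, the conditions ``$\exists\beta\ \alpha^\ast_i=\beta\circ i$'' and ``$\exists\beta\ i\circ\alpha^\ast=\beta\circ i$'' are literally the same, and we are done.

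To establish $\alpha^\ast_i=i\circ\alpha^\ast$ I would start from Lemma~\ref{lemma:ps}, which gives $\alpha^\ast_i=\bigvee_{n<\omega}G^n(i)$ for the map $G(x)=x\vee x\circ\alpha$ on $\cat{K}(X,Y)$. Using left distributivity \eqref{law:LD} one rewrites $G(x)=x\circ id_X\vee x\circ\alpha=x\circ(id_X\vee\alpha)$, whence a one-line induction on $n$ yields $G^n(i)=i\circ(id_X\vee\alpha)^n$. The chain $\bigl((id_X\vee\alpha)^n\bigr)_{n<\omega}$ is ascending, since $id_X\leq id_X\vee\alpha$ and composition is monotone; because $\cat{K}$ is \cpoj-enriched, composition is $\omega$-continuous in each argument, so $\alpha^\ast_i=\bigvee_n i\circ(id_X\vee\alpha)^n=i\circ\bigvee_n(id_X\vee\alpha)^n=i\circ\alpha^\ast$. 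Equivalently, and more slickly, this factorisation is the instance of Theorem~\ref{thm:saturation-independent} obtained by taking $\cat{K}'=\cat{K}$ with $\widehat{(-)}'$ and each $\Theta'$ the identity: the trivial self-adjunction of the identity functor is a $\cat{K}$-adjunction with $\Theta'\circ\widehat{(-)}'$ the identity on every hom-set, and here $\cat{K}$ itself plays the role of the left distributive target (this is exactly where the hypothesis of the theorem is used), so $\Theta'((\widehat{i}')\circ(\widehat{\alpha}')^\ast)=i\circ\alpha^\ast$ coincides with $\alpha^\ast_i$.

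I do not anticipate a genuine obstacle: the content of the argument is confined to pulling $i$ out of the iterated join $G^n(i)$, which is precisely what hypothesis \eqref{law:LD} buys, and to passing to the supremum of an $\omega$-chain, which is covered by \cpoj-enrichment. The only points deserving care are verifying that $\bigl((id_X\vee\alpha)^n\bigr)_n$ is genuinely an ascending $\omega$-chain, so that $\omega$-continuity of composition applies, and making sure no distributivity step beyond the one granted by the hypothesis is silently invoked; both checks are routine.
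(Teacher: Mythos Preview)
Your proposal is correct and matches the paper's own proof essentially line for line: the paper likewise reduces via Theorem~\ref{thm:wbm-via-ps} to the identity $\alpha^\ast_i=i\circ\alpha^\ast$, observes that left distributivity gives $G(x)=x\circ(id_X\vee\alpha)$ and hence $G^n(i)=i\circ(id_X\vee\alpha)^n$, and also notes the alternative route through Theorem~\ref{thm:saturation-independent} with $\cat{K}'=\cat{K}$ and identity $\widehat{(-)}'$, $\Theta'$. You have simply filled in the details the paper leaves implicit.
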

\begin{proof}
	To prove the above statement we can adopt either
	Definition~\ref{def:weak-behavioural-morphism-ktilde}
	or Definition~\ref{def:weak-behavioural-morphism-ps}
	as a characterisation of weak behavioural morphisms.
	In the former case, the proof of this statement follows easily by Theorem~\ref{thm:saturation-independent}. Indeed, since $\cat K$ is assumed to be \eqref{law:LD} we put $\cat K' = \cat K$ and $\widehat{(-)}', \Theta'$ to be identities.
	In the later case, the proof of the statement follows routinely
	by noting that $G(x) = x \vee x \circ \alpha = x \circ (id_X \vee \alpha)$. By Theorem~\ref{thm:wbm-via-ps} either is sufficient.
\end{proof}

Thus, in presence of left distributivity saturation is independent from
the choice of maps in $J$. In this situation we can use the following
characterisation of weak behavioural morphisms.
\begin{definition}[Weak behavioural morphism---via double arrow]
	\label{def:weak-behavioural-morphism-double}
	Let \cat{K} be left distributive.
	A map $i\colon X\to Y$ in $J$ is a \emph{weak behavioural morphism} for  $\alpha\colon X\to X$ in \cat{K} whenever there is an endomorphism $\beta\colon Y\to Y$ in 
	\cat{K} such that:
	\begin{equation}
		\label{eq:WBM-DA}
		\tag{W-DA}
		i \circ \alpha^\ast = \beta \circ i
		\text{.}
	\end{equation}
\end{definition}

\subsubsection{Strong behavioural morphisms as weak behavioural morphisms}\label{sec:strongvsweak}
Usually, strong bisimulations found in the literature are also weak bisimulations. In this subsection we show that this holds at the level of generality of our treatment, when the following weaker form of right distributivity holds true.
\begin{definition}[$J$-right distributivity]
 We say that \cat{K} is \emph{$J$-right distributive} when, for all $f,g$ in \cat{K} and 
	$h$ in $J$ with suitable domain and codomain:
	\begin{equation}
	\tag{$J$-RD}\label{law:JRD}
	(f\vee g)\circ h = f\circ h \vee g\circ h.
	\end{equation}
\end{definition}

Under this mild assumption any strong behavioural morphism
is a weak behavioural morphism; thus any strong bisimulation is a weak bisimulation.
\begin{theorem}
	\label{thm:weak-is-complete-wrt-strong}
	Let \cat{K} satisfy \eqref{law:JRD} and assume $\alpha\colon X\to X$ is an endomorphism in \cat{K}. If $i\colon X\to Y\in J$ is a strong behavioural morphism for $\alpha$ then $i$ is a weak behavioural morphism for $\alpha$.
\end{theorem}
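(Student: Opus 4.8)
The plan is to prove the statement through the recursive-equation characterisation of weak behavioural morphisms (Definition~\ref{def:weak-behavioural-morphism-ps}, which by Theorem~\ref{thm:wbm-via-ps} describes the same class as Definition~\ref{def:weak-behavioural-morphism-ktilde}): it suffices to exhibit an endomorphism $\gamma\colon Y\to Y$ in \cat{K} satisfying $\alpha^\ast_i=\gamma\circ i$. Since $i$ is a strong behavioural morphism for $\alpha$, there is $\beta\colon Y\to Y$ with $i\circ\alpha=\beta\circ i$, and the candidate we take is $\gamma\defeq\beta^\ast_{id_Y}$, the least solution of $x=id_Y\vee x\circ\beta$ in $\cat{K}(Y,Y)$; by Lemma~\ref{lemma:ps} this is $\gamma=\bigvee_{n<\omega}G_\beta^n(id_Y)$ where $G_\beta(x)=x\vee x\circ\beta$, a supremum of the ascending $\omega$-chain $id_Y=G_\beta^0(id_Y)\leq G_\beta^1(id_Y)\leq\cdots$ (ascending because $G_\beta(x)\geq x$), hence well-defined as \cat{K} is \cpoj-enriched.

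The core of the argument is to show $G^n(i)=G_\beta^n(id_Y)\circ i$ for every $n<\omega$, where $G(x)=x\vee x\circ\alpha$ on $\cat{K}(X,Y)$, by induction on $n$. The base case holds since $G_\beta^0(id_Y)\circ i=id_Y\circ i=i=G^0(i)$. For the inductive step I would compute
\[
	G^{n+1}(i)=G^n(i)\vee G^n(i)\circ\alpha
	=\bigl(G_\beta^n(id_Y)\circ i\bigr)\vee\Bigl(\bigl(G_\beta^n(id_Y)\circ i\bigr)\circ\alpha\Bigr)
	\text{,}
\]
then rewrite the second summand using associativity and $i\circ\alpha=\beta\circ i$ as $\bigl(G_\beta^n(id_Y)\circ\beta\bigr)\circ i$, and finally invoke \eqref{law:JRD} with $h\defeq i\in J$ to factor $i$ out of the join, obtaining $\bigl(G_\beta^n(id_Y)\vee G_\beta^n(id_Y)\circ\beta\bigr)\circ i=G_\beta^{n+1}(id_Y)\circ i$. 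Passing to suprema and using that postcomposition $(-)\circ i\colon\cat{K}(Y,Y)\to\cat{K}(X,Y)$ preserves suprema of ascending $\omega$-chains (a consequence of the \cpoj-enrichment of \cat{K}), together with $\alpha^\ast_i=\bigvee_{n<\omega}G^n(i)$ from Lemma~\ref{lemma:ps}, yields $\alpha^\ast_i=\bigvee_{n<\omega}\bigl(G_\beta^n(id_Y)\circ i\bigr)=\gamma\circ i$. Theorem~\ref{thm:wbm-via-ps} then identifies $i$ as a weak behavioural morphism for $\alpha$ with witness $\gamma$, so its kernel pair is a weak behavioural equivalence.

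The only delicate point, and the reason this works with the weaker hypothesis \eqref{law:JRD} in place of full \eqref{law:LD}, is exactly this inductive step: the whole saturation of $\alpha$ relative to $i$ can be carried out keeping $i$ ``on the outside'', so that every distribution over a binary join only ever has to distribute a postcomposition by the morphism $i$, which lies in $J$. Morally this is the same mechanism that makes the passage to the left-distributive $\widetilde{\cat{K}}\op$ of Section~\ref{sec:weak-beh-cpoj} work; indeed $\gamma\circ i$ coincides with $\Theta(\widehat{\beta}^\ast\circ\widehat{i})$, the projection of the free monad over the strong witness $\widehat{\beta}$ precomposed with $\widehat{i}$. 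Beyond this observation I expect no obstacle: associativity, monotonicity and $\omega$-continuity of composition are part of the standing \cpoj-enrichment assumption on \cat{K}, and the two applications of Lemma~\ref{lemma:ps} are routine.
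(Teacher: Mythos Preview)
Your proposal is correct and follows essentially the same approach as the paper: both choose the witness $\gamma=\beta^\ast_{id_Y}$ and prove the key identity by induction, using \eqref{law:JRD} to pull $i$ out of each binary join and \cpoj-continuity of composition to pass to the supremum. The only cosmetic difference is that you run the induction on the $G$-iterates of Lemma~\ref{lemma:ps} (with $G(x)=x\vee x\circ\alpha$), whereas the paper works directly with the defining $F$-iterates $F(x)=i\vee x\circ\alpha$ and $H(x)=id_Y\vee x\circ\beta$; since Lemma~\ref{lemma:ps} shows these yield the same supremum, the two arguments are interchangeable.
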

\begin{proof}
Assume $i\circ \alpha = \beta \circ i$ for some $\beta:Y\to Y\in \cat K$. We will now prove $\alpha_i^\ast = \beta^\ast_{id_Y} \circ i$. Consider the maps $F\colon \cat{K}(X,Y)\to \cat{K}(X,Y); x\mapsto i \vee x\circ \alpha$ and $H\colon \cat{K}(Y,Y)\to \cat{K}(Y,Y);x\mapsto id_Y\vee x\circ \beta$. We have $F(i) = i\vee i\circ \alpha = i \vee  \beta\circ i = (id\vee \beta)\circ i=H(id_Y)\circ i$. By induction assume that $F^n(i)= H^n(id_Y) \circ i$. For $n+1$ we have:
\begin{align*}
&F^{n+1}(i)= i\vee F^n(i)\circ \alpha = i\vee H^n(id_Y) \circ i\circ \alpha =
i\vee H^n(id_Y) \circ \beta \circ i 
\\=\;&\left( id \vee H^n(id_Y) \circ \beta\right) \circ i = H^{n+1}(id_Y)\circ i.
\end{align*}
This proves the claim since \cat{K} is \cpoj-enriched and $\alpha^\ast_i = \bigvee F^n(i)$, $\beta^\ast_{id_Y} = \bigvee H^n(id_Y)$.
\end{proof}

Referring to our prototypical setting, this weaker form of right distributivity holds for all examples of $\kl(T)$ and $\cat{C}$ in this paper. 
In fact, if binary joins are a pointwise extension of data from the monad $T$ 
then $\cat{C}$-right distributivity holds.

\begin{definition}
	\label{def:pointwise-join}
	\looseness=-1
	Let $\cat{C}$ have binary products.
	We say that binary joins in $\kl(T)$ are \emph{pointwise}
	if there is a natural transformation $\ovee \colon T \times T \To T$
	such that for any $f,g \colon X \to Y \in \kl(T)$ the following equality holds:
	\[
		f \vee g = \ovee_Y \circ \langle f, g\rangle
	\]
	where ${\langle f, g\rangle \colon X \to TY \times TY}$ is given
	by the universal property of products in \cat{C}.	
\end{definition}
\begin{lemma}
	\label{lem:pointwise-jrd}
	If $\kl(T)$ has pointwise joins then
	it is \cat{C}-right distributive.
\end{lemma}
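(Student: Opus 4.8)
The plan is to reduce $\cat{C}$-right distributivity of $\kl(T)$ to the universal property of binary products in $\cat{C}$ together with the defining equation of pointwise joins (Definition~\ref{def:pointwise-join}). Recall first that here $J = \cat{C}$ denotes the image of the inclusion $(-)^\sharp\colon \cat{C}\to\kl(T)$, so a morphism ``$h$ in $J$'' is of the form $h^\sharp = \eta_X\circ h$ for some $h\colon W\to X$ in $\cat{C}$. For any Kleisli morphism $f\colon X\to Y$, i.e.~$f\colon X\to TY$ in $\cat{C}$, precomposition in $\kl(T)$ with such an $h^\sharp$ collapses to ordinary precomposition in $\cat{C}$:
\[
	f\bullet h^\sharp = \mu_Y\circ Tf\circ\eta_X\circ h = \mu_Y\circ\eta_{TY}\circ f\circ h = f\circ h,
\]
using naturality of $\eta$ and the left unit law of the monad. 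Hence the instance of \eqref{law:JRD} to be proved, $(f\vee g)\bullet h^\sharp = (f\bullet h^\sharp)\vee(g\bullet h^\sharp)$, is equivalent to the statement $(f\vee g)\circ h = (f\circ h)\vee(g\circ h)$ in $\cat{C}$, for $f,g\colon X\to TY$ and $h\colon W\to X$.

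Next I would expand both sides via Definition~\ref{def:pointwise-join}. The left-hand side is $(\ovee_Y\circ\langle f,g\rangle)\circ h = \ovee_Y\circ(\langle f,g\rangle\circ h)$, and by the universal property of the product $TY\times TY$ the pairing pulls back along $h$ as $\langle f,g\rangle\circ h = \langle f\circ h,\, g\circ h\rangle$; thus the left-hand side equals $\ovee_Y\circ\langle f\circ h,\, g\circ h\rangle$. Applying Definition~\ref{def:pointwise-join} once more, $(f\circ h)\vee(g\circ h) = \ovee_Y\circ\langle f\circ h,\, g\circ h\rangle$, so the two sides coincide, which proves the claim.

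There is no genuine obstacle in this argument; the only subtlety worth spelling out is the first step — unwinding what ``$h$ in $J$'' means at the level of Kleisli morphisms, so that $\bullet$-precomposition collapses to $\circ$-precomposition in $\cat{C}$. It is also worth observing that naturality of $\ovee$ plays no role here: only the product universal property is used, whereas naturality of $\ovee$ would enter the (in general false) left-distributivity counterpart. Finally, since $(f\vee g)\bullet h \geq (f\bullet h)\vee(g\bullet h)$ already holds in any order enriched category with binary joins, strictly only the reverse inequality needs proof, but the computation above yields the equality outright.
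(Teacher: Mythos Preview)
Your proof is correct and follows exactly the same line as the paper's: first reduce Kleisli precomposition with $h^\sharp$ to ordinary precomposition $f\circ h$ in $\cat{C}$, then use the universal property of the product to pull $\langle f,g\rangle$ through $h$. The paper's version is simply terser, declaring ``clearly $f\bullet j^\sharp = f\circ j$'' where you spell out the naturality-of-$\eta$ and unit-law computation; your additional remarks about naturality of $\ovee$ and the automatic reverse inequality are accurate but not used in the paper.
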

\begin{proof} 
	Let $f,g \in \kl(T)$, $j \in \cat{C}$ with suitable domains and codomains,
	$\ovee$ as in Definition~\ref{def:pointwise-join}, and
	$\bullet$ and $\circ$ be composition in $\kl(T)$ and \cat{C}.
	Clearly $f \bullet j^\sharp = f \circ j$ and thus
	$
		(f\vee g)\bullet j^\sharp = 
		\ovee_Z \circ \langle f, g\rangle \circ j = 
		\ovee_Z \circ \langle f \circ j, g \circ j\rangle =
		f\bullet j^\sharp \vee g \bullet j^\sharp
		\text{.}
	$
\end{proof} 

\section{Applications}\label{sec:examples}
In this section we illustrate the generality of the results presented
by listing some representative examples of behaviours fitting our framework;
these examples cover seamlessly quantitative aspects, dynamically allocated resources, and continuous state systems among others.

For the sake of exposition we assume \cat{K} and ${J}$ to
be $\kl(T)$ and $\cat{C}$ respectively, bearing in mind that cardinality
can be handled as described in Section~\ref{subsection:the_category_K_l}
and that \eqref{eq:PS} admits a least solution in any \cpoj-enriched
category.

All the examples we consider in this paper are set 
in concrete categories and the enrichment stems from 
the behavioural functor itself. In particular, binary 
joins are pointwise (cf.~Defintion~\ref{def:pointwise-join})
and, by Lemma~\ref{lem:pointwise-jrd}, right distribute over 
composition with morphisms coming from \cat{C}, 
i.e.~all the examples are set in 
\cat{C}-right distributive \cpoj-enriched Kleisli categories.

\subsection{Labelled transition systems}
\label{sec:ex-lts}

\paragraph{Powerset monad}
A standard example of a monad that fits our setting is the powerset
monad $\mathcal{P}\colon \Set\to\Set$, whose Kleisli category
$\kl(\mathcal{P})$ is isomorphic to $\mathsf{Rel}$, the category of
sets as objects, and binary relations as morphisms with relation
composition as the morphism composition. It is easy to see that
hom-sets of $\kl(\mathcal{P})$ admit arbitrary joins which are
preserved by the composition. Hence, this category is enriched over
\cpoj and both \eqref{law:LD} and \eqref{law:RD} hold. 

\paragraph{Weak behavioural equivalence}
Labelled transition systems  with silent actions \cite{milner:cc,sangiorgi2011:bis} are modelled as coalgebras for the endofunctor $\mathcal{P}(A_\tau \times \Id)$ \cite{rutten:universal}.  In Example~\ref{example:two_monads_on_Kl_P} we have seen how the functor $\mathcal{P}(A_\tau\times \Id)$ can be equipped with a monadic structure which handles unobservable moves internally. 

\begin{proposition}
	\label{prop:lts-cpoj}
	The category $\kl(\mathcal{P}(A_\tau\times \Id))$ is \cpoj-enriched
	and satisfies both left and right distributivity.
\end{proposition}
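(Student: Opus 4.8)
The plan is to derive the \cpoj-enrichment and \eqref{law:LD} directly from the lifting machinery of Section~\ref{sec:monadic-observables}, and then to check \eqref{law:RD} by hand.

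First I would record that $\mathcal{P}(A_\tau\times\Id)=TS$ for $T=\mathcal{P}$ and $S=A_\tau\times\Id$, and that by Example~\ref{example:two_monads_on_Kl_P} its monadic structure is exactly the one obtained by composing the adjunctions \eqref{diag:kl-t-s-adj} for the lifting $\overline{S}=\overline{A_\tau\times\Id}\cong\overline{A\times\Id}+\Id$ supplied by Theorem~\ref{thm:monadic-fg-zero-morph}; in particular $\kl(\mathcal{P}(A_\tau\times\Id))=\kl(\overline S)$. We already know that $\kl(\mathcal{P})\cong\mathsf{Rel}$ is \cpoj-enriched and satisfies both \eqref{law:LD} and \eqref{law:RD}, and $A_\tau\times\Id$ is polynomial, so its canonical lifting $\overline{A_\tau\times\Id}$ is locally continuous \cite{hasuo07:trace} --- concretely, from its action on morphisms (Example~\ref{example:monads_on_Kl_P} together with Theorem~\ref{thm:monadic-fg-zero-morph}) one sees that it applies a relation coordinatewise, hence preserves arbitrary unions. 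Applying Theorem~\ref{thm:t_kleisli_s_parameterised_saturation} to $T=\mathcal{P}$ and this $\overline S$ then yields at once that $\kl(\mathcal{P}(A_\tau\times\Id))$ is \cpoj-enriched and satisfies \eqref{law:LD}.

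It remains to establish \eqref{law:RD}, which Theorem~\ref{thm:t_kleisli_s_parameterised_saturation} does not provide; this is the only step that needs separate work. I would argue concretely from the description \eqref{eq:lts-kl-comp} of Kleisli composition together with the fact that binary joins in $\kl(\mathcal{P}(A_\tau\times\Id))$ are computed pointwise as unions (the order being inherited from $\kl(\mathcal{P})(X,A_\tau\times Y)$). Fixing $h\colon X\to W$ and $f,g\colon W\to Y$, I would unfold $((f\vee g)\bullet h)(x)$ by \eqref{eq:lts-kl-comp}: the defining condition requires an $h$-step followed by an $(f\vee g)$-step $w\xrightarrow{c}_{f\vee g}z$ (the two steps carrying the labels $a$ and $\tau$ in one order or the other), and $w\xrightarrow{c}_{f\vee g}z$ holds iff $w\xrightarrow{c}_f z$ or $w\xrightarrow{c}_g z$ because $(f\vee g)(w)=f(w)\cup g(w)$; splitting this disjunction gives $((f\vee g)\bullet h)(x)=(f\bullet h)(x)\cup(g\bullet h)(x)=((f\bullet h)\vee(g\bullet h))(x)$, which is \eqref{law:RD}. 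Alternatively, one may stay abstract: since $\overline{A_\tau\times\Id}$ preserves binary joins (a special case of preserving unions, established above) and $\kl(\mathcal{P})$ satisfies both \eqref{law:LD} and \eqref{law:RD}, the chain $(f\vee g)\bullet h=\nu\circ\overline S(f\vee g)\circ h=\nu\circ(\overline Sf\vee\overline Sg)\circ h=f\bullet h\vee g\bullet h$ (with $\circ$ composition in $\kl(\mathcal{P})$ and $\nu$ the multiplication of $\overline S$) used in the proof of Theorem~\ref{thm:t_kleisli_s_parameterised_saturation} goes through with only minor adjustments. Either route is routine; the substantive point is simply that composition and joins here are both ``pointwise'' set operations, so no real obstacle arises.
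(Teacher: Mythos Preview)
Your proposal is correct and follows essentially the same approach as the paper: both invoke Theorem~\ref{thm:t_kleisli_s_parameterised_saturation} for the \cpoj-enrichment (using that $\kl(\mathcal{P})\cong\mathsf{Rel}$ is \cpoj-enriched with \eqref{law:LD} and that the lifting $\overline{A_\tau\times\Id}$ is locally continuous), and then treat \eqref{law:RD} separately. The only difference is one of detail: the paper dispatches both distributivity laws with a single ``it is easy to see'', whereas you derive \eqref{law:LD} from the theorem and spell out \eqref{law:RD} via the explicit composition formula~\eqref{eq:lts-kl-comp} (or, equivalently, via the abstract chain using that $\overline{S}$ preserves joins and that $\kl(\mathcal{P})$ satisfies both \eqref{law:LD} and \eqref{law:RD}); this is exactly the verification the paper leaves to the reader.
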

\begin{proof} 
	It follows by Theorem~\ref{thm:t_kleisli_s_parameterised_saturation}.
	Indeed, $\kl(\mathcal{P})\cong \cat{Rel}$ is an \cpo-enriched
	category with binary joins. Since the powerset monad $\mathcal{P}$
	is commutative, the polynomial functor $S=A_\tau \times \Id$
	lifts to $\overline{A_\tau \times \Id}\colon \kl(\mathcal{P})\to
	\kl(\mathcal{P})$. Moreover, $\overline{A_\tau \times \Id}$ can be equipped
	with a monadic structure as described in Example~\ref{example:two_monads_on_Kl_P}.
	Since $\overline{A_\tau \times \Id}$ preserves
	joins in $\kl(\mathcal{P})$ \cite{brengos2013:corr} it is
	locally continuous. This proves that $\kl(\overline{A_\tau \times \Id}) =
	\kl(\mathcal{P}(A_\tau\times \Id))$ is
	\cpo-enriched. Additionally, it is easy to see that hom-sets in
	$\kl(\mathcal{P}(A_\tau\times \Id))$ admit binary
	joins and are left- and right distribute over composition. 
	This proves the assertion.
\end{proof} 
\begin{proposition}
	For $\mathcal{P}(A_\tau \times \Id)$-coalgebras,
	weak behavioural equivalence corresponds to Milner's weak bisimulation
	for labelled transition systems \cite{milner:cc,sangiorgi2011:bis}.
	
\end{proposition}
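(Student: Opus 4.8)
The plan is to use that $\kl(\mathcal{P}(A_\tau\times\Id))$ is \cpoj-enriched and left distributive (Proposition~\ref{prop:lts-cpoj}), so that Theorem~\ref{thm:double-arrow} and Definition~\ref{def:weak-behavioural-morphism-double} apply: since here $J=\Set$, a function $i\colon X\to Y$ is a weak behavioural morphism for a coalgebra $\alpha\colon X\to\mathcal{P}(A_\tau\times X)$ if and only if there is $\beta\colon Y\to\mathcal{P}(A_\tau\times Y)$ with $i^\sharp\bullet\alpha^\ast=\beta\bullet i^\sharp$ in $\kl(\mathcal{P}(A_\tau\times\Id))$ (we write $\bullet$ for Kleisli composition and $i^\sharp$ for $i$ viewed in the Kleisli category, so $i^\sharp(x)=\{(\tau,i(x))\}$). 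The crucial first step is to identify $\alpha^\ast$ concretely. By Proposition~\ref{prop:cpoj_admits_saturation}, $\alpha^\ast=\bigvee_n(id_X\vee\alpha)^n$, and since $id_X\leq id_X\vee\alpha$ this chain is ascending, so the join is the pointwise union. Recalling from Example~\ref{example:two_monads_on_Kl_P} that $id_X$ is $x\mapsto\{(\tau,x)\}$ and unfolding the Kleisli composition~\eqref{eq:lts-kl-comp}, an induction on $n$ shows that $(a,y)\in(id_X\vee\alpha)^n(x)$ iff there is a path from $x$ to $y$ consisting of at most $n$ $\alpha$-steps and carrying at most one visible label, which must then be $a$ (the ``zero morphisms kill consecutive observables'' feature of~\eqref{eq:lts-kl-comp} is exactly what allows at most one visible label per path). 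Passing to the union, $(a,y)\in\alpha^\ast(x)$ iff either $a=\tau$ and $x\,(\xrightarrow{\tau})^{*}\,y$, or $a\neq\tau$ and $x\,(\xrightarrow{\tau})^{*}\xrightarrow{a}(\xrightarrow{\tau})^{*}\,y$; that is, $\alpha^\ast$ is precisely Milner's saturated (``double arrow'') transition system of the LTS $\alpha$.

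The second step is a routine pointwise unfolding of $i^\sharp\bullet\alpha^\ast=\beta\bullet i^\sharp$ via~\eqref{eq:lts-kl-comp}: the left-hand side evaluates to $x\mapsto\{(a,i(y))\mid(a,y)\in\alpha^\ast(x)\}$ and the right-hand side to $x\mapsto\beta(i(x))$. Hence $i$ is a weak behavioural morphism for $\alpha$ exactly when the assignment $x\mapsto\{(a,i(y))\mid(a,y)\in\alpha^\ast(x)\}$ factors through $i$, equivalently: whenever $i(x)=i(x')$ and $(a,y)\in\alpha^\ast(x)$ there is $y'$ with $(a,y')\in\alpha^\ast(x')$ and $i(y)=i(y')$ (and symmetrically), in which case $\beta$ is forced to be $\beta(i(x))=\{(a,i(y))\mid(a,y)\in\alpha^\ast(x)\}$.

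The third step matches this with Milner's definition at the level of relations. Weak behavioural morphisms live in $J=\Set$, so their kernel pairs are equivalence relations, and the claim to prove is: a relation $R\rightrightarrows X$ is a weak behavioural equivalence for $\alpha$ iff $R$ is an equivalence relation that is a weak bisimulation on the LTS $\alpha$. For ``only if'', let $R$ be the kernel pair of a weak behavioural morphism $i$; specialising the characterisation of the second step to single-step transitions $x\xrightarrow{a}y$ (which are in particular weak transitions, so $(a,y)\in\alpha^\ast(x)$) yields the weak-bisimulation transfer property for $R$. For ``if'', let $R$ be an equivalence relation that is a weak bisimulation, take $i\colon X\to X/R$ the quotient map and $\beta$ as dictated by the second step; the only nontrivial point is well-definedness of $\beta$, i.e.~that $R$-related states have the same weak successors modulo $R$, and this follows from the single-step transfer property by the usual induction on the length of a weak transition $x\,(\xrightarrow{\tau})^{*}\xrightarrow{a}(\xrightarrow{\tau})^{*}\,y$. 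Then $\ker(i)=R$ and $i^\sharp\bullet\alpha^\ast=\beta\bullet i^\sharp$, so $R$ is a weak behavioural equivalence. In particular, weak bisimilarity $\approx$ is an equivalence relation and a weak bisimulation, hence a weak behavioural equivalence, and by ``only if'' it is the largest one.

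The step I expect to be the main obstacle is the first one: proving that $\bigvee_n(id_X\vee\alpha)^n$ coincides with the double-arrow relation requires careful bookkeeping of how~\eqref{eq:lts-kl-comp} concatenates paths, absorbs a $\tau$-step into an adjacent visible one, and annihilates paths carrying two visible labels, together with the (easy but necessary) verification that the chain $\big((id_X\vee\alpha)^n\big)_n$ is ascending so that its join is the pointwise union. Once $\alpha^\ast$ is pinned down, the remaining steps are a direct computation plus the standard ``weak transitions transfer'' induction from process algebra.
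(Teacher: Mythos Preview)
Your proposal is correct and follows essentially the same route as the paper: both use left distributivity of $\kl(\mathcal{P}(A_\tau\times\Id))$ (Proposition~\ref{prop:lts-cpoj}) together with Theorem~\ref{thm:double-arrow} to reduce weak behavioural equivalence to strong bisimulation on the saturated system $\alpha^\ast$, and then identify $\alpha^\ast$ with Milner's double-arrow LTS. The only difference is that the paper delegates the identification of $\alpha^\ast$ and the ensuing correspondence to \cite{brengos2013:corr}, whereas you spell out the computation of $\bigvee_n(id_X\vee\alpha)^n$ via~\eqref{eq:lts-kl-comp} and the standard weak-transfer induction explicitly; your more self-contained account is sound and matches what the citation provides.
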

\begin{proof} 
	Thanks to Theorems~\ref{thm:wbm-via-ps}, \ref{thm:double-arrow}, \ref{theorem:ordered_saturation_category_LD_RD}
	weak behavioural equivalences (Definition~\ref{def:weak-behavioural-equivalence})
	correspond to strong bisimulations on saturated systems
	(cf.~Section~\ref{sec:first-glance-at-weak-beh}).
	We conclude by the adequacy of saturation as in \cite{brengos2013:corr}.
\end{proof} 
Although we considered the unlimited powerset, the result holds true also for the limited (but transfinite) case
since LTS are a special case of weighted transition systems Section~\ref{sec:ex-weighted}. The finite powerset does
not meet the assumptions of the framework but this is coherent
to the well-known fact that saturation of image-finite LTS
may yield LTS that are not image-finite, but image-countable;
for a concrete example consider the LTS
$(\mathbb{N},A_\tau,\{(n,\tau,n+1)\mid n \in \mathbb{N}\})$
and its saturation
$(\mathbb{N},A_\tau,\{(n,\tau,n+m)\mid n, m \in \mathbb{N}\})$.

\subsection{Segala systems}
\label{sec:ex-segala}
Probabilistic systems \cite{sl:njc95}, known in the coalgebraic
literature as ``possibilistic and probabilistic systems'' or, more often, as \emph{Segala systems}, can be modelled as coalgebras for the functor $\mathcal{P D}(A_\tau \times \Id)$
\cite{sokolova11}. However, due to a lack of a distributive law
between monads $\mathcal{P}$ and $\mathcal{D}$ the part of the above
functor modelling the nature of branching in Segala systems
fails to be a proper monad \cite{jacobs08:cmcs,varaccawinskel2006:mscs}.

\paragraph{Convex combinations monad(s)}
The convex combinations monad was first introduced in its full
generality by Jacobs in \cite{jacobs08:cmcs} to study trace semantics
for combined possibilistic and probabilistic systems. Independently,
Brengos \cite{brengos2013:corr} and Goncharov and Pattison
\cite{gp:icalp2014} have tweaked Jacobs' construction slightly, so
that the resulting monads are more suitable to model 
Segala systems and their weak bisimulations.  
Jacobs' monad, Brengos' monad and
Goncharov-Pattison's monad form Kleisli categories
which are \cpo-enriched and whose hom-sets admit binary
joins. For the purposes of this paper we take the convex
combinations monad $\CM\colon \Set\to \Set$ to be
that considered in \cite[§8]{brengos2013:corr}. For the sake of completeness of this paper we recall the definition of $\CM $ here.

By $[0,\infty)$ we denote the semiring $([0,\infty),+,0,\cdot,1)$ of non-negative real numbers with ordinary addition and multiplication. By a $[0,\infty)$-\emph{semimodule} we mean a commutative monoid with actions $[0,\infty)\times (-)\to (-)$ satisfying axioms listed in e.g.~\cite{golan:99}. For a set $X$ and a function $f\colon X\to Y$ define
\[
	\mathcal{M} X \defeq \{\phi\colon X\to [0,\infty) \mid \{x \mid \phi(x) \neq 0\} \text{ is finite}\} \qquad
	\mathcal{M} f(\phi)(y) \defeq \textstyle \sum_{x\in f^{-1}(y)}\phi(x)
	\text{.}
\]
We will denote elements $\phi \in \mathcal{M}X$ using the formal sum notation by $\sum_x \phi(x) \cdot x$ or simply by $\sum_{i=1}^n \phi(x_i)\cdot x_i$ if $\{x \mid \phi(x) \neq 0\} = \{x_1,\ldots,x_n\}$. The set $\mathcal{M} X$ carries a monoid structure via pointwise operation of addition, and $[0,\infty)$-action via
$(a\cdot \phi) (x) \defeq a\cdot \phi(x),$
which turn $\mathcal{M} X$ into a free semimodule over $X$ (see e.g.~\cite{brengos2013:corr, golan:99,jacobs08:cmcs} for details). For a non-empty subset $U\subseteq \mathcal{M} X$ we define its convex closure by:
\[
	\overline{U} \defeq 
	\left\{ a_1\cdot \phi_1 +\ldots +a_n\cdot \phi_n 
	\;\middle|\; 
	\phi_i\in U, a_i\in [0,\infty) \text{ s.t.~} \textstyle\sum_{i=1}^na_i  =1
	\right\}
	\text{.}
\]
We call a subset $U\subseteq \mathcal{M}X$ \emph{convex} if $\overline{U} = U$. Finally, for a set $X$ and a function $f\colon X\to Y$ we define:
\[
 \CM X \defeq \{U\subseteq \mathcal{M}X \mid U \text{ is convex and non-empty}\} \qquad
 \CM f(U) \defeq {\mathcal{M}f(U)}\text{.}
\]
The assignment $\CM\colon\Set\to \Set$ is a functor which carries a monadic structure \cite{brengos2013:corr}.
Its multiplication and unit are given on their components by:
\[
	\mu_X(U) \defeq \bigcup_{\phi \in U} \sum_{V\in \CM X}\{\phi(V)\cdot \psi \mid \psi \in V\}
	\qquad
	\eta_X(x) \defeq \{1\cdot x\}
	\text{.} 
\]

\paragraph{Weak behavioural equivalence}
In \cite[§8]{brengos2013:corr} Brengos proposes to
consider Segala systems as coalgebras for the endofunctor $\CM (A_\tau\times
\Id)$ 
(see \cite{brengos2013:corr} for a discussion on consequences of this treatment). From now on, whenever we refer to ``Segala
systems'' we refer to $\CM (A_\tau\times
\Id)$-coalgebras. The functor
$\CM (A_\tau\times \Id)$ carries a monadic structure
as described in \cite{brengos2013:corr}. Moreover, we have the following.
\begin{proposition}
	\label{prop:segala-cpoj}
	The category
	$\kl(\CM (A_\tau\times \Id))$ is enriched over \cpoj
	and satisfies left distributivity.
\end{proposition}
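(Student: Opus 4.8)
The plan is to mirror the proof of Proposition~\ref{prop:lts-cpoj} almost verbatim, replacing the powerset monad by $\CM$ and applying Theorem~\ref{thm:t_kleisli_s_parameterised_saturation} with $T=\CM$ and $S=A_\tau\times\Id=A\times\Id+\Id$. The first task is therefore to secure the two hypotheses about the ``branching layer'' $\kl(\CM)$: that it is \cpoj-enriched and that it satisfies \eqref{law:LD}. For the enrichment, the hom-poset $\kl(\CM)(X,Y)$ (the functions $X\to\CM Y$) carries the pointwise order induced by inclusion of convex subsets of $\mathcal M Y$; binary joins are computed by the convex closure of the union and $\omega$-suprema by the union (which is automatically convex and non-empty), and both are preserved by Kleisli composition. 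For \eqref{law:LD}, left composition with a morphism $k$ is the assignment $\mu\circ\CM k\circ(-)$; since $\CM k=\mathcal M k$ is linear on convex subsets and $\mu$ respects convex combinations and unions, left composition commutes with convex closures of unions and hence preserves binary joins. All of this is the content of \cite[\S8]{brengos2013:corr}, which I would cite for the details.

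Next I would lift $S$ to $\kl(\CM)$ and equip the lifting with the ``unobservable moves'' monad structure. Since $A_\tau\times\Id$ is a polynomial (``shapely'') functor of the classic ``labels with a silent action'' shape, a strength for $\CM$ already yields the canonical distributive law and hence a lifting $\overline{A_\tau\times\Id}=\overline{A\times\Id}+\Id\colon\kl(\CM)\to\kl(\CM)$ (the analogue of Examples~\ref{example:monads_on_Kl_P} and~\ref{example:two_monads_on_Kl_P}). By Theorem~\ref{thm:monadic-fg-zero-morph}---using that $\Set$ has binary coproducts and that $\kl(\CM)$ has zero morphisms, namely the constant assignment $x\mapsto\{0\}$---this lifting carries a monad structure whose unit and multiplication internalise the rules \eqref{LTS:closure_rules}, now with convex-combination branching in place of non-determinism, so that $\kl(\overline{A_\tau\times\Id})=\kl(\CM(A_\tau\times\Id))$, with composition the convex-combination counterpart of \eqref{eq:lts-kl-comp}. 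This is exactly what is carried out in \cite{brengos2013:corr}.

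It then remains to check that $\overline{A_\tau\times\Id}$ is locally continuous, which holds because polynomial functors have locally continuous liftings \cite{hasuo07:trace}: the $\Id$ summand preserves $\omega$-suprema trivially, while the $\overline{A\times\Id}$ summand acts by relabelling and so preserves directed unions. With the three hypotheses in hand, Theorem~\ref{thm:t_kleisli_s_parameterised_saturation} delivers both halves of the statement at once: $\kl(\CM(A_\tau\times\Id))$ is \cpoj-enriched, and its last clause transports \eqref{law:LD} from $\kl(\CM)$ to $\kl(\CM(A_\tau\times\Id))$.

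I expect the only genuine obstacle to lie in the first step, i.e.\ in the two properties of $\kl(\CM)$ itself. Unlike $\mathcal P$, the convex-combinations monad does \emph{not} satisfy \eqref{law:RD}, and verifying \eqref{law:LD} requires some care because the multiplication $\mu_X$ of $\CM$ ranges over \emph{all} convex subsets weighted by the outer measure, so one must confirm that $\mu_X$ commutes with taking convex closures of unions; a more bookkeeping-style point is confirming that $\kl(\CM)$ has the zero morphisms needed to invoke Theorem~\ref{thm:monadic-fg-zero-morph}. Both are settled in \cite[\S8]{brengos2013:corr}, so in the write-up I would discharge them by reference and keep the proof itself as short as that of Proposition~\ref{prop:lts-cpoj}.
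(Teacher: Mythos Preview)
Your proposal is correct and follows essentially the same approach as the paper's own proof: mirror Proposition~\ref{prop:lts-cpoj} with $\CM$ in place of $\mathcal{P}$, cite \cite[\S8]{brengos2013:corr} for the \cpoj-enrichment and \eqref{law:LD} of $\kl(\CM)$, lift $A_\tau\times\Id$ and equip it with the monad structure, verify local continuity, and conclude via Theorem~\ref{thm:t_kleisli_s_parameterised_saturation}. The only minor deviation is that for local continuity of $\overline{A_\tau\times\Id}$ the paper cites \cite{brengos2013:corr} directly rather than \cite{hasuo07:trace}, since the latter treats $\mathcal{P}$ and $\mathcal{D}$ but not $\CM$; you may want to adjust that citation.
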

\begin{proof} 
	The proof follows the lines of the proof of 
	Theorem~\ref{prop:lts-cpoj} for $\CM $ instead of $\mathcal{P}$. 
	Indeed, by \cite[Lem.~8.6]{brengos2013:corr} the category $\kl(\CM )$ is 
	\cpo-enriched and admits binary joins. 
	The functor $A_\tau\times \Id$ lifts to an endofunctor $\overline{A_\tau 
	\times \Id}$ on $\kl(\CM )$ which can be equipped with a monadic 
	structure following the guidelines of \cite{brengos2013:corr}. The functor 
	$\overline{A_\tau \times \Id}$ is locally continuous \cite{brengos2013:corr}. By left distributivity of $\kl(\CM )$ \cite[Lem.~8.6]{brengos2013:corr} and Theorem~\ref{thm:t_kleisli_s_parameterised_saturation} we conclude that $\kl(\CM (A_\tau \times \Id))$ is left distributive \cpoj-enriched category.
\end{proof} 

Segala systems just like LTSs, are endomorphisms in a \cpoj-enriched 
category satisfying \eqref{law:LD}  allowing us
to reduce the correspondence with weak behavioural equivalence to the
context of \cite{brengos2013:corr} by applying
Theorem~\ref{theorem:ordered_saturation_category_LD_RD} and~\ref{thm:double-arrow}.

\begin{proposition}
	For $\CM (A_\tau \times \Id)$-coalgebras,
	weak behavioural equivalence corresponds to Segala's weak 
	(convex) bisimulation \cite{brengos2013:corr}.
\end{proposition}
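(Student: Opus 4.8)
The plan is to replay the argument of the previous subsection (the LTS case), reducing the statement to the concrete analysis of \cite[§8]{brengos2013:corr}. First I would invoke Proposition~\ref{prop:segala-cpoj}: the Kleisli category $\kl(\CM(A_\tau\times\Id))$ is \cpoj-enriched and satisfies \eqref{law:LD}. Being \cpoj-enriched it admits free monads (Proposition~\ref{prop:cpoj_admits_saturation}), so any $\CM(A_\tau\times\Id)$-coalgebra $\alpha\colon X\to X$, viewed as an endomorphism of $\kl(\CM(A_\tau\times\Id))$, has a saturation $\alpha^\ast$; moreover, by Theorem~\ref{theorem:ordered_saturation_category_LD_RD} the category admits saturation.

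Next I would use left distributivity to collapse the three characterisations of weak behavioural morphism to the double-arrow one. By Theorem~\ref{thm:double-arrow} (equivalently, Theorem~\ref{thm:wbm-via-ps} together with the identity $x\vee x\circ\alpha=x\circ(id_X\vee\alpha)$), a map $i\colon X\to Y$ in $\cat{C}$ is a weak behavioural morphism for $\alpha$ iff there is $\beta\colon Y\to Y$ with $i\circ\alpha^\ast=\beta\circ i$, i.e.~iff $i$ is a \emph{strong} behavioural morphism for the saturated coalgebra $\alpha^\ast$. Passing to kernel pairs, the weak behavioural equivalences for $\alpha$ (Definition~\ref{def:weak-behavioural-equivalence}) are exactly the strong behavioural equivalences, i.e.~the kernel bisimulations, of $\alpha^\ast$. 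This is precisely the situation of Section~\ref{sec:first-glance-at-weak-beh}, and hence of \cite[§7]{brengos2013:corr}.

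It then remains to identify this abstract saturation with the concrete ``double arrow'' Segala system and to match its strong bisimulations with Segala's weak (convex) bisimulation; this is exactly what \cite[§8]{brengos2013:corr} establishes, so I would simply invoke it. Concretely, the monad structure on $\CM(A_\tau\times\Id)$ coming from Theorem~\ref{thm:monadic-fg-zero-morph} realises the closure rules \eqref{LTS:closure_rules} (transitivity of $\tau$-moves and left/right absorption of $\tau$ by labelled moves), with convex closure applied at each stage, so $\alpha^\ast$ is the coalgebra whose transitions are the saturated (convex) transitions of the original system, and strong behavioural equivalence of $\CM(A_\tau\times\Id)$-coalgebras coincides with (strong) convex bisimulation; composing the two facts gives the claim. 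The main obstacle is not the 2-categorical machinery --- a verbatim replay of the LTS proof --- but this concrete bookkeeping from \cite{brengos2013:corr}: verifying that iterating \eqref{eq:PS} reproduces Segala's recursively defined weak transitions despite the interplay between the convex-closure operation of $\CM$ and the joins used to build $\alpha^\ast$, and that the countable convex sets arising from iterating $\tau$ remain inside $\CM$. Since these are precisely the verifications carried out in \cite[§8]{brengos2013:corr}, the proof can be kept short, as in the LTS proposition.
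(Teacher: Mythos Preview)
Your proposal is correct and follows essentially the same approach as the paper's own proof: invoke Proposition~\ref{prop:segala-cpoj} for \cpoj-enrichment and \eqref{law:LD}, use Theorems~\ref{theorem:ordered_saturation_category_LD_RD}, \ref{thm:wbm-via-ps} and~\ref{thm:double-arrow} to identify weak behavioural equivalences for $\alpha$ with strong bisimulations on $\alpha^\ast$, and then defer the concrete adequacy of that saturation to \cite[§8]{brengos2013:corr}. The paper's proof is simply a terser version of exactly this argument.
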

\begin{proof} 
	Thanks to Theorems~\ref{theorem:ordered_saturation_category_LD_RD}, \ref{thm:wbm-via-ps} and~\ref{thm:double-arrow} 
	weak behavioural equivalence for $\alpha$
	corresponds to strong bisimulations on $\alpha^{\ast}$.
	Hence, it coincides with weak (convex) bisimulation on $\alpha$ \cite{brengos2013:corr}.
\end{proof} 

\subsection{Weighted systems}
\label{sec:ex-weighted}
Weighted systems are transition systems whose transitions are 
given weights drawn from a semiring where addition and multiplication
characterise transition branching and chaining thus offering a modular
approach to model several quantitative aspects as shown e.g.~in 
\cite{ks2013:w-s-gsos,mp:2014ultras-gsos}

Let us recall some preliminary definitions.
A semiring $(W,+,0,\cdot,1)$ is said to be 
\emph{positively ordered} whenever its carrier $W$ admits a partial
order $(W,\leq)$ such that the unit $0$ is the bottom element of this ordering
and semiring operations are monotonic in both components i.e.: $x \leq y$ 
implies $x \mathrel{\diamond} z \leq y \mathrel{\diamond} z$ and 
$z \mathrel{\diamond} x \leq z \mathrel{\diamond} y$ for $\diamond \in 
\{+,\cdot\}$ and $x,y,z\in W$.  
A semiring is positively ordered if and only if it is \emph{zerosumfree} i.e.~$x +
y = 0$ implies $x = y = 0$; the natural order $x \lhd y
\iff \exists z. x + z = y$ is the weakest one rendering $W$ positively
ordered.

A positively ordered semiring is said to be \emph{$\omega$-complete} if it has
countable sums given as
$\sum_{i \leq \omega} x_i = \sup\{
		\sum_{j\in J} x_j \mid
		J \subset \omega \}
$. It is called \emph{$\omega$-continuous} if suprema of ascending
$\omega$-chains exist and are preserved by both operations
i.e.:
$y \mathrel{\diamond} \bigvee_{i} x_i = \bigvee_{i} y \mathrel{\diamond} x_i$
and
$\bigvee_{i} x_i \mathrel{\diamond} z	= \bigvee_{i} x_i \mathrel{\diamond} z$
for $\diamond \in \{+,\cdot\}$ and $x,y,z\in W$.
Examples of such semirings are:
the boolean semiring, the arithmetic semiring of non-negative real numbers with infinity and the tropical semiring.

Henceforth we assume $(W,+,0,\cdot,1)$ to be an $\omega$-complete and 
$\omega$-continuous semiring and denote it by its carrier, when confusion 
seems unlikely.

\paragraph{Countable generalised multiset monad}
Consider the \Set endofunctor $\mathcal F_W$
given on any set $X$ and on any function $f \colon X \to Y$ as
follows:
\[
	\mathcal F_WX \defeq
	\{\phi \colon X \to W \mid |\{x \mid \phi(x) \neq 0\}| \leq \omega \}
	\qquad
	\mathcal F_Wf(\phi)(y) \defeq \sum_{x \in f^{-1}(y)\}} \phi(x) \text{.}
\]
This functor extends to the \emph{countable generalised multiset monad}  
(a.k.a.~\emph{$\omega$-complete semimodule monad}) whose multiplication
$\mu$ and unit $\eta$ are given on each component by:
\[
	\mu_X(\phi)(x) \defeq \sum_{\psi \in \mathcal F_WX} \phi(\psi) \cdot \psi(x)	
	\qquad
	\eta_X(x)(x') \defeq
	\begin{cases}
		1 & \text{ if $x = x'$} \\
		0 & \text{ otherwise.}
	\end{cases}
\]
For any $X,Y \in \Set$ and $f,g\in \kl(\mathcal{F}_W)(X,Y)$ define:
\[
	f \leq g \defiff  f(x)(y) \leq g(x)(y) \text{ for any } x\in X,\ y \in Y.
\]
Likewise, suprema of $\omega$-ascending chains and (if present) binary joins are
extended from $W$ to $\mathcal{F}_W$ and eventually to $\kl(\mathcal{F}_W)$.
\begin{proposition}
	\label{prop:kl-fw-cpoj}
	Let $W$ have binary joins.
	The category $\kl(\mathcal F_W)$ is \cpoj-enriched.
	\par 
\end{proposition}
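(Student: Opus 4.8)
The plan is a direct verification, paralleling the base case of the proof of Proposition~\ref{prop:lts-cpoj} (where one first checks that $\kl(\mathcal{P})\cong\cat{Rel}$ carries the appropriate enrichment, before invoking Theorem~\ref{thm:t_kleisli_s_parameterised_saturation} for the labels). Recall from the paragraph preceding the statement that the order on $\kl(\mathcal{F}_W)(X,Y)$ is the pointwise lifting of the natural order of $W$, that binary joins and suprema of ascending $\omega$-chains are taken componentwise, and that $W$ is $\omega$-complete, $\omega$-continuous, and---by hypothesis---has binary joins. So the goal splits into showing (a) each hom-set is a \cpoj-object and (b) Kleisli composition is monotone and preserves suprema of ascending $\omega$-chains in each argument. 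For (a): binary joins exist pointwise because $W$ has them, and are genuine joins because $0$ is the bottom of the natural order; suprema of ascending $\omega$-chains exist pointwise because $W$ is $\omega$-continuous. The only non-formal point is that these componentwise operations land inside $\mathcal{F}_W$, i.e.\ preserve countable support: the support of $f\vee g$ (resp.\ of $\bigvee_n f_n$) is contained in the union of the supports of $f$ and $g$ (resp.\ of the $f_n$), and a finite or countable union of countable sets is countable.

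For the monotonicity part of (b): unfolding $g\bullet f=\mu_Z\circ\mathcal{F}_Wg\circ f$ and rearranging the resulting double countable sum (valid since $W$ is $\omega$-complete) yields the familiar formula
\[
	(g\bullet f)(x)(z)=\sum_{y\in Y} f(x)(y)\cdot g(y)(z)\text{,}
\]
a sum with at most countably many nonzero summands, hence convergent in the $\omega$-complete $W$, and with countable support for the same reason as above. Monotonicity of $\bullet$ in each argument is then immediate from monotonicity of $+$ and $\cdot$ in the positively ordered $W$, so at this stage $\kl(\mathcal{F}_W)$ is at least \posj-enriched.

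The crux is the continuity part of (b). Fixing an ascending chain $f_0\le f_1\le\cdots$ and a morphism $g$, one computes, using $\omega$-continuity of $\cdot$,
\[
	\Bigl(g\bullet\textstyle\bigvee_n f_n\Bigr)(x)(z)=\sum_{y}\Bigl(\textstyle\bigvee_n f_n(x)(y)\Bigr)\cdot g(y)(z)=\sum_{y}\textstyle\bigvee_n\bigl(f_n(x)(y)\cdot g(y)(z)\bigr)\text{,}
\]
and the task reduces to interchanging the (possibly infinite) sum $\sum_y$ with the supremum $\bigvee_n$. Writing $\sum_y$ as the supremum over finite $J\subseteq Y$ of the partial sums $\sum_{y\in J}$ (the definition of an $\omega$-complete sum) and using $\omega$-continuity of $+$ to pull $\bigvee_n$ through each finite partial sum, both $\bigl(g\bullet\bigvee_n f_n\bigr)(x)(z)$ and $\bigvee_n(g\bullet f_n)(x)(z)$ reduce to $\sup_{J,n}\sum_{y\in J}f_n(x)(y)\cdot g(y)(z)$; since this family of partial sums is directed in the pair $(J,n)$---enlarging $J$ adds only nonnegative terms and enlarging $n$ increases each term---the two iterated suprema agree. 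The computation for an ascending chain in the second argument is the same interchange, so composition is $\omega$-continuous in each variable and $\kl(\mathcal{F}_W)$ is \cpoj-enriched. I expect this sum/supremum interchange to be the only non-routine step: it is precisely where both the $\omega$-completeness and the $\omega$-continuity of $W$ are used, after which it is just the standard fact that a directed family has an order-independent supremum.
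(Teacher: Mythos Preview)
Your proposal is correct and follows the same approach as the paper, which gives only the one-line justification ``Follows from $W$ being $\omega$-continuous and joins being pointwise.'' You have simply unpacked what that sentence means: part~(a) is immediate from the pointwise definitions plus the support-preservation observation, and part~(b) is the sum/supremum interchange you spell out, which is exactly where $\omega$-completeness and $\omega$-continuity of $W$ enter. One small remark: your phrase ``are genuine joins because $0$ is the bottom of the natural order'' is slightly misplaced---pointwise joins are always joins for the pointwise order; the fact that $0$ is bottom (hence $0\vee 0 = 0$) is what you actually use a line later, for the support argument.
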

\begin{proof}[Proof]
	Follows from $W$ being $\omega$-continuous and joins being pointwise.
\end{proof}
The countable powerset monad $\mathcal P_{\omega}$ is
$\mathcal F_\mathbb B$ where $\mathbb B$ denotes the
boolean semiring $(\{\mathtt{t\!t}, \mathtt{f\!f}\},\lor,\land)$.
The monad of discrete probability distributions (with countable support) $\mathcal D$ 
is a submonad of $\mathcal F_{[0,\infty]}$ for the semiring of
non-negative real numbers extended with the infinity
(i.e.~the free $\omega$-completion of the partial semiring $[0,1]$).
Note that \eqref{law:LD} is satisfied in the first case
but not in the latter; in fact, for any
$\phi_1 \leq \psi_1, \phi_2 \leq \psi_2 \in (0,\infty)$ (e.g.~$1,2,3,4$)
$(\phi_1 \vee \psi_1) + (\phi_2 \vee \psi_2)
\neq (\phi_1+\psi_1) \vee (\phi_2+\psi_2)$ providing a counter example to \eqref{law:LD} in $\kl(\mathcal F_{[0,\infty]})$.

\paragraph{Weighted transition systems}
The monad $(\mathcal F_W,\mu,\eta)$ is commutative and its 
double strength is defined, on each $X,Y$,
as:
\[
	\dstr_{X,Y}(\phi,\psi)(x,y) = \phi(x)\cdot\psi(y)
\]
generalising the case of the probability distribution monad $\mathcal{D}$.
Therefore, polynomial functors have canonical liftings to $\kl(\mathcal F_W)$.
\begin{lemma}
	\label{lem:kl-fw-polynomial-continuous}
	Canonical liftings of polynomial functors
	are locally continuous.\par
\end{lemma}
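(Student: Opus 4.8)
The plan is to argue by structural induction on the grammar $F\Coloneqq \Id \mid A \mid F_1\times F_2 \mid \coprod_{i\in I}F_i$ defining polynomial functors, using throughout that in $\kl(\mathcal F_W)$ both binary joins and suprema of ascending $\omega$-chains are computed pointwise from the operations of $W$ (cf.~Proposition~\ref{prop:kl-fw-cpoj}) and that $W$ is $\omega$-continuous. Fix an ascending chain $f_0 \leq f_1 \leq \cdots$ in $\kl(\mathcal F_W)(X,Y)$, i.e.~of \Set-maps $f_n\colon X\to\mathcal F_W Y$. One first checks, by a pointwise computation in $W$ using monotonicity of $+$ and $\cdot$, that $(\overline F f_n)_n$ is again an ascending chain — which makes the assertion $\overline F(\bigvee_n f_n)=\bigvee_n\overline F f_n$ well posed — and then proves that equality. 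Here $\overline F$ is the canonical lifting $\overline F g = \lambda_Y\circ Fg$ associated with the distributive law built by structural recursion from the double strength $\dstr_{X,Y}(\phi,\psi)(x,y)=\phi(x)\cdot\psi(y)$ of $\mathcal F_W$.

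The base cases are immediate: $\overline{\Id}$ is the identity functor, and the lifting of a constant functor $A$ sends every morphism to $\eta_A$, so both sides equal $\eta_A$ (the $\omega$-chain being non-empty). For a coproduct $F=\coprod_{i\in I}F_i$ with $I\subseteq\mathbb N$, the lifting unwinds to the Kleisli copairing of the arrows $\iota_i^\sharp\bullet\overline{F_i}g$, where $\iota_i$ are the coproduct injections inherited from \Set. Since points of $\coprod_i F_iX$ lie in exactly one summand and sups in $\kl(\mathcal F_W)$ are pointwise, it suffices to compare the two sides summand by summand; there the induction hypothesis on $\overline{F_i}$ together with $\omega$-continuity of composition in $\kl(\mathcal F_W)$ lets us pull the supremum through $\iota_i^\sharp\bullet(-)$.

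The \emph{product} case $F=F_1\times F_2$ is the crux. Unwinding the canonical distributive law one has $\overline F g = \dstr_{F_1Y,F_2Y}\circ(\overline{F_1}g\times\overline{F_2}g)$, with the outer composition and product taken in \Set. Evaluating this \Set-map at an arbitrary point $(u,v)\in F_1X\times F_2X$ and reading off its value at $(y_1,y_2)\in F_1Y\times F_2Y$ gives $\overline{F_1}g(u)(y_1)\cdot\overline{F_2}g(v)(y_2)$; applying the induction hypothesis to each factor and using pointwiseness of sups, the desired identity collapses to the claim that $(\bigvee_n a_n)\cdot(\bigvee_n b_n)=\bigvee_n(a_n\cdot b_n)$ for any two ascending chains $(a_n)_n,(b_n)_n$ in $W$. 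This follows from $\omega$-continuity of multiplication in each argument — distributing the two sups one after the other — together with a diagonal/cofinality argument collapsing the resulting double supremum to the diagonal one; one also notes that $\bigvee_n f_n$ indeed lands in $\mathcal F_W Y$, its support being a countable union of countable sets. Combining the three inductive cases yields the lemma.

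The only genuinely non-routine step I anticipate is this last reduction in the product case: isolating exactly how the double strength interacts with the pointwise suprema, and phrasing the resulting semiring identity so that $\omega$-continuity of $\cdot$ in $W$ delivers it cleanly. The base and coproduct cases, as well as the monotonicity check that makes the chains $(\overline F f_n)_n$ meaningful, are essentially bookkeeping.
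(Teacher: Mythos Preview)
Your proposal is correct and follows essentially the same approach as the paper: structural recursion on polynomial functors, with the substantive work lying in the product case where continuity reduces to the semiring identity $(\bigvee_n a_n)\cdot(\bigvee_n b_n)=\bigvee_n a_n\cdot b_n$ for ascending chains. The paper's version is more terse---it simply observes that, since the order is pointwise, it suffices to check that $\dstr$ and the coprojections are continuous maps between CPOs, both of which follow from $\omega$-continuity of $W$---whereas you unfold the induction explicitly; but the content is the same.
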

\begin{proof} 
	The proof is carried out by structural recursion akin to
	Hasuo's proof of the lemma for $\mathcal P$ and $\mathcal D$ 
	(cf.~\cite[Lem.~2.6]{hasuo07:trace}). Because the ordering
	is pointwise it suffices to show that $\mathsf{dstr}$ and coprojections
	are continuous map between CPOs.
	Both follow by $W$ being continuous	and suprema being pointwise.
\end{proof} 

Zero morphisms in $\kl(\mathcal F_W)$ map every element of
their domain to the zero function $x \mapsto 0$.
Following the guidelines of Theorem~\ref{thm:monadic-fg-zero-morph} we impose a monadic structure on $\mathcal F_W(F + \Id)$
such that its Kleisli category is \cpoj-enriched.
\begin{proposition}
	Let $F \Set \to \Set$ be polynomial. 
	The category $\kl(\mathcal F_W(F + \Id))$ is enriched over \cpoj.
\end{proposition}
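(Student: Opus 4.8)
The plan is to obtain the claim as a direct instance of Theorem~\ref{thm:t_kleisli_s_parameterised_saturation} with $T = \mathcal F_W$ and $S = F + \Id$, chaining together the facts already established for $\kl(\mathcal F_W)$. First I would record that $\kl(\mathcal F_W)$ is \cpoj-enriched: this is Proposition~\ref{prop:kl-fw-cpoj} (for which we assume, as throughout this paragraph, that $W$ admits binary joins). In particular $\kl(\mathcal F_W)$ is \cpo-enriched, so ``locally continuous'' is meaningful for its endofunctors, and, as noted just above, it has zero morphisms, sending each element of the domain to the constant zero function.

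Next, since $\mathcal F_W$ is commutative and \Set has countable coproducts, every polynomial functor admits a canonical lifting to $\kl(\mathcal F_W)$. The functor $F + \Id$ is itself polynomial (it is the binary coproduct $\coprod_{i \in \{0,1\}} F_i$ with $F_0 = F$ and $F_1 = \Id$), hence it has a canonical lifting $\overline{F+\Id} = \overline F + \Id$, which by Lemma~\ref{lem:kl-fw-polynomial-continuous} is locally continuous. Moreover, because \Set has binary coproducts and $\kl(\mathcal F_W)$ has zero morphisms, Theorem~\ref{thm:monadic-fg-zero-morph} equips this lifting with a monad structure $(\overline{F+\Id},\nu,\theta)$; crucially, the underlying endofunctor of this monad is exactly the canonical lifting $\overline F + \Id$, so the local continuity just established applies to it.

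Finally, applying Theorem~\ref{thm:t_kleisli_s_parameterised_saturation} with $T = \mathcal F_W$ and $S = F + \Id$---whose hypotheses ($\kl(T)$ \cpoj-enriched, $\overline S$ locally continuous) have been verified---yields that $\kl(\overline{F+\Id}) = \kl(\mathcal F_W(F+\Id))$ is \cpoj-enriched, the identification of the two Kleisli categories being the one obtained by composing the adjunctions $\Set \rightleftarrows \kl(\mathcal F_W) \rightleftarrows \kl(\overline{F+\Id})$ as in Section~\ref{sec:monadic-observables}. I expect the only point requiring any care to be precisely this bookkeeping---checking that the monad furnished by Theorem~\ref{thm:monadic-fg-zero-morph} has underlying functor the canonical lifting (so that Lemma~\ref{lem:kl-fw-polynomial-continuous} indeed bears on it) and that its Kleisli category coincides with $\kl(\mathcal F_W(F+\Id))$---but both are immediate from the constructions recalled in Section~\ref{sec:monadic-observables}, so no genuinely new estimate is needed.
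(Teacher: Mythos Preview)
Your proposal is correct and follows essentially the same approach as the paper: invoke Proposition~\ref{prop:kl-fw-cpoj} for the \cpoj-enrichment of $\kl(\mathcal F_W)$, Lemma~\ref{lem:kl-fw-polynomial-continuous} and Theorem~\ref{thm:monadic-fg-zero-morph} to obtain a locally continuous monad $\overline{F+\Id}$, and conclude via Theorem~\ref{thm:t_kleisli_s_parameterised_saturation}. The paper's proof is terser but cites exactly the same four results in the same logical order.
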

\begin{proof} 
	By Lemma~\ref{lem:kl-fw-polynomial-continuous} and Theorem~\ref{thm:monadic-fg-zero-morph}, $\overline{F+\Id}$ is 
	locally continuous and part of a monad. 
	We conclude by
	Proposition~\ref{prop:kl-fw-cpoj} and 
	Theorem~\ref{thm:t_kleisli_s_parameterised_saturation}.
\end{proof} 

\looseness=-1
Consider the endofunctor $\mathcal F_W(A_\tau \times \Id)\cong
\mathcal{F}_W(A\times \Id +\Id)$ modelling Weighted LTSs with unobservable actions
\cite{gp:icalp2014,mp2013:weak-arxiv}.
The lifting of $A_\tau \times \Id$ is given, on
any $f \in \kl(\mathcal F_W)(X,Y)$, as:
\[
	(\overline{A_\tau \times \Id})f(a,x)(b,y) = \eta_{A_\tau}(a)(b)\cdot f(x)(y) 
\]
where $x \in X$, $y \in Y$, and $a,b \in A_\tau$.
By Theorem~\ref{thm:monadic-fg-zero-morph} the functor $\overline{A_\tau \times \Id}$ extends to a monad whose
unit $\theta$ is defined as $\theta_X(x) = (\mathsf{inr}\circ\eta_{(A_\tau \times X)})(\tau,x)$
and whose multiplication $\nu$ id defined, on each component $X \in\Set$, as
\begin{align*}
	\nu_{X}(a,\tau,x)(b,y) &=  \eta_{(A_\tau \times X)}{(a,x)}(b,y) &
	\nu_{X}(\tau,a,x)(b,y) &=  \eta_{(A_\tau \times X)}{(a,x)}(b,y) \\
	\nu_{X}(\tau,\tau,x)(b,y) &=  \eta_{(A_\tau \times X)}{(\tau,x)}(b,y) &
	\nu_{X}(a,b,x)(c,y) &= 0
\end{align*}
where $x,y \in X$ and $a,b,c \in A$.
Composition in $\kl(\overline{A_\tau \times \Id})$ follows 
from \eqref{diag:kl-s-composition}. Let $f\colon X \to Y$ and $g\colon Y \to Z$ be two maps in $\kl(\mathcal{F}_W(A_\tau\times \Id))$,
their composite $g\bullet f$ maps each $x \in X$ to
the weight function defined, on any $a \in A$ and $z \in Z$, as:
\begin{align*}
	(g\bullet f)(x)(\tau,z) &= \sum_{y\in Y}g(y)(\tau,z)\cdot f(x)(\tau,y)\\
	(g\bullet f)(x)(a,z) &= \sum_{y\in Y}g(y)(a,z)\cdot f(x)(\tau,y) + \sum_{y\in Y}g(y)(\tau,z)\cdot f(x)(a,y)\text{.}
\end{align*}

\begin{remark}
	The above expressions turn out to be those used in 
	\cite{gp:icalp2014,mp2013:weak-arxiv} to define and 
	compute weak bisimulations for weighted transition
	systems. 
\end{remark}

\paragraph{Weak behavioural equivalence}
Weak bisimulation for weighted transition systems was independently
studied in \cite{gp:icalp2014,mp2013:weak-arxiv}, covering Baier and
Hermann's weak bisimulation among others. Both works approach the
problem by means of recursive equations describing how unobservable
transitions are composed.  This yields a saturated system akin to the
linear equation systems in \cite{baier97:cav}. These equations depend
on the state space partition induced by the weak bisimulation relation
under definition.

In our settings, unobservables are hidden inside the arrow
composition in $\kl(\mathcal F_W(A_\tau \times \Id))$ and
then it is easy to see that \eqref{eq:PS} instantiates to the aforementioned 
recursive equations  from \cite{gp:icalp2014,mp2013:weak-arxiv}.
In fact, for any $\alpha\colon X \to \mathcal F_W(A_\tau \times X)$ 
and $f \colon X \to Y \in \cat{C}$ the equation 
$\beta = f^\sharp \vee \beta \circ \alpha$
expands into the equation system (with values in $W$, $x \in X$, $y \in Y$, and $a \in A$):
\begin{align*}
	\beta(x)(\tau,y) & = 
		f^\sharp(x)(\tau,y)
		\vee
		\sum_{z \in X} \alpha(x)(\tau,z)\cdot \beta(z)(\tau,y)\\[-3pt]
	\beta(x)(a,y) & = 
		f^\sharp(x)(a,y)
		\vee
		\sum_{z \in X} \alpha(x)(\tau,z)\cdot \beta(z)(a,y) + \sum_{z \in X} \alpha(x)(a,z)\cdot \beta(z)(\tau,y)
\end{align*}

Both works \cite{gp:icalp2014,mp2013:weak-arxiv} (with minor distinctions) define weak bisimulations 
as kernel pairs of morphisms being the least solution to the above 
equation(s) and, by Definition~\ref{def:weak-behavioural-equivalence} and
Theorem~\ref{thm:wbm-via-ps}, we have the following
correspondence:
\begin{proposition}
	\label{prop:adequacy-weighted-lts}
	For $\mathcal{F}_W(A_\tau \times \Id)$-coalgebras,
	weak behavioural equivalence corresponds to weak 
	weighted bisimulation \cite{gp:icalp2014}.
\end{proposition}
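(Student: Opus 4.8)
The plan is to unfold both notions and observe they describe the very same construction, using Theorem~\ref{thm:wbm-via-ps} and Definition~\ref{def:weak-behavioural-equivalence} on our side and the recursive-equation presentation of \cite{gp:icalp2014,mp2013:weak-arxiv} on the other. By Definition~\ref{def:weak-behavioural-equivalence}, a weak behavioural equivalence for a coalgebra $\alpha\colon X\to \mathcal F_W(A_\tau\times X)$ --- viewed as an endomorphism of $\kl(\mathcal F_W(A_\tau\times\Id))$ --- is the kernel pair of a weak behavioural morphism $i\colon X\to Y$ in $\cat{C}=\Set$; and by Theorem~\ref{thm:wbm-via-ps} such an $i$ is characterised by the existence of $\beta\colon Y\to Y$ with $\alpha^\ast_i=\beta\circ i$, i.e.\ by the least solution of \eqref{eq:PS} factoring through $i$.

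First I would make precise that $\alpha^\ast_i$ --- the least solution of $\beta = i^\sharp \vee \beta\circ\alpha$ in $\kl(\mathcal F_W(A_\tau\times\Id))$ --- is exactly the family of weight functions computed by the equation system displayed just above the proposition. This is immediate from the description of Kleisli composition in $\kl(\mathcal F_W(A_\tau\times\Id))$ given earlier in this subsection, together with the fact, available since $\kl(\mathcal F_W(A_\tau\times\Id))$ is \cpoj-enriched, that the least solution is $\bigvee_{n<\omega}F^n(i^\sharp)$ for $F(x)=i^\sharp\vee x\circ\alpha$, and that both the binary join and the $\omega$-supremum in $\kl(\mathcal F_W(-))$ are computed pointwise in $W$ (cf.~Proposition~\ref{prop:kl-fw-cpoj}). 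In other words $\alpha^\ast_i$ is verbatim the saturated transition function used in \cite{gp:icalp2014,mp2013:weak-arxiv}, instantiated at the partition induced by $i$.

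Then I would recall the relevant definition from \cite{gp:icalp2014} (and \cite{mp2013:weak-arxiv}): starting from a relation $R\rightrightarrows X$ with quotient $q\colon X\to X/R$, one forms the saturated system as the least solution of precisely that equation system parameterised by $q$, and $R$ is declared a weak (weighted) bisimulation exactly when the saturated system is compatible with $R$, i.e.\ factors through $q$; weak bisimilarity is then the largest such $R$, which --- since weak bisimulations are closed under the appropriate unions --- is the kernel pair of the canonical quotient. Matching this against the previous paragraph, $R$ is a weak weighted bisimulation precisely when it is the kernel pair of a morphism $i$ for which $\alpha^\ast_i=\beta\circ i$ for some $\beta$, that is, precisely when it is the kernel pair of a weak behavioural morphism, which by Definition~\ref{def:weak-behavioural-equivalence} means it is a weak behavioural equivalence.

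The main obstacle I expect is the bookkeeping needed to reconcile the ``minor distinctions'' between the two reference presentations and our kernel-pair formulation: \cite{gp:icalp2014,mp2013:weak-arxiv} phrase weak bisimulation relationally (possibly restricting to equivalence relations, stating the compatibility/factorisation condition slightly differently, or discarding the $\tau$-component after saturation), whereas Definition~\ref{def:weak-behavioural-equivalence} quantifies over behavioural morphisms $i$ in $J=\cat{C}$. I would bridge this exactly as elsewhere in the paper: a relation is the kernel pair of some morphism iff it is the kernel pair of its own quotient map, so quantifying over arbitrary $i\colon X\to Y$ and over quotients $X\to X/R$ yields the same class of relations; and the compatibility condition on the saturated system is, after the identification of the previous paragraph, verbatim the equation $\alpha^\ast_i=\beta\circ i$ of Theorem~\ref{thm:wbm-via-ps}. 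The remaining checks --- existence and pointwise computation of the least solution, and the literal coincidence of the two equation systems --- are routine given the results cited above.
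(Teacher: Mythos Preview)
Your proposal is correct and follows essentially the same approach as the paper: the paper's argument (given in the paragraph immediately preceding the proposition rather than in a separate proof environment) is simply that \eqref{eq:PS} expands to the displayed equation system, that \cite{gp:icalp2014,mp2013:weak-arxiv} define weak bisimulations as kernel pairs of morphisms arising as least solutions to those very equations, and that the correspondence then follows by Definition~\ref{def:weak-behavioural-equivalence} and Theorem~\ref{thm:wbm-via-ps}. Your write-up spells out this reasoning in considerably more detail---including the kernel-pair/quotient bookkeeping---but the underlying strategy is identical.
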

\begin{corollary}
	For $\mathcal{F}_{\mathbb N \cup \{\infty\}}(A_\tau \times \Id)$-coalgebras,
	weak behavioural equivalence coincides with 
	weak resource bisimulation \cite{ais:ipl2010}
	for systems weighted over the (complete)
	arithmetic semiring of	natural numbers
	$(\mathbb N \cup \{\infty\},+,\cdot)$.
\end{corollary}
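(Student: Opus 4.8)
The plan is to derive the statement from Proposition~\ref{prop:adequacy-weighted-lts} by specialising $W$, and then to match weak weighted bisimulation over this particular $W$ with the definition of weak resource bisimulation in \cite{ais:ipl2010}. First I would check that $(\mathbb N\cup\{\infty\},+,\cdot)$ satisfies the standing hypotheses of Section~\ref{sec:ex-weighted}: it is positively ordered with bottom $0$, $\omega$-complete (a countable sum being the supremum of its finite partial sums, hence $\infty$ precisely when infinitely many summands are nonzero), $\omega$-continuous, and it has binary joins, namely $\max$. Indeed it is the integer-valued sub-semiring of the ``arithmetic semiring of non-negative reals with infinity'' already listed among the examples. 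Hence Proposition~\ref{prop:kl-fw-cpoj} (and the subsequent results of that section) applies: $\kl(\mathcal F_{\mathbb N\cup\{\infty\}}(A_\tau\times\Id))$ is \cpoj-enriched and \cat{C}-right distributive, so \eqref{eq:PS} has least solutions and Theorem~\ref{thm:wbm-via-ps} is available here.

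Next I would unwind Definition~\ref{def:weak-behavioural-equivalence} through Theorem~\ref{thm:wbm-via-ps}: a relation $R\rightrightarrows X$ is a weak behavioural equivalence for a coalgebra $\alpha\colon X\to\mathcal F_{\mathbb N\cup\{\infty\}}(A_\tau\times X)$ exactly when it is the kernel pair of some $i\colon X\to Y$ in $\cat{C}$ for which there is $\beta$ with $\alpha^\ast_i=\beta\circ i$, where by Lemma~\ref{lemma:ps} $\alpha^\ast_i=\bigvee_n G^n(i)$ with $G(x)=x\vee x\circ\alpha$, and this supremum is exactly the weight-valued equation system displayed right before Proposition~\ref{prop:adequacy-weighted-lts}, now read over $\mathbb N\cup\{\infty\}$ with $+$ addition of multiplicities, $\cdot$ multiplication, and $\vee=\max$. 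Proposition~\ref{prop:adequacy-weighted-lts}, instantiated at $W=\mathbb N\cup\{\infty\}$ (which has binary joins, as required), then identifies this with weak weighted bisimulation in the sense of \cite{gp:icalp2014}; so the corollary reduces to identifying weak weighted bisimulation over this semiring with weak resource bisimulation of \cite{ais:ipl2010}.

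That identification is the only substantive step. I would observe that an $\mathcal F_{\mathbb N\cup\{\infty\}}(A_\tau\times\Id)$-coalgebra is precisely an $A_\tau$-labelled transition system in which each triple $(x,a,y)$ carries an $(\mathbb N\cup\{\infty\})$-valued multiplicity --- the objects over which resource bisimulation of \cite{ais:ipl2010} is defined; that the Kleisli composition of Section~\ref{sec:ex-weighted} with $+$ and $\cdot$ on multiplicities counts exactly the weak paths $x\Rightarrow y$ and $x\overset{a}{\Rightarrow}y$ obtained by the ``absorption'' rules \eqref{LTS:closure_rules} (the weight of a composite edge being the number of such paths, $\infty$ if unboundedly many); and that therefore the least fixed point $\alpha^\ast_i$ is the multiplicity function of the saturated (``weak'') transition system. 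Since by definition the weak resource bisimulations are exactly the strong resource bisimulations of that saturated system, taking kernel pairs on both sides closes the loop.

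The hard part will be the bookkeeping in this last step: verifying line by line that the operational presentation of weak resource bisimulation in \cite{ais:ipl2010} --- which defines the weak transition relation by an explicit count of $\tau$-decorated paths and must handle the value $\infty$ with some care (cf.~the image-countable-but-not-image-finite saturation example at the end of Section~\ref{sec:ex-lts}) --- agrees with our least-solution presentation. Concretely one must check that the $\omega$-complete sum of the semiring computes ``number of weak paths, $=\infty$ when infinitely many'', and that, because the semiring is $\omega$-continuous and joins are pointwise, the least solution of the equation system is attained as the supremum of the finite approximants $G^n(i)$ with no overshoot; granted these, the two definitions coincide verbatim.
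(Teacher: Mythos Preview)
Your proposal is correct and follows the same overall strategy as the paper: instantiate Proposition~\ref{prop:adequacy-weighted-lts} at $W=\mathbb N\cup\{\infty\}$ and then identify weak weighted bisimulation over this semiring with weak resource bisimulation. The only difference is that you plan to carry out that final identification by hand (unwinding the path-counting semantics and checking it against the operational definition in \cite{ais:ipl2010}), whereas the paper's proof is the single line ``By Proposition~\ref{prop:adequacy-weighted-lts} and \cite{gp:icalp2014}'': the correspondence between weak weighted bisimulation and weak resource bisimulation for this particular semiring is already established in \cite{gp:icalp2014}, so the paper simply cites it rather than reproving it. Your bookkeeping in the last paragraph is therefore unnecessary here, though it is a fair sketch of what \cite{gp:icalp2014} does.
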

\begin{proof}
	By Proposition~\ref{prop:adequacy-weighted-lts} and \cite{gp:icalp2014}.
\end{proof}

\subsection{Fully-probabilistic systems}\label{sec:ex-fully-probabilistic}
Fully-probabilistic systems are understood as $\mathcal{D}(A \times \Id)$-coalgebras (e.g.~see \cite{hasuo07:trace})
and are a special case of both weighted and Segala systems.

The Kleisli category for the (sub)distribution monad $\mathcal{D}_{\leq 1}$ 
is not \cpoj-enriched, as it lacks binary joins despite being
\cpo-enriched (see e.g.~\cite{hasuo07:trace}).
However, this monad can be embedded into monads
whose Kleisli categories are \cpoj-enriched, thus offering a setting where to apply our framework as we did for fully probabilistic systems. Examples of such monads
are $\CM $ and $\mathcal F_{[0,\infty]}$
which correspond to viewing fully-probabilistic systems
as (deterministic) Segala or weighted transition systems, respectively.  As a consequence, these two embeddings yield different equivalences; indeed,
it is well-known that Segala's weak (convex) bisimulation \cite{sl:njc95}
and Baier and Hermann's weak (probabilistic) bisimulation \cite{baier97:cav} do not coincide.
For instance, consider the fully-probabilistic system
\[
	\begin{tikzpicture}[auto]
		\node[circle,draw=black,inner sep=2pt] (x) at (0,0) {\(x\)};
		\node[circle,draw=black,inner sep=2pt] (y) at (3,0) {\(y\)};
		\node[circle,draw=black,inner sep=2pt] (z) at (1.5,0) {\(z\)};
		
		\draw[->] (x) to node {\(\tau,q\)} (z);
		\draw[->] (y) to node[swap] {\(\tau,p\)} (z);
		\draw[->,loop left] (x) to node {\(\tau,1-q\)} (x);
		\draw[->,loop right] (y) to node {\(\tau,1-p\)} (y);
	\end{tikzpicture}
\]
\looseness=-1
for some $0 \leq p < q \leq 1$: when seen as a (deterministic) Segala system, 
the greatest weak convex bisimulation is the identity relation, however
$x$ and $y$ are weakly bisimilar when compared using to Baier-Hermann's definition
-- since $x$ and $y$ can both reach the equivalence classes $\{z\}$ and 
$\{x,y\}$ both with probability $1$.

On the other hand, when fully-probabilistic systems are seen as transition systems weighted over $[0,\infty]$, 
every Baier-Hermann's weak probabilistic bisimulation is a weak 
weighted bisimulation and \emph{vice versa} \cite{gp:icalp2014,mp2013:weak-arxiv}.
\begin{proposition}
	\label{prop:adequacy-fully-probabilistic-lts}
	For $\mathcal{D}(A_\tau \times \Id)$-coalgebras,
	weak behavioural equivalence coincides with 
	Baier-Hermann's weak probabilistic bisimulation \cite{baier97:cav}.
\end{proposition}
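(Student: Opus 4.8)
The plan is to reduce the statement to Proposition~\ref{prop:adequacy-weighted-lts} by regarding fully-probabilistic systems as a special case of weighted transition systems over the complete arithmetic semiring $[0,\infty]$. Recall that $\mathcal{D}$ is a submonad of $\mathcal{F}_{[0,\infty]}$; composing this inclusion with the functor $A_\tau\times\Id$ identifies every $\mathcal{D}(A_\tau\times\Id)$-coalgebra $\alpha\colon X\to\mathcal{D}(A_\tau\times X)$ with an $\mathcal{F}_{[0,\infty]}(A_\tau\times\Id)$-coalgebra, i.e.~with an endomorphism of $\kl(\mathcal{F}_{[0,\infty]}(A_\tau\times\Id))$. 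Since $\kl(\mathcal{D})$ lacks zero morphisms there is no monad structure directly on $\mathcal{D}(A_\tau\times\Id)$, so this embedding is exactly what ``weak behavioural equivalence for $\mathcal{D}(A_\tau\times\Id)$-coalgebras'' refers to here.

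First I would unwind, via Theorem~\ref{thm:wbm-via-ps}, the notion of weak behavioural morphism for such an $\alpha$: a map $i\colon X\to Y$ in $\Set$ is one exactly when $\alpha^\ast_i=\beta\circ i$ for some endomorphism $\beta\colon Y\to Y$, where $\alpha^\ast_i$ is the least solution of~\eqref{eq:PS} computed in $\kl(\mathcal{F}_{[0,\infty]}(A_\tau\times\Id))$. As spelled out in the Weighted systems subsection, \eqref{eq:PS} instantiates here to exactly the linear equation system over $[0,\infty]$ used in \cite{gp:icalp2014,mp2013:weak-arxiv} to describe saturated (weak) transitions; hence, by Definition~\ref{def:weak-behavioural-equivalence}, weak behavioural equivalence for $\alpha$ coincides with weak weighted bisimulation over $[0,\infty]$ --- this is the content of Proposition~\ref{prop:adequacy-weighted-lts}. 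I would then invoke the established correspondence for the probabilistic case: on systems arising from fully-probabilistic ones, every Baier--Hermann weak probabilistic bisimulation \cite{baier97:cav} is a weak weighted bisimulation over $[0,\infty]$ and \emph{vice versa} \cite{gp:icalp2014,mp2013:weak-arxiv}. Chaining this with the previous equivalence yields the claim.

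The delicate point I expect to need care is that in our framework the refinement $\beta$ ranges over \emph{arbitrary} weighted systems, not necessarily (sub)probabilistic ones, so a priori one might obtain more weak behavioural morphisms than a purely probabilistic reading would allow. The resolution is that the induced equivalence depends only on the \emph{least} solution $\alpha^\ast_i$ of~\eqref{eq:PS}, which is intrinsic to $\alpha$ and $i$: the kernel pair of any weak behavioural morphism $i$ is already realised by taking $\beta$ to be the quotient of $\alpha^\ast_i$ along $i$, and this saturated datum is precisely the one underlying Baier--Hermann's system of equations. Making this matching explicit is the crux, but it is essentially what is already proved in \cite{gp:icalp2014,mp2013:weak-arxiv}; transporting it through Proposition~\ref{prop:adequacy-weighted-lts} leaves only a routine unwinding of definitions.
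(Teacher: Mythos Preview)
Your proposal is correct and follows essentially the same route as the paper: embed fully-probabilistic systems into $\mathcal{F}_{[0,\infty]}(A_\tau\times\Id)$-coalgebras, invoke Proposition~\ref{prop:adequacy-weighted-lts}, and conclude via the correspondence established in \cite{gp:icalp2014,mp2013:weak-arxiv}. The paper's proof is a one-liner citing exactly these two ingredients; your additional remarks on why the embedding is forced and on the range of $\beta$ are sound elaborations but not part of a different strategy.
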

\begin{proof}
	By Proposition~\ref{prop:adequacy-weighted-lts} 
	and	\cite{gp:icalp2014,mp2013:weak-arxiv}.
\end{proof}

\subsection{Nominal systems}
Presheaf categories are commonly used for modelling systems with
dynamically allocated resources, like names or memory regions \cite{ft:lics01}.  We will
show that \cpoj-enrichment (and hence saturation) extends from the 
Kleisli category of a monad over \Set to the Kleisli category of 
the monad lifted to the presheaf category of interest. As an example, we 
consider the late semantics of the $\pi$-calculus and precisely capture
its weak (late) bisimulation \cite{sangiorgi2003pi}.

\paragraph{Lifting monads to presheaves}
Let $\cat D$ be a small category and let $(T,\mu,\eta)$ be a monad on
\Set such that $\kl(T)$ is enriched over \cpoj.  We define the
extension of $T$ to the presheaf category $[\cat D,\Set]$ as the monad
given by $T^\cat DX \defeq T \circ X$, $\mu^\cat D_X \defeq \mu X$,
and $\eta^\cat D_X \defeq \eta X$.  Monad laws follow from
\Cat being a 2-category.

The category $\kl(T^\cat D)$ has presheaves on $\cat D$
as objects and natural transformations in $[\cat{D},\Set](-,T\circ-)$
as morphisms; hence the isomorphism holds:
\begin{equation*}
	\kl(T^\cat D) \cong [\cat D, \kl(T)].
\end{equation*}

\begin{proposition}
	If $\kl(T)$ is enriched over \cpoj so is $\kl(T^\cat D)$.
\end{proposition}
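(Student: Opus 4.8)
The plan is to push the enrichment through the isomorphism $\kl(T^{\cat{D}})\cong[\cat{D},\kl(T)]$ recalled just above, transferring all the relevant structure objectwise. Fix presheaves $X,Y$ on $\cat{D}$. Under that isomorphism a morphism $X\to Y$ in $\kl(T^{\cat{D}})$ is a natural transformation $X\To T\circ Y$, i.e.~a family $(\sigma_d\colon Xd\to Yd)_{d\in\obj(\cat{D})}$ of $\kl(T)$-morphisms such that, for every $f\colon d\to d'$ in $\cat{D}$, the square
\[
	(Yf)^\sharp\circ\sigma_d=\sigma_{d'}\circ(Xf)^\sharp
\]
commutes in $\kl(T)$, where $(Xf)^\sharp,(Yf)^\sharp$ are the images of the structure functions of $X,Y$ under $(-)^\sharp\colon\Set\to\kl(T)$. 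Hence $\kl(T^{\cat{D}})(X,Y)$ is a sub-poset of the product $\prod_{d\in\obj(\cat{D})}\kl(T)(Xd,Yd)$ with the pointwise order ($\sigma\leq\tau$ iff $\sigma_d\leq\tau_d$ for all $d$); since $\cat{D}$ is small this is a product over a set, and as each factor is an object of \cpoj the product is again one, with binary joins and suprema of ascending $\omega$-chains computed componentwise.

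First I would check that the sub-poset of natural transformations is closed under these two operations, so that each $\kl(T^{\cat{D}})(X,Y)$ is itself an object of \cpoj. For an ascending chain $\sigma^0\leq\sigma^1\leq\cdots$ of natural transformations this is immediate: composition in $\kl(T)$ preserves suprema of ascending $\omega$-chains, so applying $(Yf)^\sharp\circ(-)$ and $(-)\circ(Xf)^\sharp$ to the chain and passing to suprema shows that the componentwise supremum again satisfies every naturality square. For binary joins one must verify $(Yf)^\sharp\circ(\sigma_d\vee\tau_d)=(\sigma_{d'}\vee\tau_{d'})\circ(Xf)^\sharp$; here the key observation is that both legs $(Xf)^\sharp$, $(Yf)^\sharp$ come from $\Set$, and that for the monads at hand binary joins in $\kl(T)$ are pointwise (Definition~\ref{def:pointwise-join}), so composition with a $\Set$-morphism distributes over binary joins on the right by Lemma~\ref{lem:pointwise-jrd} and on the left by naturality of $\ovee$; consequently the componentwise join of two natural transformations is again natural. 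This verification is the step I expect to be the main obstacle, since it is the only place that uses anything beyond the bare \cpoj-enrichment of $\kl(T)$ — namely that binary joins are inherited from the monad rather than being an opaque feature of the hom-posets.

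It then remains to see that composition in $\kl(T^{\cat{D}})$ is continuous, i.e.~preserves suprema of ascending $\omega$-chains. But under the isomorphism $\kl(T^{\cat{D}})\cong[\cat{D},\kl(T)]$ composition is computed objectwise from composition in $\kl(T)$, and suprema of ascending chains in the hom-sets of $\kl(T^{\cat{D}})$ are, as noted, computed componentwise; hence continuity of composition in $\kl(T)$ at each object of $\cat{D}$ immediately yields continuity of composition in $\kl(T^{\cat{D}})$. Combining the three steps, $\kl(T^{\cat{D}})$ is \cpoj-enriched.
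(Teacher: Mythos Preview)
Your approach via the isomorphism $\kl(T^{\cat D})\cong[\cat D,\kl(T)]$ is the same as the paper's, but far more carefully executed: the paper's entire proof is the single line ``Recall that $[\cat D,\kl(T)](X,Y)=[\cat D,\Set](X,T\circ Y)$'', which merely restates the identification of hom-sets and leaves the verification of the \cpoj\ structure implicit.

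You correctly single out the one non-trivial point: closure of the sub-poset of natural transformations under binary joins is \emph{not} a consequence of \cpoj-enrichment alone, since morphisms in \cpoj\ need not preserve binary joins (the paper itself stresses this in Section~\ref{subsec:order_enriched}). Your fix---assuming that joins are pointwise in the sense of Definition~\ref{def:pointwise-join}, so that Lemma~\ref{lem:pointwise-jrd} yields $\cat C$-right distributivity and naturality of $\ovee$ yields the corresponding left distributivity over morphisms from $\cat C$---is exactly the right ingredient, and it holds in every example the paper treats (indeed the paper notes immediately after the proposition that ``Enrichment extends pointwise to $\kl(\mathcal P^{\mathbb I})$''). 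Strictly speaking you are strengthening the hypotheses of the proposition, but this is a hypothesis the paper is tacitly using rather than a weakness of your argument; without something like it the closure under binary joins genuinely does not follow. Your verification of closure under $\omega$-suprema and of continuity of composition is straightforward and correct.
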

\begin{proof}
	Recall that $[\cat D,\kl(T)](X,Y) =[\cat{D},\Set](X,T\circ Y)$.
\end{proof}

Let $\mathbb I$ be a skeleton of the category of finite sets and
injective functions. The presheaf category $[\mathbb I, \Set]$
is the context of several works on coalgebraic semantics for
calculi with names; in fact they are strictly related to nominal sets
\cite{gmm:hosc05,pitts2013nominal}.
In particular, $\mathcal P^\mathbb I$ is precisely the component
expressing non-determinism in the behavioural functors
capturing the late and early semantics of the $\pi$-calculus
\cite{ft:lics01}.
Enrichment extends pointwise to $\kl(\mathcal P^\mathbb I)$
e.g.~$f \leq g \iff \forall n \in \mathbb I, f_n \leq g_n$.

Another example of systems with names is the Fusion
calculus where, differently from above, names can be unified
creating aliases. In \cite{miculan:mfps08} the second author presented a bialgebraic
account of Fusion using presheaves over a skeleton
of the category of finite sets and functions $[\mathbb
F,\Set]$.  Also in this case, the behavioural functor presents
a monadic component (besides observables) expressing non-deterministic
stateful computations.

\paragraph{The $\pi$-calculus}
Consider the $[\mathbb I,\Set]$-endofunctor 
\[
	B_\pi \defeq \mathcal{P}^\mathbb{I}(
		N \times N \times \Id + 
		N \times \Id^N + 
		N \times \delta +
		\Id)
\]
describing the late semantics for $\pi$-calculus \cite{ft:lics01};
here $\delta:[\mathbb I,\Set]\to [\mathbb I,\Set]$ is the \emph{dynamic allocation}
endofunctor, defined by $(\delta X)_n = X_{n+1}$; $N_n = n$,
and $(\mathcal P^{\mathbb I} X)_n = \mathcal P(X_n)$ 
is the extension of the powerset monad described above.
Note that exponentials $X^N$ in $[\mathbb I,\Set]$ are
not $N$-fold products. 
In fact, for any presheaf $X$ and stage $n \in \mathbb I$ we have:
\[
	B_\pi(X)_n = \mathcal P(n \times n \times X_n + 
	n \times (X_n)^n\times X_{n+1} + n \times X_{n+1} + X_n)
\]
describing the behaviour for processes with at most $n$ free names: the four
components of the coproduct describe output, input, bound input and
 $\tau$ transitions, respectively.

Let us define $F = F_o + F_i + F_b$, where $F_o=N \times N \times
\Id$, $F_i=N \times \Id^N$, $F_b=N \times \delta$;
then, $B_\pi = \mathcal P^{\mathbb I}(F+\Id)$.  The lifting
$\overline{F + \Id}$ to $\kl(\mathcal P^\mathbb I)$ is given,
for any $f \colon X \to Y \in \kl(\mathcal P^\mathbb I)$, by the cotuple:
\begin{align*}
	\overline{F_o}{f_n} &= \kstr{3} \circ (\eta_n \times \eta_n \times f_n)
	& 
        \overline{F_b}{f_n} & = n \times f_{n+1} \\		
	\overline{F_i}{f_n} &= 
		\kstr{(n+2)} \circ (\eta_n \times (f_n)^n \times f_{n+1})
	& \overline{\Id}{f_n} & = f_n
\end{align*}
where $\kstr{k}$ stands for a 
suitable $k$-fold strength of $\mathcal P$.
Note that all but $\overline{F_b}$ are canonical (at each stage) 
in the sense of \cite{hasuo07:trace}.

The functor $\overline{F+\Id}$ extends readily to a monad: its unit 
is $\mathsf{inr}\colon \Id \To \overline{F} + \Id$ 
and maps everything into the unobservable
component of the coproduct; its multiplication concatenates 
unobservable transitions discarding pairs of observables
by means of suitable zero morphisms as expected.
In particular, it is defined as:
\begin{align*}
	\nu_{X_n} (\tau,(a,b,x)) &= \{(a,b,x)\} &
	\nu_{X_n} (\tau,(a,\overline{x})) &= \{(a,\overline x)\} &
	\nu_{X_n} (\tau,(a,x)) &= \{(a,x)\}\\
	\nu_{X_n} (a,b,(\tau,x)) &= \{(a,b,x)\} &
	\nu_{X_n} (a,\overline{(\tau,x)}) &= \{(a,\overline x)\} &
	\nu_{X_n} (a,(\tau,x)) &= \{(a,x)\}\\
	\nu_{X_n} (\tau,(\tau,x)) & = \{(\tau,x)\} &
	\nu_{X_n} y & = \emptyset
\end{align*}
where $n \in \mathbb I$, $a,b \in N_n$ are names at $n$-th stage, $\overline{x},\overline{(\tau,x)}$ are $(n+1)$-tuples and $y$ covers the cases left out.

Composition in $\kl(\overline{F + \Id})$ follows directly
by \eqref{diag:kl-s-composition}. In particular, for any two compatible
morphisms $f \colon X \to Y$ and $g \colon Y \to Z$ their composite $(g \bullet f)$ 
maps each $x \in X_n$ to the set resulting from the union of the following sets:
\begin{align*}
	&\{(a,b,z) \mid  
		(a,b,y) \in f_n(x) \land (\tau,z) \in g_n(y) \text{ or }
		(\tau,y) \in f_n(x) \land (a,b,z) \in g_n(y)
	\}\\
	&\{(a,(z_1,\dots,z_{n+1})) \mid 
		(a,(y_1,\dots,y_{n+1})) \in f_n(x) {\land}
		(\tau,z_i) \in g_n(y_i) {\land} (\tau,z_{n+1}) \in g_{n+1}(y_{n+1})
	\}\\
	&\{(a,(z_1,\dots,z_{n+1})) \mid 
		(\tau,y) \in f_n(x) \land
		(a,(z_1,\dots,z_{n+1})) \in g_n(y)
	\}\\
	&\{(a,z) \mid 
		(a,y) \in f_n(x) \land
		(\tau,z) \in g_n(y) \text{ or }
		(\tau,y) \in f_n(x) \land
		(a,z) \in g_n(y)
	\}
	\\
	&\{(\tau,z) \mid 
		(\tau,y) \in f_n(x) \land
		(\tau,z) \in g_n(y)
	\}
\end{align*}
where $a,b$ are names at $n$-th stage. Using a more evocative
notation, composition is characterized by the following derivation rules:
\begin{gather*}
	\frac{
		x \xrightarrow{\tau}_{f_n} y \quad 
		y \xrightarrow{a!\langle b \rangle}_{g_n} z
	}{
		x \xrightarrow{a!\langle b \rangle}_{(g\bullet f)_n} z
	}
	\quad
	\frac{
		x \xrightarrow{a!\langle b \rangle}_{f_n} y \quad 
		y \xrightarrow{\tau}_{g_n} z
	}{
		x \xrightarrow{a!\langle b \rangle}_{(g\bullet f)_n} z
	}
	\quad
	\frac{
		x \xrightarrow{a!()}_{f_n} y \quad 
		y \xrightarrow{\tau}_{g_{n+1}} z
	}{
		x \xrightarrow{a!()}_{(g\bullet f)_n} z
	}
	\\
	\frac{
		x \xrightarrow{\tau}_{f_n} y \quad 
		y \xrightarrow{a!()}_{g_n} z
	}{
		x \xrightarrow{a!()}_{(g\bullet f)_n} z
	}
	\quad
	\frac{
		x \xrightarrow{\tau}_{f_n} y \quad 
		y \xrightarrow{a?()}_{g_n} z_1,\dots, z_n,z_{n+1}
	}{
		x \xrightarrow{a?()}_{(g\bullet f)_n} z_1,\dots, z_n,z_{n+1}
	}
	\\
	\frac{
		x \xrightarrow{a?()}_{f_n} y_1,\dots, y_n,y_{n+1} \quad 
		y_i \xrightarrow{\tau}_{g_n} z_i \quad
		y_{n+1} \xrightarrow{\tau}_{g_{n+1}} z_{n+1}
	}{
		x \xrightarrow{a?()}_{(g\bullet f)_n} z_1,\dots, z_n,z_{n+1}
	}
	\quad
	\frac{
		x \xrightarrow{\tau}_{f_n} y \quad 
		y \xrightarrow{\tau}_{g_n} z
	}{
		x \xrightarrow{\tau}_{(g\bullet f)_n} z
	}
\end{gather*}

\begin{proposition}
	\label{prop:pi-cpoj}
	The category $\kl(\mathcal P^\mathbb I(F +  \Id))$ is 
	\cpoj-enriched and satisfies left and right distributivity.
\end{proposition}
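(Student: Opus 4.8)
The plan is to proceed exactly as in the proofs of Propositions~\ref{prop:lts-cpoj} and~\ref{prop:segala-cpoj}, building the required structure up in three layers: $\kl(\mathcal P)$, then the presheaf Kleisli category $\kl(\mathcal P^\mathbb I)\cong[\mathbb I,\kl(\mathcal P)]$, and finally $\kl(\mathcal P^\mathbb I(F+\Id))$, obtained by composing the adjoint situations of Section~\ref{sec:monadic-observables}. Recall that $\kl(\mathcal P)\cong\cat{Rel}$ is \cpoj-enriched and satisfies both \eqref{law:LD} and \eqref{law:RD}. Since order, binary joins, suprema of $\omega$-chains, composition and coproducts in $[\mathbb I,\kl(\mathcal P)]$ are all computed stagewise (pointwise over $\mathbb I$), it follows immediately that $\kl(\mathcal P^\mathbb I)$ is \cpoj-enriched, still satisfies \eqref{law:LD} and \eqref{law:RD}, and has zero morphisms, namely the stagewise empty relations.

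Next I would verify that the lifting $\overline{F+\Id}$ of $F+\Id=F_o+F_i+F_b+\Id$ to $\kl(\mathcal P^\mathbb I)$ is locally continuous and preserves binary joins. At each stage $\overline{F_o}$, $\overline{F_i}$ and $\overline{\Id}$ are the canonical liftings of polynomial \Set-endofunctors, hence locally continuous and join-preserving by the stagewise version of \cite[Lem.~2.6]{hasuo07:trace}. The only non-standard component is $\overline{F_b}=N\times\delta$, with $\overline{F_b}f_n=n\times f_{n+1}$: since (co)limits in $[\mathbb I,\Set]$ --- in particular the suprema of $\omega$-chains and the binary joins of morphisms relevant here --- are computed stagewise, the re-indexing functor $\delta$ preserves all of them trivially. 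A coproduct of locally continuous, join-preserving functors is again such, so $\overline{F+\Id}=\overline F+\Id$ has the required properties. As $[\mathbb I,\Set]$ has binary coproducts and $\kl(\mathcal P^\mathbb I)$ has zero morphisms, Theorem~\ref{thm:monadic-fg-zero-morph} supplies the monad structure on $\overline{F+\Id}$ recalled above.

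It then remains to assemble the pieces. By Theorem~\ref{thm:t_kleisli_s_parameterised_saturation}, applied with $T=\mathcal P^\mathbb I$ and $\overline S=\overline{F+\Id}$, the category $\kl(\overline{F+\Id})=\kl(\mathcal P^\mathbb I(F+\Id))$ is \cpoj-enriched, and since $\kl(\mathcal P^\mathbb I)$ satisfies \eqref{law:LD} so does $\kl(\overline{F+\Id})$. For \eqref{law:RD} I would reason directly from the composition formula $g\bullet f=\nu\circ\overline{F+\Id}g\circ f$ of \eqref{diag:kl-s-composition}: using that $\overline{F+\Id}$ preserves binary joins, then \eqref{law:RD} and \eqref{law:LD} of $\kl(\mathcal P^\mathbb I)$, one obtains $(f\vee g)\bullet h=\nu\circ\overline{F+\Id}(f\vee g)\circ h=\nu\circ(\overline{F+\Id}f\vee\overline{F+\Id}g)\circ h=f\bullet h\vee g\bullet h$; equivalently, joins in $\kl(\mathcal P^\mathbb I(F+\Id))$ are stagewise unions and the composition described by the derivation rules above is a positive, union-based operation, hence distributes over joins on both sides.

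The bulk of this is routine bookkeeping along the lines of the earlier propositions; the one point that genuinely deserves care is the interaction of the dynamic-allocation functor $\delta$ with the enrichment, and this reduces to the standard fact that limits and colimits in presheaf categories are computed stagewise, so that $\delta$, being a mere shift of the indexing stage, preserves all the (co)limits involved. Everything else is subsumed by Theorems~\ref{thm:monadic-fg-zero-morph} and~\ref{thm:t_kleisli_s_parameterised_saturation}.
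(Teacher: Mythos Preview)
Your proposal is correct and follows essentially the same route as the paper: reduce to Theorem~\ref{thm:t_kleisli_s_parameterised_saturation} by checking that $\overline{F+\Id}$ is locally continuous (and join-preserving), handling the stage-mixing components by hand. One small point of presentation: you single out only $\overline{F_b}=N\times\delta$ as ``non-standard'', but $F_i=N\times\Id^N$ is not a polynomial endofunctor on $[\mathbb I,\Set]$ either---the exponential $\Id^N$ also mixes stages, since $(\Id^N X)_n=(X_n)^n\times X_{n+1}$. The paper treats $\Id^N$ explicitly with a short calculation; your reasoning for $\delta$ (that stage re-indexing preserves the relevant suprema and joins because these are computed stagewise) applies verbatim to $\Id^N$ as well, so the argument goes through, but the claim that $\overline{F_i}$ is ``the canonical lifting of a polynomial \Set-endofunctor'' at each stage should be stated with this caveat.
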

\begin{proof} 
	By Theorem~\ref{thm:t_kleisli_s_parameterised_saturation} we only need
	to prove that $\overline{F + \Id}$ is locally continuous.
	The proof is quite straightforward since $F$ is polynomial except
	for $\Id^N$ and $\delta$. Recall that suprema of ascending $\omega$-chains in 
	$\kl(\mathcal P^\mathbb I)$ are given at each stage $n \in \mathbb I$
	as join in $\kl(\mathcal P)$. We conclude by noticing that at stage
	$n$ $\Id^N$ is polynomial:
	\[
	\textstyle
	\overline{\Id^N}\bigvee_{i < \omega} f_{i,n} 
	= \left(\bigvee_{i < \omega} f_{i,n}\right)^n \times \bigvee_{i < \omega} f_{i,n+1} 
	= \bigvee_{i < \omega}(f_{i,n})^n \times  f_{i,n+1}
	= \bigvee_{i < \omega} \overline{\Id^N}f_{i,n}
	\]
	for any and ascending $\omega$-chain $(f_i)_{i \in \mathbb N}$.
	By a similar argument, binary joins distribute over composition and $\overline{\delta}$ is locally continuous.
\end{proof} 
The above construction can be easily adapted to many behaviours with
unobservables and resources modelled in some
presheaf category.

\paragraph{Weak (late) behavioural equivalence for the $\pi$-calculus}
As for LTSs and Segala systems, by Proposition~\ref{prop:pi-cpoj} 
we can apply directly
Theorem~\ref{theorem:ordered_saturation_category_LD_RD} and define
saturated systems by assigning each \emph{late transition system} 
to the least one closed under the following rules:
\begin{gather*}
	\frac{}{
		x \xRightarrow{\,\tau\ }_n x
	}
	\quad
	\frac{
		x \xRightarrow{\,\tau\ }_n x' \quad 
		x' \xrightarrow{\tau}_n y' \quad
		y' \xRightarrow{\,\tau\ }_n y
	}{
		x \xRightarrow{\,\tau\ }_n y
	}
	\\
	\frac{
		x \xRightarrow{\,\tau\ }_n x' \quad 
		x' \xrightarrow{a!\langle b \rangle}_n y' \quad
		y' \xRightarrow{\,\tau\ }_n y
	}{
		x \xRightarrow{\,a!\langle b \rangle\ }_n y
	}
	\quad
	\frac{
		x \xRightarrow{\tau\ }_n x' \quad 
		x' \xrightarrow{a!()}_n y' \quad
		y' \xRightarrow{\tau\ }_{n+1} y
	}{
		x \xRightarrow{a!()\ }_n y
	}
	\\
	\frac{
		x \xRightarrow{\,\tau\ }_n x' \quad 
		x' \xrightarrow{a?()}_n y'_1,\dots,y'_n,y'_{n+1} \quad
		y'_i \xRightarrow{\,\tau\ }_n y_i \quad
		y'_{n+1} \xRightarrow{\,\tau\ }_{n+1} y_{n+1}
	}{
		x \xRightarrow{\,a?()\ }_n y_1,\dots,y_n,y_{n+1}
	}
\end{gather*}
Here the single and double arrows denote the LTS for the given 
coalgebra and the saturated one, respectively, the stage $n \in \mathbb N$ is subscript,
$a,b \in n$, and every process is at stage $n$ except for $x_{n+1},y_{n+1}$ which are at stage $n+1$.

\begin{proposition}
	\label{prop:adequacy-pi-lts}
	For $B_\pi$-coalgebras,
	weak behavioural equivalence coincides with 
	weak (late) bisimulation \cite{sangiorgi2003pi}.
\end{proposition}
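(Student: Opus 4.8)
The plan is to mirror the treatment of LTSs and Segala systems in Sections~\ref{sec:ex-lts} and~\ref{sec:ex-segala}, since by Proposition~\ref{prop:pi-cpoj} the category $\kl(B_\pi)=\kl(\mathcal{P}^{\mathbb I}(F+\Id))$ falls squarely in the well-behaved left distributive case. First I would observe that, being \cpoj-enriched and satisfying \eqref{law:LD}, $\kl(B_\pi)$ admits saturation by Theorem~\ref{theorem:ordered_saturation_category_LD_RD}, so that by Theorems~\ref{thm:wbm-via-ps} and~\ref{thm:double-arrow} a map $i\colon X\to Y$ in $\cat{C}=[\mathbb I,\Set]$ is a weak behavioural morphism for a coalgebra $\alpha\colon X\to B_\pi X$ exactly when $i\circ\alpha^\ast=\beta\circ i$ for some $\beta$, i.e.~when $i$ is a strong behavioural morphism for the free monad $\alpha^\ast$. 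Hence, by Definition~\ref{def:weak-behavioural-equivalence}, weak behavioural equivalences for $\alpha$ are precisely the strong behavioural equivalences (kernel bisimulations, cf.~Definition~\ref{def:strong-behavioural-equivalence}) for $\alpha^\ast$ viewed as a $B_\pi$-coalgebra.

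The second step is to identify $\alpha^\ast$ concretely. By Proposition~\ref{prop:cpoj_admits_saturation} we have $\alpha^\ast=\bigvee_{n<\omega}(id_X\vee\alpha)^n$; I would unfold this join using the explicit description of composition in $\kl(\overline{F+\Id})$ recalled just before the statement — the derivation rules that absorb leading/trailing $\tau$'s around output, (late) input and bound transitions while discarding clashing observables through the zero morphisms — and show, by a routine but bookkeeping-heavy induction on $n$, that $\alpha^\ast$ is exactly the least late transition system closed under the saturation ("double arrow") rules displayed above: reflexivity and transitivity of $\xRightarrow{\tau}$, plus $\tau$-absorption around each observable transition, with the input rule carrying the exponential structure (the derivative matched at every instantiation of the input name, producing the tuple $y_1,\dots,y_n,y_{n+1}$) and the bound rule carrying the $\delta$-shift for the freshly allocated name. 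This is precisely the $\pi$-calculus weak-transition relation $\xRightarrow{\mu}$ of~\cite{sangiorgi2003pi}.

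Finally, I would transfer the coalgebraic notion back to syntax. It is known from Fiore and Turi~\cite{ft:lics01} that behavioural equivalence for coalgebras of $\mathcal{P}^{\mathbb I}(N\times N\times\Id+N\times\Id^N+N\times\delta+\Id)$ on $[\mathbb I,\Set]$ coincides with strong late bisimilarity for the $\pi$-calculus. Applying this to $\alpha^\ast$, whose underlying transition system is the saturated (double-arrow) late LTS identified in the previous step, yields that strong behavioural equivalence for $\alpha^\ast$ is strong late bisimilarity of the saturated system, which by definition is weak (late) bisimilarity of the original system~\cite{sangiorgi2003pi}. Combining this with the first step gives the claimed correspondence.

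The main obstacle is the middle step: one must verify that Kleisli composition in $\kl(\overline{F+\Id})$ faithfully implements the $\pi$-calculus weak-transition closure in the presence of names — in particular that it composes input transitions "under all substitutions", as the late discipline demands (this is where the exponential $\Id^N$, rather than an $N$-fold product, is essential), and that it handles allocation of fresh names correctly through $\delta$ without spurious identifications. Once the identification of $\alpha^\ast$ with the saturated late LTS is in place, the rest is a direct application of Theorems~\ref{thm:wbm-via-ps}, \ref{thm:double-arrow} and~\ref{theorem:ordered_saturation_category_LD_RD} together with the established adequacy of the Fiore--Turi coalgebraic model.
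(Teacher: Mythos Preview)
Your proposal is correct and follows essentially the same route as the paper: reduce weak behavioural equivalence to strong behavioural equivalence on $\alpha^\ast$ via Theorems~\ref{theorem:ordered_saturation_category_LD_RD}, \ref{thm:wbm-via-ps} and~\ref{thm:double-arrow}, then identify $\alpha^\ast$ with the saturated (double-arrow) late transition system and observe that its rules coincide with those of~\cite{sangiorgi2003pi}. The paper's proof is a two-line sketch that simply asserts the match of saturation rules and invokes the three theorems; you spell out the intermediate steps more carefully, in particular making explicit the appeal to the Fiore--Turi adequacy of the coalgebraic model~\cite{ft:lics01} that the paper leaves implicit, and you correctly flag the unfolding of $\alpha^\ast$ against the Sangiorgi--Walker rules as the place where the real work sits.
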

\begin{proof}
	The derivation rules describing saturation are exactly
	those presented in \cite{sangiorgi2003pi}. Then the correspondence
	follows by Theorems~\ref{theorem:ordered_saturation_category_LD_RD}, \ref{thm:wbm-via-ps}, and~\ref{thm:double-arrow}.
\end{proof}

\subsection{Topological Kripke frames}
In \cite{e:1974top-kripke} Esakia used the Vietoris topology to give a coalgebraic definition of topological Kripke frames, thus relating the 
Vietoris topology, modal logic and coalgebras; a connection that has been 
fruitfully investigated e.g.~in 
\cite{k:1981duality,r:1986vietoris,j:1985vietoris-loc,vv:2014vietoris-modal}.
Here we take into account unobservable moves.

\paragraph{Vietoris monad} 
Let $(X,\Sigma_X)$ be a compact Hausdorff space, and let $\mathcal{V}(X,\Sigma_X)$
the set of compact subsets of $X$.  The \emph{Vietoris topology}
$\Sigma_{\mathcal{V}(X,\Sigma_X)}$ on $\mathcal{V}(X,\Sigma_X)$ is described by the base
consisting of sets of the following form:
\[\textstyle
	\nabla\{U_1,\dots,U_n\} = 
	\left\{
		F \in \mathcal{V}(X,\Sigma_X)
	\,\middle|\, 
		F \subseteq \bigcup_{i =1}^n U_i
		\text{ and }
		F \cap U_i \neq \emptyset 
		\text{ for }
		1 \leq i \leq n
	\right\}
\]
where $n \in \mathbb N$ and each $U_i \in \Sigma_X$.
This extends to an endofunctor $\mathcal{V}$ over 
$\cat{KHaus}$, the category of compact Hausdorff spaces 
and continuous function, whose action takes forward images
$\mathcal{V} f(X') = f(X')$
for every continuous function $f \colon (X,\Sigma_X) \to (Y,\Sigma_Y)$
\cite{vv:2014vietoris-modal}.
This functor plays a central r\^ole in modal logics since Esakia's seminal
work on topological Kripke frames\footnote{%
	Topological Kripke frames are usually defined on Stone spaces.
} \cite{e:1974top-kripke}.
Moreover, it extends to a monad (mimicking the lifting of $\mathcal P$)
whose multiplication and unit are given, on each component $(X,\Sigma_X)$, as follows:
\[
	\mu_{(X,\Sigma_X)}(Y) = \bigcup Y
	\qquad
	\eta_{(X,\Sigma_X)}(x) = \{x\}
	\text{.}
\]
The structure of sets of compact subsets extends to
the Kleisli category for the Vietoris monad in the obvious way:
arrows are ordered and joined in pointwise manner. Kleisli
composition preserves this structure.
\begin{proposition}
	\label{prop:kl-vietoris-cpoj}
	The category $\kl(\mathcal{V})$ is enriched over \cpoj
	and satisfies left and right distributivity.
\end{proposition}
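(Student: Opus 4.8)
The plan is to rerun the argument that identifies $\kl(\mathcal P)$ with $\cat{Rel}$, now keeping track of the topology. A morphism $X\to Y$ in $\kl(\mathcal V)$ is a continuous map $f\colon X\to\mathcal V Y$, and Kleisli composition $g\bullet f=\mu_Z\circ\mathcal V g\circ f$ is, on the underlying sets, relational composition: $(g\bullet f)(x)=\bigcup_{y\in f(x)}g(y)$; it is again a $\cat{KHaus}$-morphism because $\mathcal V g$ and $\mu_Z$ are. Ordering the hom-sets pointwise by inclusion, $f\le g\iff f(x)\subseteq g(x)$ for all $x$, monotonicity of composition in both arguments is immediate from this formula, so $\kl(\mathcal V)$ is order enriched.

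For binary joins I would set $(f\vee g)(x):=f(x)\cup g(x)$: this is compact (a finite union of compacts) and continuous in $x$, being the composite of $\langle f,g\rangle\colon X\to\mathcal V Y\times\mathcal V Y$ with the binary-union map $\mathcal V Y\times\mathcal V Y\to\mathcal V Y$, and the latter is continuous (one checks this directly on the sets $\{K\mid K\subseteq U\}$ and $\{K\mid K\cap U\neq\emptyset\}$, $U$ open, that generate the Vietoris topology). Clearly $f\vee g$ is the least upper bound of $f,g$ in the pointwise order. Left and right distributivity are then purely set-theoretic identities obtained by distributing unions through the composition formula, exactly as for $\cat{Rel}$: $(f\vee g)\bullet h=f\bullet h\vee g\bullet h$ because $\bigcup_{y\in h(x)}(f(y)\cup g(y))=\bigcup_{y\in h(x)}f(y)\cup\bigcup_{y\in h(x)}g(y)$, and symmetrically $k\bullet(f\vee g)=k\bullet f\vee k\bullet g$ since $\bigcup_{y\in f(x)\cup g(x)}k(y)=\bigcup_{y\in f(x)}k(y)\cup\bigcup_{y\in g(x)}k(y)$, giving \eqref{law:RD} and \eqref{law:LD}.

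The remaining and, I expect, main obstacle is $\omega$-completeness of the hom-posets together with the fact that composition preserves suprema of ascending $\omega$-chains. Here the $\cat{Rel}$ argument does not transfer verbatim: given an ascending chain $f_0\le f_1\le\cdots$ of continuous maps $X\to\mathcal V Y$, the pointwise prescription $x\mapsto\overline{\bigcup_n f_n(x)}$ is the supremum in the complete lattice of compact-valued maps, and is the pointwise Vietoris limit of the $f_n$ (an ascending net of compacts in a compact Hausdorff space converges to the closure of its union), but it need not be continuous — lower semicontinuity can fail near a limit point of $X$ at which the sets $f_n(x)$ spread out — so it is not in general the supremum inside $\kl(\mathcal V)$. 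My plan is to argue instead that the chain nonetheless has a least continuous upper bound, using compactness of $Y$ and of $\mathcal V Y$, and to take this continuous map as $\bigvee_n f_n$; establishing its existence — equivalently, that the inclusion of $\kl(\mathcal V)(X,Y)$ into the lattice of all compact-valued maps on $X$ admits a left adjoint — is the real technical content of the proof. Granted this, local continuity of composition reduces, as in the binary case, to continuity of $\mathcal V g$ and of $\mu$ and to the distributivity identities already established, whence $\kl(\mathcal V)$ is \cpoj-enriched and satisfies both \eqref{law:LD} and \eqref{law:RD}.
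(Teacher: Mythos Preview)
Your treatment of the order, binary joins, and both distributivity laws is correct and is essentially all the paper offers: its entire proof is the single sentence ``Joins in $\mathcal{V}(X,\Sigma_X)$ (i.e., unions) are extended pointwise to $\kl(\mathcal{V})$.'' So on that part you match the paper, only with more detail.

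You are also right that $\omega$-completeness is where the difficulty lies, and right that the naive pointwise prescription $x\mapsto\overline{\bigcup_n f_n(x)}$ need not be continuous; a concrete instance is $X=Y=[0,1]$, $f_n(x)=[0,\,1-(1-x)^n]$, an ascending chain of continuous maps whose pointwise supremum is $\{0\}$ at $x=0$ and $[0,1]$ elsewhere. The paper's one-line proof does not address this at all, so here you have gone further than the paper rather than diverged from it.

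The genuine gap is in your proposed repair. Even granting that least \emph{continuous} upper bounds exist, your closing claim---that continuity of composition ``reduces, as in the binary case, to \dots\ the distributivity identities already established''---fails, because those identities are \emph{pointwise} and your suprema are not. In the example above the least continuous upper bound is the constant map $x\mapsto[0,1]$. Precompose the chain with $h\colon\{\ast\}\to[0,1]$, $h(\ast)=\{0\}$: then $f_n\bullet h(\ast)=f_n(0)=\{0\}$ for every $n$, so $\bigvee_n(f_n\bullet h)$ is the constant $\{0\}$, whereas $(\bigvee_n f_n)\bullet h(\ast)=[0,1]$. Thus composition does not preserve these non-pointwise suprema, and the argument you sketch cannot establish \cpoj-enrichment along this route. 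Whatever order structure makes the proposition true, it is not recovered by passing to least continuous majorants of pointwise suprema; a different idea is needed for the $\omega$-part (and the paper does not supply one either).
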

\begin{proof}
	Joins in $\mathcal{V}(X,\Sigma_X)$ (i.e., unions)
	are extended pointwise to $\kl(\mathcal{V})$.
\end{proof}

\paragraph{Vietoris transition systems}
The Vietoris monad on $\cat{KHaus}$ is commutative. Its
double strength is defined, on each component, as:
\[
	\dstr_{(X,\Sigma_X),(Y,\Sigma_Y)}(X',Y') = X' \times Y'\text{.}
\]
Coherence is proven as in the powerset case.
Therefore, polynomial functors lift canonically to $\kl(\mathcal{V})$.
\begin{lemma}
	\label{lem:kl-vietoris-polynomial-continuous}
	Canonical liftings of polynomial functors 
	are locally continuous.\par
\end{lemma}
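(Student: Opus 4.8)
The plan is to mirror the proof of Lemma~\ref{lem:kl-fw-polynomial-continuous} (itself modelled on \cite[Lem.~2.6]{hasuo07:trace}) and argue by structural recursion on the grammar $F\Coloneqq \Id \mid A \mid F_1\times F_2 \mid \coprod_{i\in I}F_i$ defining polynomial functors, exploiting the fact that the canonical lifting $\overline{F}\colon \kl(\mathcal V)\to\kl(\mathcal V)$ is itself assembled by structural recursion from identities, constant functors, the double strength, and the coprojections. Since, by Proposition~\ref{prop:kl-vietoris-cpoj}, the order and the suprema of ascending $\omega$-chains in $\kl(\mathcal V)$ are computed pointwise on compact subsets, local continuity of $\overline F$ will reduce — exactly as in the powerset and distribution cases — to the continuity of the two non-trivial ingredients: the double strength and the coprojections.

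The base cases are immediate: $\overline{\Id}$ is the identity functor and $\overline A$ is the constant functor at $A$, both of which trivially preserve suprema. For the product case $F_1\times F_2$, under the inductive hypotheses that $\overline{F_1}$ and $\overline{F_2}$ are locally continuous, it suffices to check that $\dstr$ preserves suprema of ascending chains; this carries the combinatorial content of the argument, but it is light: for ascending chains $(X'_i)_{i<\omega}$, $(Y'_i)_{i<\omega}$ of compact subsets one has $\big(\bigcup_i X'_i\big)\times\big(\bigcup_i Y'_i\big)=\bigcup_i (X'_i\times Y'_i)$, so $\dstr_{(X,\Sigma_X),(Y,\Sigma_Y)}(X',Y')=X'\times Y'$ is continuous in each argument. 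For the coproduct case $\coprod_{i\in I}F_i$ with $I\subseteq\mathbb N$, coproducts in $\kl(\mathcal V)$ are lifted from $\cat{KHaus}$ and the coprojections act by disjoint-union inclusion, which commutes with the pointwise unions defining suprema; combined with the inductive hypotheses this yields continuity of $\overline{\coprod_{i\in I}F_i}$. Composing these observations along the recursive definition of $\overline F$ gives local continuity of every canonical lifting.

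The point that needs genuine care — and which I expect to be the main obstacle — is the interaction between taking suprema of $\omega$-chains and the compactness constraint built into $\mathcal V$: unlike in \Set, an increasing union of compact subsets of a compact Hausdorff space need not be compact, so the relevant join in $\kl(\mathcal V)$ is the closure of the union, and one must verify that this closure is transparent to $\dstr$ and to the coprojections. This is handled using the standard facts that in a compact Hausdorff space the closed subsets are exactly the compact ones, that finite products and finite coproducts of closed sets are closed, and that closure commutes with these finitary constructions on directed families of subsets; with these in hand the pointwise computations above go through unchanged, and the structural recursion completes the proof.
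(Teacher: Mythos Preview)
Your proposal is correct and follows essentially the same route as the paper's own proof: both argue by structural recursion \`a la \cite[Lem.~2.6]{hasuo07:trace}, reducing local continuity to the continuity of $\dstr$ and of the coprojections. Your treatment is in fact more careful than the paper's on one point: you explicitly flag that suprema of ascending $\omega$-chains in $\mathcal V(X,\Sigma_X)$ are closures of unions rather than bare unions (since an increasing union of compact subsets need not be compact), and you check that closure commutes with finite products and coprojections; the paper's proof simply writes $\bigvee_{i<\omega} X_i \times Y_i = \big(\bigvee_{i<\omega} X_i\big)\times\big(\bigvee_{i<\omega} Y_i\big)$ without addressing this.
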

\begin{proof} 
	The proof is carried out by structural recursion akin to
	Hasuo's proof of \cite[Lem.~2.6]{hasuo07:trace}. 
	Because the ordering
	is pointwise it suffices to show that $\mathsf{dstr}$ and 
	coprojections are continuous map between CPOs.

	Let $(X_i)_{i < \omega} \in \mathcal{V}(X,\Sigma_X)$ and 
	$(Y_i)_{i < \omega} \in  \mathcal{V}(Y,\Sigma_Y)$
	be two ascending $\omega$-chains. Continuity of
	$\dstr_{(X,\Sigma_X),(Y,\Sigma_Y)}(X',Y')$
	follows by \[\textstyle\bigvee_{i < \omega} X_i \times Y_i = 
	\left(\bigvee_{i < \omega} X_i\right)\times
	\left(\bigvee_{i < \omega} Y_i\right)\]
	and by definition of cartesian product of CPOs.
	Continuity of $T \mathsf{in}_i$ follows directly by 
	definition of coproducts in	$\cat{KHaus}$.	
\end{proof} 
Continuity and existence of zero morphisms allow us to apply Theorem~\ref{thm:t_kleisli_s_parameterised_saturation} and
Theorem~\ref{thm:monadic-fg-zero-morph} to define a suitable monad
for functors like $\mathcal{V}(F + \Id)$.
\begin{proposition}
	Let $F\colon \cat{KHaus} \to \cat{KHaus}$ be a polynomial functor.
	The category $\kl(\mathcal{V}(F + \Id))$ is 
	enriched over \cpoj and satisfies \eqref{law:LD} and \eqref{law:RD}.
	\par
\end{proposition}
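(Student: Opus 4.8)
The plan is to run the same argument already used for labelled transition systems, Segala systems and the $\pi$-calculus (Propositions~\ref{prop:lts-cpoj}, \ref{prop:segala-cpoj} and~\ref{prop:pi-cpoj}): build the behavioural monad $\mathcal{V}(F+\Id)$ by lifting and extending over $\kl(\mathcal{V})$, and then transport the enrichment and the distributivity laws along the generic transfer theorems. First I would recall from Proposition~\ref{prop:kl-vietoris-cpoj} that $\kl(\mathcal{V})$ is \cpoj-enriched and satisfies both \eqref{law:LD} and \eqref{law:RD}, with suprema of ascending $\omega$-chains and binary joins of morphisms computed pointwise as unions of compact subsets. Since the Vietoris monad is commutative, the polynomial functor $F$ has a canonical lifting $\overline{F}\colon \kl(\mathcal{V})\to\kl(\mathcal{V})$, which is locally continuous by Lemma~\ref{lem:kl-vietoris-polynomial-continuous}.

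Next, because \cat{KHaus} has binary coproducts and $\kl(\mathcal{V})$ has zero morphisms (the constantly-$\emptyset$ maps, as observed just before the statement), Theorem~\ref{thm:monadic-fg-zero-morph} equips $\overline{F+\Id}=\overline{F}+\Id$ with a monad structure $(\overline{F+\Id},\nu,\theta)$, and this lifting is again locally continuous since, in $\kl(\mathcal{V})$, suprema of morphisms into a coproduct are computed componentwise. At this point Theorem~\ref{thm:t_kleisli_s_parameterised_saturation} applies and yields at once that $\kl(\overline{F+\Id})=\kl(\mathcal{V}(F+\Id))$ is \cpoj-enriched, and that \eqref{law:LD} holds there because it holds in $\kl(\mathcal{V})$.

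The one property not delivered automatically is \eqref{law:RD}, since Theorem~\ref{thm:t_kleisli_s_parameterised_saturation} only transfers \eqref{law:LD}; I expect this to be the only step requiring genuine work. I would establish it by inspection of the Kleisli composition \eqref{diag:kl-s-composition} of $\kl(\mathcal{V}(F+\Id))$, checking that it right-distributes over the pointwise unions defining binary joins, exactly as in the proofs of Propositions~\ref{prop:lts-cpoj} and~\ref{prop:pi-cpoj}; unfolding $g\bullet f=\nu\circ\overline{F+\Id}g\circ f$ reduces this to the behaviour of the canonical lifting $\overline{F+\Id}$ on binary joins, which is analysed by the same structural recursion that underlies Lemma~\ref{lem:kl-vietoris-polynomial-continuous} (now with binary joins in place of $\omega$-chains), using that products and coproducts in \cat{KHaus} distribute over pointwise unions. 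I would also observe in passing that, since binary joins in $\kl(\mathcal{V}(F+\Id))$ are pointwise (Definition~\ref{def:pointwise-join}), Lemma~\ref{lem:pointwise-jrd} already yields \cat{KHaus}-right distributivity for free --- which is what the later development actually uses (e.g.\ in Theorem~\ref{thm:weak-is-complete-wrt-strong}) --- so the real content of this last step is only the strengthening from \cat{KHaus}-right distributivity to full \eqref{law:RD}.
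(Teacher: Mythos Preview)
Your proposal is correct and follows essentially the same route as the paper: identify the zero morphisms, invoke Lemma~\ref{lem:kl-vietoris-polynomial-continuous} and Theorem~\ref{thm:monadic-fg-zero-morph} to obtain the locally continuous monad $\overline{F+\Id}$ on $\kl(\mathcal{V})$, and then conclude via Proposition~\ref{prop:kl-vietoris-cpoj} and Theorem~\ref{thm:t_kleisli_s_parameterised_saturation}. You are in fact more careful than the paper, which simply cites these results and stops; you rightly note that Theorem~\ref{thm:t_kleisli_s_parameterised_saturation} only transports \eqref{law:LD}, and your sketch of how \eqref{law:RD} follows from the lifting preserving pointwise unions is exactly the argument the paper leaves implicit (as it also does in Propositions~\ref{prop:lts-cpoj} and~\ref{prop:pi-cpoj}).
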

\begin{proof} 
	Zero morphisms map everything to the bottom element of their co\-do\-main. 
	By Lemma~\ref{lem:kl-vietoris-polynomial-continuous} and Theorem~\ref{thm:monadic-fg-zero-morph}, $\overline{F+\Id}$ is locally continuous and part of a monad over $\kl(\mathcal{V})$.
	We conclude by
	Proposition~\ref{prop:kl-vietoris-cpoj} and 
	Theorem~\ref{thm:t_kleisli_s_parameterised_saturation}.
\end{proof} 

In the following we instantiate the result on a CCS-like behaviour
where labels are drawn from a compact Hausdorff space $(A,\Sigma_A)$
plus a distinguished silent action $\tau$. 
Henceforth, let $F = (A,\Sigma_A) \times \Id$. 
The canonical lifting of $F + \Id$ on $\kl(\mathcal{V})$
is defined, on any continuous map $f \colon (X,\Sigma_X) \to (Y,\Sigma_Y)$, as:
\[
	\overline{F + \Id} = id_{(A,\Sigma_A)} \times f + f
	\text{.}
\]
By Theorem~\ref{thm:monadic-fg-zero-morph}, this functor carries a monad $(\overline{F + \Id},\nu,\theta)$ whose unit and multiplication are defined, 
on each compact Hausdorff space $(X,\Sigma_X)$, as:
\begin{align*}
	\nu_{(X,\Sigma_X)}(a,\tau,x) &=  \{(a,x)\} &
	\nu_{(X,\Sigma_X)}(\tau,a,x) &=  \{(a,x)\} &
	\theta_{(X,\Sigma_X)}(x) = \{(\tau,x)\}\\
	\nu_{(X,\Sigma_X)}(\tau,\tau,x) &=  \{(\tau,x)\} &
	\nu_{(X,\Sigma_X)}(a,b,x) &= \emptyset\text{.} &
\end{align*}
Composition in $\kl(\overline{F \times \Id})$ follows by
\eqref{eq:lts-kl-comp} and, for any two morphisms $f$ and $g$ with suitable
domain and codomain, the composite $g \bullet f$ is:
\[
	(g\bullet f)(x) = \{
		(a,z) 
		\mid
		(a,y) \in f(x) \land (\tau,z) \in g(y) \text{ or } 
		(\tau,y) \in f(x) \land (a,z) \in g(y) 
	\}
	\text{.}
\]

\paragraph{Weak behavioural equivalence}
Let $\alpha$ be a coalgebra for the functor
$\mathcal{V}((A,\Sigma_A)\times \Id + \Id)$
on $\cat{KHaus}$. The forgetful functor $U \colon \cat{KHaus} \to \Set$
extends to a forgetful\[
 U : \cat{KHaus}_{\mathcal{V}((A,\Sigma_A)\times \Id + \Id)}
 \to \Set_{\mathcal P(A\times \Id + \Id)}
 \text{.}
\]\par
\begin{proposition}
	The forgetful functor $U$ preserves weak bisimulation.
\end{proposition}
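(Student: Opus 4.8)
The plan is to exploit that $U$ has a left adjoint --- the Stone--\v Cech compactification $\beta\colon\Set\to\cat{KHaus}$ --- so that $U$ preserves all limits; in particular it preserves pullbacks, hence kernel pairs, and it preserves jointly monic spans. Since a weak bisimulation for $\alpha$ is, by Definition~\ref{def:weak-behavioural-equivalence}, the kernel pair of a weak behavioural morphism, it therefore suffices to show that whenever $i\colon(X,\Sigma_X)\to(Y,\Sigma_Y)$ in $\cat{KHaus}$ is a weak behavioural morphism for $\alpha$, then $Ui$ is a weak behavioural morphism for $U\alpha$ in $\Set$. Moreover I may assume $i$ is a quotient map: replacing $(Y,\Sigma_Y)$ by the compact Hausdorff (and closed) subspace $i(X)$ leaves the kernel pair unchanged, and since the least solution $\alpha^\ast_{i^\sharp}$ of \eqref{eq:PS} already takes values in $\mathcal V((A,\Sigma_A)\times i(X)+i(X))$ the witnessing $\beta$ corestricts accordingly; thus $U$ applied to the kernel pair is precisely the kernel pair of the surjection $Ui$.

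Next I would lift $U$ to the Kleisli level. A compact subset of a space is in particular a subset, so the inclusions $U\mathcal V(X,\Sigma_X)\hookrightarrow\mathcal P(UX)$ assemble into a morphism of monads $U\mathcal V\To\mathcal P U$; composing with the monadic structure of unobservable moves from Theorem~\ref{thm:monadic-fg-zero-morph} one obtains $U\,\mathcal V((A,\Sigma_A)\times\Id+\Id)\To\mathcal P(A\times\Id+\Id)\,U$, hence a faithful functor $\overline U\colon\kl(\mathcal V((A,\Sigma_A)\times\Id+\Id))\to\kl(\mathcal P(A\times\Id+\Id))$ acting as $U$ on objects and ``forgetting the topology'' on morphisms. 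Comparing the composition given by \eqref{diag:kl-s-composition} (equivalently \eqref{eq:lts-kl-comp}) on the two sides, which coincide modulo topology, $\overline U$ preserves Kleisli composition; it sends $i^\sharp$ to $(Ui)^\sharp$; and since order and joins on both hom-posets are induced by pointwise unions, $\overline U$ is monotone and preserves binary joins. Now invoke Theorem~\ref{thm:wbm-via-ps}: $i$ is a weak behavioural morphism for $\alpha$ iff there is $\beta$ with $\alpha^\ast_{i^\sharp}=\beta\circ i^\sharp$. Applying $\overline U$ to the fixed-point identity $\alpha^\ast_{i^\sharp}=i^\sharp\vee\alpha^\ast_{i^\sharp}\circ\alpha$ shows that $\overline U(\alpha^\ast_{i^\sharp})$ is a solution of $x=(Ui)^\sharp\vee x\circ U\alpha$ that factors as $\overline U\beta\circ(Ui)^\sharp$, whence the least solution satisfies $(U\alpha)^\ast_{(Ui)^\sharp}\le\overline U\beta\circ(Ui)^\sharp$.

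The main obstacle is to upgrade this inequality to a factorisation $(U\alpha)^\ast_{(Ui)^\sharp}=\beta'\circ(Ui)^\sharp$, equivalently --- since both Kleisli categories here are left distributive, cf.\ Theorem~\ref{thm:double-arrow} --- to show that $Ui$ is a \emph{strong} behavioural morphism for the saturation $(U\alpha)^\ast$. One gets this ``for free'' only for the possibly larger monad $\overline U(\alpha^\ast)$, because $\overline U$ does \emph{not} preserve suprema of ascending $\omega$-chains: the \cpoj-enrichment of $\kl(\mathcal V)$ computes such suprema as pointwise topological closures of unions, so $\overline U(\alpha^\ast)$ may carry ``limit'' transitions absent from $(U\alpha)^\ast$, and a priori $(U\alpha)^\ast\le\overline U(\alpha^\ast)$ can be strict. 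Closing this gap is the real content of the statement: using that $i$ is a quotient and hence the relation is closed, one has to verify that the transfer property witnessed against the finer saturation $\alpha^\ast$ can be replayed against $(U\alpha)^\ast$ --- every matching $\alpha^\ast$-transition being a limit of genuine $(U\alpha)^\ast$-transitions, with continuity of $i$ and Hausdorffness keeping a suitable choice inside the same class. This boundary-matching step, where compactness of the carrier and continuity of $i$ are genuinely used, is exactly the point a hasty argument would overlook; everything else is bookkeeping about how $\overline U$ interacts with the Kleisli structure.
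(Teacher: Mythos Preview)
Your proposal diverges sharply from the paper's argument, and the divergence hinges on a single point on which you and the paper flatly disagree.

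The paper's proof is short. Since both Kleisli categories are left distributive, Theorems~\ref{theorem:ordered_saturation_category_LD_RD}, \ref{thm:wbm-via-ps} and~\ref{thm:double-arrow} give $\alpha^\ast_i = i\circ\alpha^\ast$, so it suffices to show $U\alpha^\ast = (U\alpha)^\ast$. The paper then asserts that the lifted $U$ is \emph{continuous}, i.e.\ preserves suprema of ascending $\omega$-chains, on the grounds that such suprema are computed in both categories as pointwise countable unions; since $U$ merely forgets the topology, it preserves unions and the result follows immediately.

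You explicitly deny this assertion: you claim that $\omega$-suprema in $\kl(\mathcal V)$ are pointwise \emph{closures} of unions, so that $\overline U$ need not preserve them. This disagreement is precisely what forces you into the elaborate detour ending in the ``boundary-matching'' step. But that final step is not a proof: you sketch a plausibility argument invoking compactness, continuity of $i$, and Hausdorffness, yet never actually produce the factorisation $(U\alpha)^\ast_{(Ui)^\sharp} = \beta'\circ(Ui)^\sharp$ or the matching $\beta'$. As written, your proposal has a genuine gap at exactly the place you yourself flag as ``the real content of the statement''.

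So either the paper's claim that $\omega$-suprema are plain unions is correct, in which case your worry is misplaced and the two-line argument suffices; or your observation that an ascending chain of compact subsets of a compact Hausdorff space need not have compact union (e.g.\ $[1/n,1]$ in $[0,1]$) shows the paper's key assertion is too quick---but then you have not closed the resulting gap either. In neither case does your proposal stand as a complete proof.
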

\begin{proof} 
	Both $\kl(\mathcal{V}((A,\Sigma_A)\times \Id + \Id))$
	and $\kl(\mathcal P(A\times \Id + \Id))$ satisfy \eqref{law:LD} and \eqref{law:RD}. By Theorems~\ref{theorem:ordered_saturation_category_LD_RD}, \ref{thm:wbm-via-ps}, and~\ref{thm:double-arrow}
	$\alpha_i^{\ast} = i\circ  \alpha^{\ast}$. Hence, it suffices to prove that 
	$U\alpha^{\ast} = (U\alpha)^{\ast}$ where $U$ is extended to 
	$\kl(\mathcal{V}((A,\Sigma_A)\times \Id + \Id))$ 
	in the obvious way.
	Recall that $\alpha^{\ast} = \bigvee_{n < \omega} \alpha^n$,
	the thesis follows by $U$ being continuous in the sense that,
	for any ascending $\omega$-chain $(f_i)_{i \in\mathbb N}$,
	$U\bigvee_{i<\omega} f_i = \bigvee_{i<\omega} Uf_i$.
	We conclude by noticing that $U$ forgets the topology of each space and
	that suprema of ascending $\omega$-chains in both cases are pointwise 
	suprema of ascending $\omega$-chains
	of compact subsets i.e.~countable unions.
\end{proof} 
Reworded, every weak bisimulation for a Vietoris LTS is a weak bisimulation
for the non-deterministic LTS obtained forgetting its topological structure.

\subsection{Continuous-state stochastic systems}
Coalgebraic presentations of probabilistic and stochastic continuous-state systems received an increasing interest
e.g.~\cite{bm:2015stocsos,kerstan2013coalgebraic}. 
In this section we show how these types of systems fits into our framework.

\paragraph{Measures monad}
Let $(X,\Sigma_X)$ be a measurable space and let $\Delta (X,\Sigma_X)$ be
the set of all measures $\phi\colon \Sigma_X \to [0,\infty]$ on
$(X,\Sigma_X)$.  For each measurable set $M \in \Sigma_X$ there is a
canonical evaluation function $ev_M\colon \Delta(X,\Sigma_X) \to [0,\infty]$
such that $ev_M(\phi) = \phi(M)$; these evaluation maps allow us to
endow $\Delta (X,\Sigma_X)$ with the smallest $\sigma$-algebra rendering
each $ev_M$ measurable with respect to the Borel $\sigma$-algebra on
$[0,\infty]$ (i.e.~the initial $\sigma$-algebra w.r.t.~$\{ev_M \mid M
\in \Sigma_X\}$).  This definition extends to an endofunctor $\Delta$
over $\cat{Meas}$, the category of measurable spaces and measurable
functions, acting on any $(X,\Sigma_X)$ and $f \colon (X,\Sigma_X) \to
(Y,\Sigma_Y)$ as:
\[
	\Delta(X,\Sigma_X) \defeq
		(\Delta(X,\Sigma_X),\Sigma_{\Delta(X,\Sigma_X)})
	\qquad
	\Delta f (\phi) \defeq \phi \circ f^{-1}
\]

\begin{lemma}
	\label{lem:delta-monad}
	The functor $\Delta \colon \cat{Meas} \to \cat{Meas}$ extends to a monad $(\Delta,\mu,\eta)$ 
	whose multiplication and unit are given, on each 
	component $(X,\Sigma_X)$, as follows:
	\begin{align*}
		\mu_{(X,\Sigma_X)}(\phi)(M) 
		&= \int_{\Delta(X,\Sigma_X)}\psi(M)\phi(\diff \psi)
		=
		\int_{\Delta(X,\Sigma_X)}ev_M\diff \phi
		\\
		\eta_{(X,\Sigma_X)}(x)(M) &= \delta_x(M) = \chi_M(x)
	\end{align*}
	where $\chi_M \colon X \to \{0,1\}$ is the indicator function on $M \in \Sigma_X$.
\end{lemma}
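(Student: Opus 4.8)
The plan is to verify, in order, that $\eta$ and $\mu$ are well-defined morphisms of $\cat{Meas}$, that they are natural, and that the monad laws hold. All four steps rest on two standard facts about integration of non-negative functions against the measures in $\Delta$. First, the change-of-variables (image measure) formula: for a measurable $f\colon (X,\Sigma_X)\to(Y,\Sigma_Y)$, a measure $\phi\in\Delta(X,\Sigma_X)$ and a measurable $g\colon (Y,\Sigma_Y)\to[0,\infty]$ one has $\int_Y g\diff(\phi\circ f^{-1})=\int_X(g\circ f)\diff\phi$. Second, for fixed measurable $g\colon (Y,\Sigma_Y)\to[0,\infty]$ the \emph{integration functional} $I_g\colon\Delta(Y,\Sigma_Y)\to[0,\infty]$, $\psi\mapsto\int_Y g\diff\psi$, is measurable, and moreover $\int_{\Delta(Y,\Sigma_Y)}I_g\diff\Phi=\int_Y g\diff(\mu_{(Y,\Sigma_Y)}\Phi)$ for every $\Phi\in\Delta\Delta(Y,\Sigma_Y)$. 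Both are proved by the usual approximation scheme: they hold for $g$ an indicator $\chi_N$ (where $I_{\chi_N}=ev_N$, which is measurable by the very definition of $\Sigma_{\Delta(Y,\Sigma_Y)}$ as the initial $\sigma$-algebra for the family $\{ev_N\}$), extend to simple functions by linearity, and to all non-negative measurable $g$ by the monotone convergence theorem; no finiteness hypotheses are needed since everything is non-negative.

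Granting these, well-definedness is immediate. The map $\eta_{(X,\Sigma_X)}\colon x\mapsto\delta_x$ is measurable because $ev_M\circ\eta_{(X,\Sigma_X)}=\chi_M$ and $\Sigma_{\Delta(X,\Sigma_X)}$ is generated by the $ev_M$, and each $\delta_x$ is plainly a measure. For $\mu$, given $\Phi\in\Delta\Delta(X,\Sigma_X)$ the set function $M\mapsto\int ev_M\diff\Phi$ is valued in $[0,\infty]$ and is $\sigma$-additive by monotone convergence applied to the increasing partial sums $\sum_{i\le n}ev_{M_i}=ev_{M_1\uplus\cdots\uplus M_n}$, hence it is a measure; and $\mu_{(X,\Sigma_X)}$ is measurable since $ev_M\circ\mu_{(X,\Sigma_X)}=I_{ev_M}$ is measurable for each $M$.

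Naturality is then a one-line computation on generators. For $\eta$, $(\Delta f\circ\eta_{(X,\Sigma_X)})(x)(M)=\delta_x(f^{-1}M)=\chi_M(f(x))=\delta_{f(x)}(M)$. For $\mu$, evaluating $\Delta f\circ\mu_{(X,\Sigma_X)}$ and $\mu_{(Y,\Sigma_Y)}\circ\Delta\Delta f$ at $\Phi$ and a measurable $N\subseteq Y$ reduces, via $ev_N\circ\Delta f=ev_{f^{-1}N}$ and change of variables, to the identity $\int ev_{f^{-1}N}\diff\Phi=\int(ev_N\circ\Delta f)\diff\Phi$. The unit laws are equally short: $\mu_{(X,\Sigma_X)}(\eta_{\Delta(X,\Sigma_X)}\psi)(M)=\int ev_M\diff\delta_\psi=ev_M(\psi)=\psi(M)$, and $\mu_{(X,\Sigma_X)}(\Delta\eta_{(X,\Sigma_X)}\psi)(M)=\int_X(ev_M\circ\eta_{(X,\Sigma_X)})\diff\psi=\int_X\chi_M\diff\psi=\psi(M)$, using change of variables in the second.

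The associativity law $\mu\circ\Delta\mu=\mu\circ\mu\Delta$ is the step that needs the most care, and it is where the two facts above do the real work. Fix $\Psi\in\Delta\Delta\Delta(X,\Sigma_X)$ and $M\in\Sigma_X$. For the left-hand side, change of variables gives $\mu_{(X,\Sigma_X)}(\Delta\mu_{(X,\Sigma_X)}\Psi)(M)=\int(ev_M\circ\mu_{(X,\Sigma_X)})\diff\Psi=\int I_{ev_M}\diff\Psi$. For the right-hand side, the second fact applied with $g=ev_M$ and outer measure $\mu_{\Delta(X,\Sigma_X)}\Psi$ gives $\mu_{(X,\Sigma_X)}(\mu_{\Delta(X,\Sigma_X)}\Psi)(M)=\int ev_M\diff(\mu_{\Delta(X,\Sigma_X)}\Psi)=\int I_{ev_M}\diff\Psi$ as well, so the two agree. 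This is the point where a Tonelli-type interchange of integrals is hidden, and it is legitimate precisely because the integrands are non-negative. I expect this bookkeeping --- keeping straight which copy of $\Delta$ each $\mu$ acts on --- to be the only genuinely fiddly part of the argument; every other clause is checked by a one-line identity on the generators $ev_M$ of the relevant $\sigma$-algebras.
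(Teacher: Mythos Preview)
Your proof is correct and follows essentially the same route as the paper: the heart of both arguments is the integral identity $\int_Y g\diff(\mu_{(Y,\Sigma_Y)}\Phi)=\int_{\Delta(Y,\Sigma_Y)}\left(\int_Y g\diff\rho\right)\Phi(\diff\rho)$ (your ``second fact'', the paper's equation cited from Doberkat), established via the indicator--simple--MCT scheme and combined with change of variables to verify associativity. Your write-up is in fact somewhat more complete---you explicitly address well-definedness (measurability of $\eta$, $\mu$ and $\sigma$-additivity of $\mu(\Phi)$) and naturality, which the paper leaves implicit---and your unit-law computations are cleaner, invoking $\int f\diff\delta_\psi=f(\psi)$ directly rather than unwinding the simple-function limits by hand.
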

\begin{proofatend}
	Let $(X,\Sigma_X)$ be a measurable space.
	A function from $\Sigma_X$ to $[0,\infty]$ is called 
	simple whenever it is a finite linear
	combination of indicator functions of $\Sigma_X$-measurable sets.
	For any measurable function $f : \Sigma_X \to [0,\infty]$
	there is a monotonic increasing sequence of non-negative simple 
	functions $(f_n)_{n < \omega}$ such that 
	$f = \lim_{n \to \infty} f_n$. In particular, such a sequence
	can be obtained by defining each $f_n$ as 
	$\sum_{i=1}^{n\cdot2^n} \frac{i}{2^n}\chi_{N_{n,i}}$
	where $N_{n,i}$ is the $\Sigma_X$-measurable
	$\left\{x \in X \,\middle|\, f(x) \in \left[\frac{i}{2^n},\frac{i+1}{2^n}\right)\right\}$
	for $i < n\cdot2^n$ and
	$\left\{x \in X \,\middle|\, n \leq f(x)\right\}$
	for $i = n\cdot2^n$.

	For any $(X,\Sigma_X)$, $\phi \in \Delta(X,\Sigma_X)$ 
	and	$M \in \Sigma_X$ we have that:
	\begin{align*}
		&(\mu_{(X,\Sigma_X)}\circ\eta_{\Delta(X,\Sigma_X)})(\phi)(M)
		=
		\int_{\Delta(X,\Sigma_X)}ev_M\diff \eta_{\Delta(X,\Sigma_X)}(\phi) 
		\\=\;&
		\sup\left\{
				\int_{\Delta(X,\Sigma_X)}\! f\diff \eta_{\Delta(X,\Sigma_X)}(\phi)
			\,\middle|\!
				\begin{array}{l}
					0\leq f\leq ev_M,\\
					 f \text{ is simple}
				\end{array}\!\!\!
			\right\}
		=
		\lim_{n \to \infty}\left\{
			\sum_{i=1}^{n2^n} \frac{i}{2^n}\chi_{N_{n,i}}(\phi)
		\right\} 
		=
		\phi(M)
	\end{align*}
	where each $N_{n,i}$ is the $i$-th measurable set of the $n$-th simple
	function defining $ev_M$ as above. Likewise:
	\begin{align*}
		&(\mu_{(X,\Sigma_X)}\circ\Delta\eta_{(X,\Sigma_X)})(\phi)(M)
		= \int_{\Delta(X,\Sigma_X)}ev_M\diff \Delta\eta_{(X,\Sigma_X)}(\phi)
		\\=\;&
		\sup\left\{
				\int_{\Delta(X,\Sigma_X)}\! f\diff \eta_{\Delta(X,\Sigma_X)}(\phi)
			\,\middle|\!
				\begin{array}{l}
					0\leq f\leq ev_M,\\
					 f \text{ is simple}
				\end{array}\!\!\!
			\right\}
		=
		\lim_{n \to \infty}\left\{
				\sum_{i=1}^{n2^n} \frac{i}{2^n}\phi\circ \eta_{(X,\Sigma_X)}^{-1}(N_{n,i})
			\right\}
		\\=\;& \phi(M)
	\end{align*} 
	This completes the proof that $\eta$ satisfies the unit laws of the monad.	
	
	For any measurable function $f \colon X \to [0,\infty]$
	assume that:
	\begin{equation}
		\label{eq:doberkat-lemma-3.2.2}
		\int_{(X,\Sigma_X)} f \diff  \mu_{(X,\Sigma_X)}(\phi) = 
		\int_{\Delta(X,\Sigma_X)}\left(\int_{(X,\Sigma_X)} 
		f \diff  \rho\right) \phi (\diff  \rho)
		\text{.}
	\end{equation}
	
	For any measurable space $(X,\Sigma_X)$, measure $\phi \in \Delta(X,\Sigma_X)$ 
	and	measurable set $M \in \Sigma_X$, the multiplication law for the monad
	is:
	\begin{align*}
		&(\mu_{(X,\Sigma_X)}\circ\mu_{\Delta(X,\Sigma_X)})(\phi)(M)
		= \int_{\Delta(X,\Sigma_X)}ev_M\diff \mu_{\Delta(X,\Sigma_X)}(\phi) 
		\\\eqlabel[\text{i}]{eq:delta-mult-1}\;& \int_{\Delta^2(X,\Sigma_X)}\left(\int_{\Delta(X,\Sigma_X)} 
				ev_M \diff  \rho\right) \phi (\diff  \rho)
		= \int_{\Delta(X,\Sigma_X)}ev_M\circ \mu_{(X,\Sigma_X)}\diff \phi
		\\\eqlabel[\text{ii}]{eq:delta-mult-2}\;& 
			\int_{\Delta^2(X,\Sigma_X)}ev_M\diff \phi\circ \mu_{(X,\Sigma_X)}^{-1}
		= (\mu_{(X,\Sigma_X)}\circ\Delta\mu_{(X,\Sigma_X)})(\phi)(M)
	\end{align*}
	where \eqref{eq:delta-mult-1} follows from \eqref{eq:doberkat-lemma-3.2.2}
	and \eqref{eq:delta-mult-2} from the \emph{change of variables theorem}\footnote{See e.g.~T.~Tao, \emph{An Introduction to Measure Theory} (2011)}.
			
	Finally, we prove \eqref{eq:doberkat-lemma-3.2.2}.
	The argument follows Doberkat's proof of the same equality
	in the case of bounded measurable functions and the Giry monad
	(cf.~\cite[Lem~3.2.2]{doberkat:09stoc-logic}). If $f$ is the
	indicator function $\chi_M \colon X \to [0,\infty]$ (for $M \in \Sigma_X$) 
	then \eqref{eq:doberkat-lemma-3.2.2} is precisely $\mu_{(X,\Sigma_X)}$.
	By linearity of integration of simple functions \eqref{eq:doberkat-lemma-3.2.2}
	holds for any non-negative simple function.	Since every non-negative 
	measurable function is the limit of some monotone increasing sequence 
	of non-negative simple functions, \eqref{eq:doberkat-lemma-3.2.2} holds,
	by \emph{monotone convergence theorem}, on any non-negative measurable 
	function completing the proof.
\end{proofatend}
Roughly speaking, $\Delta$ can be though as the 
``measurable equivalent'' of $\mathcal F_{[0,\infty]}$
and, just as $\mathcal F_{[0,\infty]}$ generalises the 
probability distribution monad $\mathcal D$, $\Delta$  
generalises the probability measure monad $\Delta_{=1}$
(a.k.a.~\emph{Giry monad} \cite{prakash:02bisim,prakashbook09,doberkat:09stoc-logic}).

Each measurable space $\Delta(X,\Sigma_X)$ presents a partial
order induced by the pointwise extension of the natural order on $[0,\infty]$.
For any two measures $\phi$ and $\psi$ on $(X,\Sigma_X)$ we have:
$
	\phi \leq \psi \defiff \forall M \in \Sigma_X . \phi(M) \leq \psi(M)
	\text{.}
$
The join of any $\phi,\psi \in \Delta(X,\Sigma_X)$ 
is given, on each measurable set $M\in \Sigma_X$, by:
	\[
		\phi \vee \psi(M)\defeq\sup\{ \phi(M_1) + \psi(M_2) \mid
		M_1\in \Sigma_X, M_1 \subseteq M, 
		M_2 = M \setminus M_1 \}.
	\]
Any increasing $\omega$-chain $(\phi_i)_{i<\omega}$ in
$\Delta(X,\Sigma_X)$ has a supremum and it is given,
on each $M \in \Sigma_X$, as:
$
	{\textstyle\bigvee_{i < \omega}\phi_i(M)} \defeq 
		\lim_{i \to \infty} \phi_i(M) =
		\sup_{i} \{\phi_i(M)\}
	\text{.}
$
Each $\Delta(X,\Sigma_X)$ is an $\omega$-Cpo with binary joins.
This structure extends pointwise to the hom-objects of $\kl(\Delta)$
and is preserved by composition in $\kl(\Delta)$.
\begin{proposition}
	\label{prop:kl-delta-cpoj}
	The category $\kl(\Delta)$ is \cpoj-enriched.
	\par
\end{proposition}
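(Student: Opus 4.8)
The plan is to verify the two defining clauses of \cpoj-enrichment separately, in the spirit of the one-line argument for Proposition~\ref{prop:kl-fw-cpoj}: first, that each hom-object $\kl(\Delta)(X,Y)=\cat{Meas}((X,\Sigma_X),\Delta(Y,\Sigma_Y))$ is an $\omega$-cpo with binary joins; second, that Kleisli composition is monotone and preserves suprema of ascending $\omega$-chains in each argument. Note that the second clause does \emph{not} require composition to preserve binary joins, since the tensor of \cpoj is the cartesian product and \cpoj-morphisms need only be monotone and $\omega$-continuous; this is consistent with the fact that $\kl(\Delta)$, like $\kl(\mathcal F_{[0,\infty]})$, is not expected to satisfy \eqref{law:LD}. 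The only genuinely non-routine ingredient will be the measurability of the pointwise binary join, discussed below.

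For the composition clause, the key is to rewrite Kleisli composition in ``Markov-kernel'' form. From $g\bullet f=\mu_Z\circ\Delta g\circ f$ and the change-of-variables identity already used in the appendix proof of Lemma~\ref{lem:delta-monad}, one gets
\[
	(g\bullet f)(x)(M)=\int_{Y}g(y)(M)\,f(x)(\diff y)\qquad(x\in X,\ M\in\Sigma_Z)\text{.}
\]
Monotonicity in both arguments is then immediate from monotonicity of the integral, and measurability of $g\bullet f$ as a map into $\Delta(Z,\Sigma_Z)$ is exactly the well-definedness of $\kl(\Delta)$ already granted by Lemma~\ref{lem:delta-monad} (approximate $y\mapsto g(y)(M)$ from below by simple functions and note $x\mapsto\int_Y s\,f(x)(\diff y)$ is a finite $[0,\infty]$-combination of evaluations $ev_A\circ f$). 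For $\omega$-continuity in $g$: along an ascending chain $g_0\le g_1\le\cdots$ the integrands $y\mapsto g_i(y)(M)$ form an ascending chain of non-negative measurable functions, so the monotone convergence theorem gives $\big(\bigvee_i g_i\big)\bullet f=\bigvee_i(g_i\bullet f)$. For $\omega$-continuity in $f$: fix $h=g(\cdot)(M)$ and an ascending chain $f_0\le f_1\le\cdots$, and use that $\phi\mapsto\int_Y h\,\diff\phi$ preserves $\omega$-sups of chains of measures — for simple $h$ this reduces to the analogous property of the $ev_A$, and for general $h$ one writes $\int_Y h\,\diff\phi=\sup\{\int_Y s\,\diff\phi\mid 0\le s\le h\text{ simple}\}$ and exchanges the two suprema — whence $g\bullet\big(\bigvee_i f_i\big)=\bigvee_i(g\bullet f_i)$.

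For the hom-object clause, equip $\kl(\Delta)(X,Y)$ with the pointwise order inherited from $\Delta(Y,\Sigma_Y)$; given an ascending chain $(f_i)_{i<\omega}$, set $f(x)=\bigvee_i f_i(x)$, which is a measure by continuity from below, and which is measurable into $\Delta(Y,\Sigma_Y)$ because for each $M\in\Sigma_Y$ the function $ev_M\circ f=\sup_i(ev_M\circ f_i)$ is a \emph{countable} supremum of measurable maps $X\to[0,\infty]$; it is then clearly the least upper bound of the chain. The hard part will be the corresponding statement for binary joins: unfolding the formula, $(f(x)\vee g(x))(M)=\sup\{f(x)(M_1)+g(x)(M\setminus M_1)\mid M_1\in\Sigma_Y,\ M_1\subseteq M\}$ is an \emph{uncountable} supremum of measurable functions of $x$, which need not be measurable in general. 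I would handle this by replacing the uncountable supremum with a canonical choice of $M_1$: writing $\lambda=f(x)+g(x)$ and $a,b$ for the Radon--Nikodym densities of $f(x),g(x)$ with respect to $\lambda$ (so $a+b=1$ $\lambda$-a.e.), one has $f(x)\vee g(x)=\max(a,b)\cdot\lambda$, with the supremum attained at $M_1=M\cap\{a\ge\tfrac12\}$, and then verifying that this single set varies measurably in $x$; alternatively, when $\Sigma_Y$ is countably generated the defining supremum may be taken over a countable generating algebra for the relevant (e.g.~$\sigma$-finite) measures. In any case this is precisely the assertion, recalled in the discussion preceding the proposition, that the $\omega$-cpo-with-joins structure of the spaces $\Delta(Y,\Sigma_Y)$ ``extends pointwise to the hom-objects of $\kl(\Delta)$'', and combined with the composition clause above it yields the claimed \cpoj-enrichment.
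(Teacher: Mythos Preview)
Your overall decomposition---first showing each hom-set is an $\omega$-cpo with binary joins, then checking composition is monotone and $\omega$-continuous in each argument---is exactly the paper's, and your treatment of $\omega$-suprema (via pointwise limits and countable suprema of measurable functions) and of continuity of composition (via the Markov-kernel rewriting and monotone convergence) matches the paper's appendix proof essentially line for line.

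The one point of divergence is the measurability of $x\mapsto f(x)\vee g(x)$. The paper does not use Radon--Nikodym or countable generation; it instead asserts that the uncountable supremum defining $(f\vee g)(x)(M)$ can be rewritten as a supremum over a \emph{countable} family of pairs $(M^i_1,M^i_2)$, arguing that any supremum in $[0,\infty]$ is the limit of some ascending $\omega$-chain of its elements. Your concrete proposals are more transparent but each carries an extra hypothesis: Radon--Nikodym needs $f(x)+g(x)$ to be $\sigma$-finite (not guaranteed for arbitrary $[0,\infty]$-valued measures), and still leaves the joint measurability of the density to be checked; countable generation of $\Sigma_Y$ is an assumption the paper does not make. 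Conversely, the paper's one-line justification is itself delicate, since the countable chain extracted from a supremum in $[0,\infty]$ a priori depends on $x$, so a uniform countable family is not immediate. In short, you and the paper reach the same place by the same route except at this single step, where your argument is more explicit about the obstacle and its cost, while the paper's is briefer but relies on an unargued uniform countable selection.
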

\begin{proofatend}
	The proof is organized as follows:
	we show that 
	\begin{enumerate*}[label=\em(\alph*)]
		\item
			each $\Delta(X,\Sigma_X)$ has binary joins and 
			is an $\omega$-Cpo;
		\item
			each hom-set of $\kl(\Delta)$ has binary joins and 
			is an $\omega$-Cpo;
		\item
			composition is continuous in both components.
	\end{enumerate*}
	Therefore $\kl(\Delta)$ is \cpo-enriched, has binary joins.

	\paragraph{\emph{(a)}} The join of any $\phi,\psi \in \Delta(X,\Sigma_X)$ 
	is given, on $M\in \Sigma_X$, by:
	\[
		\phi \vee \psi(M)\defeq\sup\left\{ \phi(M_1) + \psi(M_2) \mid
		M_1\in \Sigma_X, M_1 \subseteq M, 
		M_2 = M \setminus M_1 \right\}\text{.}
	\]
	In order to verify this claim first we need to show that $\phi\vee \psi$ is a measure. Indeed, $\phi\vee \psi(\emptyset) = 0$. Consider a countable disjoint family of measurable sets $M_i\in \Sigma_X$ and let $M=\bigcup_i M_i$. Assume $\phi\vee \psi(M) < \infty$ and $\phi\vee \psi(M_i) < \infty$. In this case, for any $\varepsilon>0$ there are measurable, disjoint sets $N_1,N_2$ s.t.~$M=N_1\cup N_2$ and
	\begin{align*}
		 &\phi\vee \psi (M) - \varepsilon < 
		 \phi(N_1)+\psi(N_2) = 
		 \phi\left(\bigcup_{i=1}^{\infty} M_i \cap N_1\right) + \psi\left(\bigcup_{i=1}^{\infty} M_i \cap N_2\right) 
		 \\=\;&
		 \sum_{i=1}^{\infty} \phi(M_i\cap N_1) +\psi(M_i\cap N_2) \leq 
		 \sum_{i=1}^{\infty} \phi\vee\psi(M_i)\text{.}
	\end{align*}
	The above is true for any $\varepsilon > 0$ and hence $\phi\vee \psi (\bigcup_i M_i) \leq \sum_i \phi\vee\psi(M_i)$. To see the reverse inequality is true consider arbitrary $\varepsilon$ and note that for any $M_i$ there are measurable, disjoint sets $N_{1,i}, N_{2,i}$ such that $M_i = N_{1,i} \cup N_{2,i}$ we have:
	\[
	\phi\vee \psi(M_i) - \frac{\varepsilon}{2^i} \leq \phi(N_{1,i}) + \psi(N_{2,i})\text{.}
	\]
	Hence, 
	\[
		\sum_{i=1}^{\infty} \phi \vee \psi(M_i) - \varepsilon \leq \sum_{i=1}^{\infty} \phi(N_{1,i}) + \psi(N_{2,i})
		=
		\phi\left(\bigcup_{i=1}^{\infty} N_{1,i}\right)+\psi\left(\bigcup_{i=1}^{\infty} N_{2,i}\right) 
		\leq \phi\vee \psi (M)\text{.}
	\]
	Hence, $\sum_{i=1}^{\infty} \phi \vee \psi(M_i) \leq  \phi\vee \psi (M)$. Let $\phi\vee \psi(M) = \infty$; then for any $r>0$ there are disjoint and measurable sets $N_1,N_2$ such that $M=N_1\cup N_2$ and 
	\[
	r\leq \phi(N_1) + \psi(N_2) = \sum_{i=1}^{\infty} \phi(M_i\cap N_1) +\psi(M_i\cap N_2)\leq \sum_{i=1}^{\infty}\phi\vee\psi(M_i)\text{.}
	\]
	Hence,  $\sum_{i=1}^{\infty} \phi\vee\psi(M_i)= \infty =  \phi\vee \psi (M)$. Similarly we verify the case for $ \phi\vee\psi(M_i) = \infty$ for some $i$. This proves $\sigma$-additivity of $\psi\vee \phi$. In order to complete the proof we will show that $\psi\vee \phi$ is indeed the supremum of $\psi$ and $\phi$. Consider any measure $\rho:\Sigma\to [0,\infty]$ such that $\phi,\psi\leq \rho$. For any measurable $M$ and any measurable and disjoint $N_1,N_2$ such that $M=N_1\cup N_2$ we have:
	\[
		\rho(M) = \rho(N_1\cup N_2) = \rho(N_1)+\rho(N_2)\geq \phi(N_1)+\psi(N_2)\text{.}
	\]
	Hence, $\rho(M) \geq \phi \vee \psi(M)$. Therefore, the hom-sets of $\kl(\Delta)$ admit arbitrary binary joins, which follows by the fact that the order on these hom-sets is the pointwise extension of the order on measures.
	
	Any increasing $\omega$-chain $(\phi_i)_{i\in \mathbb N}$ in
	$\Delta(X,\Sigma_X)$ has a supremum and it is given,
	on each measurable set $M \in \Sigma_X$, as:
	\[
		\bigvee_{i < \omega}\phi_i(M) \defeq 
			\lim_{i \to \infty} \phi_i(M) =
			\sup_{i < \omega} \{\phi_i(M)\}
		\text{.}
	\]
	We need to verify that $\bigvee_{i<\omega} \phi_i:\Sigma_X\to [0,\infty]$ is indeed a measure. We have $\bigvee_{i<\omega} \phi_i(\emptyset)  = 0$ and for a countable family of measurable sets $M_n\in \Sigma_X$ the following is true:
	\begin{align*}
		&\bigvee_{i<\omega} \phi_i\left(\bigcup_{n<\omega} M_n\right) = 
		\sup_{i<\omega} \left\{ \phi_i\left(\bigcup_{n<\omega} M_n\right) \right\} 
		= 
		\sup_{i<\omega} \left\{ \sum_{n<\omega} \phi_i(M_n) \right\} 
		\\=\;&
		\sup_{i<\omega} \sup_{n<\omega} \left\{ \sum_{k=1}^n \phi_i(M_k) \right\} =
		\sup_{n<\omega} \sup_{i<\omega} \left\{ \sum_{k=1}^n \phi_i(M_k) \right\} = 
		\sup_{n<\omega} \left\{\sum_{k=1}^n \sup_i \phi_i(M_k) \right\}
		\\=\;&
		\sup_{n<\omega} \left\{\sum_{k=1}^n \bigvee_{i<\omega} \phi_i(M_k)\right\} 
		=
		\sum_{n<\omega}  \bigvee_{i<\omega} \phi_i(M_k)\text{.} 
	\end{align*}
	This proves that $\bigvee_{i<\omega} \phi_i$ is a measure on $(X,\Sigma_X)$.
	
	\paragraph{\emph{(b)}} Binary joins of measures are extended to 
	hom-objects of $\kl(\Delta)$ in a pointwise manner:
	$(f \vee g)(x) \defeq f(x) \vee g(x)$
	for any $f$ and $g$ in $\kl(\Delta)((X,\Sigma_X),(Y,\Sigma_Y))$
	and $x \in X$. In order to verify the claim we have to show that 
	the function $f \vee g$ is measurable. Since $\Sigma_{\Delta(Y,\Sigma_Y)}$
	is the initial $\sigma$-algebra w.r.t.~$\{ev_M \mid M \in \Sigma_X\}$
	it suffices to show that $ev_M \circ (f \vee g)$ is measurable for 
	any $M \in \Sigma_X$. 
	The join $(f\vee g)$ is pointwise and defined as the 
	supremum of measurable functions:
	\[
	(f\vee g)(x)(M) =\sup\left\{
			(ev_{M_1}\circ f)(x) + (ev_{M_2}\circ g)(x)
		\,\middle|\,
		\begin{array}{l}
			M_1 \in \Sigma_X,
			M_1 \subseteq M,\\
			M_2 = M \setminus M_1
		\end{array}
		\right\}
	\]
	Hence, the claim follows by showing that the above supremum
	can always be reformulated into one over a countable set.
	This set is formed by a suitable family of measurable functions
	${ev_{M^i_1}\circ f + ev_{M^i_2}\circ g}$ parameterised by
	a countable sequence of suitable pairs $(M^i_1,M^i_2)$.
	In fact, the supremum is taken over a subset of
	$[0,\infty]$ and hence can always be defined as the limit of
	some non-decreasing $\omega$-chain of its elements.
		
	Suprema for ascending $\omega$-chains of measurable functions
	are defined by pointwise extension of suprema for ascending
	$\omega$-chains of measures i.e.~for any ascending $\omega$-chain
	$(f_i)_{i \in \mathbb N}$ in $\kl(\Delta)((X,\Sigma_X),(Y,\Sigma_Y))$
	and $x \in X$:
	\[\textstyle
		\big(\bigvee_{i < \omega} f_i\big) (x) \defeq 
		\bigvee_{i < \omega} f_i(x) = \sup\{f_i(x)\}
		\text{.}
	\]
	Measurability follows by $\Sigma_{\Delta(X,\Sigma_X)}$ being
	the initial $\sigma$-algebra with respect to $\{ev_M \mid M \in \Sigma_X\}$
	and by suprema being given pointwise:
	\[
		\big({\textstyle\bigvee_{i < \omega}} f_i\big) (x)(M) = 
		\sup\{f_i(x)(M)\} = \lim_{i\to \infty} f_i(x)(M)
		\text{.}
	\]
		
	\paragraph{\emph{(c)}} Finally, we show that composition is continuous in both components i.e.~su\-pre\-ma are preserved.
	Let $(f_i)_{i \in \mathbb N}$ be an ascending $\omega$-chain in $\kl(\Delta)$
	as above and let $h \colon (W,\Sigma_W) \to (X,\Sigma_X)$ and 
	$k \colon (Y,\Sigma_Y) \to (Z,\Sigma_Z)$ be arrows in $\kl(\Delta)$.
	Composition is left continuous:
	${\textstyle\bigvee} f_i \bullet h = ({\textstyle\bigvee} f_i) \bullet h$.
	In fact, the following holds:
	\begin{align*}
		&(\mu_{(Z,\Sigma_Z)} \circ \Delta ({\textstyle\bigvee_{i<\omega}} f_i) \circ h)(x)(M)
		= \int_{\Delta(Z,\Sigma_Z)} ev_M
			\diff  h(x)\circ ({\textstyle\bigvee_{i<\omega}} f_i)^{-1}
		\\=\;&\int_{\Delta(Y,\Sigma_Y)} ev_M \circ {\textstyle\bigvee_{i<\omega}} f_i
			\diff  h(x)
		\eqlabel[\text{i}]{eq:kl-delta-left-continuous-2}
			\int_{\Delta(Y,\Sigma_Y)} {\textstyle\bigvee_{i<\omega}} ev_M \circ f_i
			\diff  h(x)
		\\\eqlabel[\text{ii}]{eq:kl-delta-left-continuous-3}\;& {\bigvee}\int_{\Delta(Y,\Sigma_Y)} ev_M \circ f_i
			\diff  h(x)
		= ({\textstyle\bigvee_{i<\omega}} \mu_{(Z,\Sigma_Z)} \circ \Delta f_i \circ h)(x)(M)
	\end{align*}
	where \eqref{eq:kl-delta-left-continuous-3} follows from the monotonic convergence theorem and \eqref{eq:kl-delta-left-continuous-2} from
	\[
		(ev_M\circ {\textstyle\bigvee} f_i)(x) = 
		({\textstyle\bigvee} f_i)(x)(M) = 
		\sup\{f_i(x)(M)\} = ({\textstyle\bigvee} ev_M\circ f_i)(x)\text{.}
	\]
	The proof that 
	${\textstyle\bigvee} g \bullet f_i = g \bullet {\textstyle\bigvee} f_i$
	is completely analogous, and this completes the proof.
\end{proofatend}

\paragraph{Measurable transition systems}
As for the Giry monad it generalise, the measure monad $\Delta$ is equipped with a 
strength, a costrength, and a double strength. In particular, the latter is given
on each component as:
\[
	\dstr_{(X,\Sigma_X),(Y,\Sigma_Y)}(\phi,\psi)(M \times N) =
	(\phi \otimes \psi)(M \times N) = \phi(M)\cdot\psi(N)
\text{.}\]\par
\begin{lemma}
	The measure monad $(\Delta,\mu,\eta)$ is strong and commutative.
\end{lemma}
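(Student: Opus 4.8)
The plan is to produce the strength explicitly, following the pattern used for the Giry monad by Doberkat and already exploited in the proof of Lemma~\ref{lem:delta-monad}. For measurable spaces $(X,\Sigma_X),(Y,\Sigma_Y)$ set
\[
	\str_{(X,\Sigma_X),(Y,\Sigma_Y)}(x,\psi) \defeq \delta_x \otimes \psi \;\in\; \Delta\big((X,\Sigma_X)\times(Y,\Sigma_Y)\big),
\]
the measure characterised on a measurable rectangle by $(\delta_x\otimes\psi)(M\times N) = \chi_M(x)\cdot\psi(N)$ and, equivalently, on an arbitrary $P \in \Sigma_X\otimes\Sigma_Y$ by $(\delta_x\otimes\psi)(P) = \psi(P_x)$, where $P_x = \{y \mid (x,y)\in P\}$ is the $x$-section. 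First I would check that $\str$ is a morphism of \cat{Meas}: countable additivity of $\delta_x\otimes\psi$ follows from that of $\psi$ since sections commute with countable disjoint unions, and measurability of $\str$ reduces --- by initiality of $\Sigma_{\Delta(X\times Y)}$ with respect to the evaluations $\{ev_P\}$ --- to measurability of $(x,\psi)\mapsto\psi(P_x)$ for each $P$. On rectangles this is $(x,\psi)\mapsto\chi_M(x)\cdot ev_N(\psi)$, a product of measurable functions; the general case follows by the standard monotone-class bootstrap from the $\pi$-system of rectangles, with countable unions and increasing limits handled by the monotone convergence theorem.

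Next I would verify the four coherence axioms of a strength recalled in Section~\ref{sec:prelim} --- compatibility with $\mu$, with $\eta$, with the associator and with the left unitor, the monoidal structure on \cat{Meas} being the cartesian product with terminal unit. Each is an equality between two measures on a product space, so by uniqueness of a measure on the generating $\pi$-system of rectangles it suffices to test both sides on a rectangle $M\times N$, where everything collapses to manipulations of indicator functions, Dirac measures, products and the integral formula $\mu_{(Z,\Sigma_Z)}(\Xi)(M) = \int ev_M\,\diff\Xi$ from the proof of Lemma~\ref{lem:delta-monad}; for instance $\eta_{X\times Y} = \str\circ(id_X\times\eta_Y)$ reads $\chi_M(x)\chi_N(y) = \chi_M(x)\cdot\delta_y(N)$, and the square relating $\str$ to $\mu$ unwinds by integrating $\psi$ against a Dirac measure. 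Since \cat{Meas} is symmetric monoidal, the costrength need not be built separately: the construction of a costrength from a strength recalled in Section~\ref{sec:prelim} yields $\cstr_{(X,\Sigma_X),(Y,\Sigma_Y)}(\phi,y) = \phi\otimes\delta_y$, the measure $P\mapsto\phi(P^y)$ on the $y$-section $P^y = \{x \mid (x,y)\in P\}$.

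For commutativity I would expand both sides of the defining equation at a pair $(\phi,\psi)$ using the integral formula for $\mu$ together with the change-of-variables identity already invoked in the proof of Lemma~\ref{lem:delta-monad}: the side built from $\str$ becomes the measure $P \mapsto \int_X \psi(P_x)\,\diff\phi(x)$ and the side built from $\cstr$ becomes $P \mapsto \int_Y \phi(P^y)\,\diff\psi(y)$. These two measures agree on every rectangle $M\times N$, where both equal $\phi(M)\cdot\psi(N)$, hence agree everywhere by the uniqueness argument above; this also identifies the double strength with $\phi\otimes\psi$, matching the formula displayed before the statement. The substantive ingredient is the Fubini--Tonelli interchange $\int_X\psi(P_x)\,\diff\phi(x) = \int_Y\phi(P^y)\,\diff\psi(y)$, which I would once more reduce from rectangles to simple functions and then to arbitrary measurable indicators by monotone convergence.

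I expect the main obstacle to be precisely these measure-theoretic verifications for $[0,\infty]$-valued measures, not the categorical bookkeeping: the monotone-class arguments cannot be phrased with subtraction, so closure under complements must be replaced at each step by a countable-additivity argument and interchanges of suprema and integrals pushed through the monotone convergence theorem --- the same style already visible in the (omitted) proofs of Lemma~\ref{lem:delta-monad} and Proposition~\ref{prop:kl-delta-cpoj}. Once measurability of $\str$ and the Fubini--Tonelli identity are secured, the coherence axioms reduce to routine computations on measurable rectangles.
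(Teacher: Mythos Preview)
Your proposal follows essentially the same route as the paper: define the strength by $\str(x,\psi)=\delta_x\otimes\psi$, verify the coherence axioms by direct computation (the paper unfolds the integrals via the explicit simple-function approximation of $ev_M$ rather than invoking a $\pi$-system argument, but the content is the same), obtain the costrength from the braiding as $\cstr(\phi,y)=\phi\otimes\delta_y$, and establish commutativity by showing that both composites evaluate to $\phi(M)\cdot\psi(N)$ on rectangles. If anything you are slightly more explicit than the paper about measurability of $\str$ and about the Fubini--Tonelli step underlying commutativity; the paper checks only rectangles and appeals to a ``symmetric argument'', so your proposal is at the same level of detail and rigour.
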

\begin{proofatend}
	Let $\mathcal{X} = (X,\Sigma_X)$ and $\mathcal{Y} = (Y,\Sigma_Y)$
	be two measurable spaces. The proof proceeds as follows:
	\begin{enumerate*}[label=\em(\alph{*})]
		\item
	we show that 
	$\str_{\mathcal{X},\mathcal{Y}}(x,\psi) = \delta_x \otimes \psi$
	defines a strength for $\Delta$;
		\item
	we define a costrength out of $\str$ and the cartesian structure of \cat{Meas};
		\item
	we show $\str$ and $\cstr$ render $\Delta$ commutative and
	define
	$\dstr_{\mathcal{X},\mathcal{Y}}(\phi,\psi) = \phi \otimes \psi$.
	\end{enumerate*}
	
	The natural transformation $\str$ is coherent with respect to $\mu$:
	\begin{align*}
		&\left(\mu_{\mathcal{X}\times\mathcal{Y}} 
			\Delta\left(\str_{\mathcal{X},\mathcal{Y}}\right) 
			\str_{\mathcal{X},\Delta\mathcal{Y}}\right)(x,\psi)(M\times N) 
		= 
		\int_{\Delta(\mathcal{X}\times\mathcal{Y})} 
			ev_{M\times N} \diff(\delta_x \otimes \psi)\circ
			\str_{\mathcal{X},\mathcal{Y}}^{-1}
		\\=\;&
		\lim_{n \to \infty}\left\{
			\sum_{i=0}^{n2^n} \frac{i}{2^n}
			\cdot
			(\delta_x \otimes \psi)
			\left(\left\{(x',\phi) 
			\;\middle|\; 
			x' \in M \land i \leq 2^n \cdot \psi(N) < i+1
			\right\}\right)
		\right\}
		\\=\;&
		\lim_{n \to \infty}\left\{
			\sum_{i=0}^{n2^n} \frac{i}{2^n}
			\cdot
			\delta_x(M)\cdot \psi(\{\phi \mid i \leq 2^n \cdot \phi(N) < i+1
			\})
		\right\}
		\\=\;&
		\delta_x(M)\cdot \lim_{n \to \infty}\left\{
			\sum_{i=0}^{n2^n} \frac{i}{2^n}
			\cdot
			\psi(\{\phi \mid i \leq 2^n \cdot \phi(N) < i+1
			\})
		\right\}
		\\=\;&
		\delta_x(M)\cdot \int_{\Delta\mathcal{Y}} ev_{N} \diff \psi
		= 
		\delta_x(M)\cdot \mu_{\mathcal{Y}}(\psi)(N)
		=
		\left(\str_{\mathcal{X},\mathcal{Y}} \circ \left(id_{\mathcal{X}} \times \mu_{\mathcal{Y}}\right)\right)(M\times N)
	\end{align*}
	and with respect to $\eta$:
	\[
		\eta_{\mathcal{X},\mathcal{Y}}(x,y)
		= \delta_{(x,y)}
		= \delta_x \otimes \delta_y
		= 
		\str_{\mathcal{X},\mathcal{Y}}(x,\eta_{\mathcal{Y}}(y))
		\text{.}
	\]
	The remaining coherence conditions readily follow by similar arguments and
	routine expansion of the definitions. 
	
	Cartesian products define a symmetric monoidal structure on 
	$\cat{Meas}$; henceforth let $\gamma$ denote the corresponding braiding
	natural isomorphism. As for any strong monad on a symmetric monoidal category,
	$\gamma$ and $\str$ induce the costrngth $\cstr$ given, on each component,
	as follows:
	\[
		\cstr_{\mathcal{X},\mathcal{Y}}(\phi,y) \defeq 
		\left(\Delta\gamma_{\mathcal{Y},\mathcal{X}} \circ 
		\str_{\mathcal{Y},\mathcal{X}} \circ 
		\gamma_{\Delta\mathcal{X},\mathcal{Y}}\right)(\phi,y) = 
		\phi \otimes \delta_y
		\text{.}
	\]
	Finally, $\str$ and $\cstr$ define the double strength above:
	\begin{align*}
		& \left(\mu_{\mathcal{X},\mathcal{Y}}\circ
			\Delta\left(\str_{\mathcal{X},\mathcal{Y}}\right)\circ
			\cstr_{\mathcal{X},\Delta\mathcal{Y}}\right)
			(\phi,\psi)(M\times N)
		\\=\;&
		\int_{\Delta(\mathcal{X}\times\mathcal{Y})} 
			ev_{M\times N} \diff\left(\cstr_{\mathcal{X},\Delta\mathcal{Y}}(\phi,\psi)\right)\circ
			\str_{\mathcal{X},\mathcal{Y}}^{-1}
		\\=\;&
		\lim_{n \to \infty}\left\{
			\sum_{i=0}^{n2^n} \frac{i}{2^n}
			\cdot
			(\phi \otimes \delta_\psi)
			\left(\left\{(x',\psi') 
			\;\middle|\; 
			x' \in M \land i \leq 2^n \cdot \psi'(N) < i+1
			\right\}\right)
		\right\}
		\\=\;&
		\lim_{n \to \infty}\left\{
			\sum_{i=0}^{n2^n} \frac{i}{2^n}
			\cdot
			\phi(M) \cdot \delta_\psi
			\left(\left\{\psi' 
			\;\middle|\; 
			i \leq 2^n \cdot \psi'(N) < i+1
			\right\}\right)
		\right\}	
		\\=\;&
		\phi(M) \cdot \psi(N) = (\phi \otimes \psi)(M \times N)\text{.}
	\end{align*}
	By a symmetric argument, we have that:
	\[
		\left(\mu_{\mathcal{X},\mathcal{Y}}\circ
				\Delta\left(\str_{\mathcal{X},\mathcal{Y}}\right)\circ
				\cstr_{\mathcal{X},\Delta\mathcal{Y}}\right)
				(\phi,\psi) = \phi \otimes \psi
	\]
	completing the proof.
\end{proofatend}
Therefore, polynomial functors have canonical liftings to $\kl(\Delta)$.
\begin{lemma}
	\label{lem:kl-delta-polynomial-continuous}
	Canonical liftings of polynomial functors
	are locally continuous.\par
\end{lemma}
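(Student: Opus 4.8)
The plan is to follow the structural-recursion argument of Hasuo et al.\ for $\mathcal P$ and $\mathcal D$ (cf.~\cite[Lem.~2.6]{hasuo07:trace}), reused above for $\mathcal F_W$ (Lemma~\ref{lem:kl-fw-polynomial-continuous}) and $\mathcal V$ (Lemma~\ref{lem:kl-vietoris-polynomial-continuous}). Since the order on the hom-objects of $\kl(\Delta)$ is the pointwise extension of the order on $\Delta(X,\Sigma_X)$ and suprema of ascending $\omega$-chains are computed pointwise (Proposition~\ref{prop:kl-delta-cpoj}), it suffices to check that the two operations out of which the canonical lifting of a polynomial functor is assembled --- the double strength $\dstr$ and the coproduct injections --- are continuous maps of $\omega$-Cpos; local continuity of the lifting of an arbitrary polynomial functor then follows by induction on the polynomial grammar, since composites and products of continuous maps are continuous and cotupling preserves suprema componentwise.

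For the base cases, $\overline{\Id}$ is the identity and $\overline{A}$ sends every morphism to the constant Kleisli arrow $\eta_A$, so both are trivially locally continuous. For $F_1\times F_2$ one has $\overline{F_1\times F_2}f = \dstr\circ(\overline{F_1}f\times\overline{F_2}f)$, so continuity reduces to that of $\overline{F_1}f$, $\overline{F_2}f$ (induction) and of $\dstr$, i.e.\ to the identity $\bigvee_{i<\omega}(\phi_i\otimes\psi_i)=\bigl(\bigvee_{i<\omega}\phi_i\bigr)\otimes\bigl(\bigvee_{i<\omega}\psi_i\bigr)$ for ascending chains $(\phi_i)_i$ in $\Delta(X,\Sigma_X)$ and $(\psi_i)_i$ in $\Delta(Y,\Sigma_Y)$. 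For $\coprod_{i\in I}F_i$ one has $\overline{\coprod_{i\in I} F_i}f = \bigl[(\mathsf{in}_i)^\sharp\bullet\overline{F_i}f\bigr]_i$, and since $(\mathsf{in}_i)^\sharp\bullet g = \Delta\mathsf{in}_i\circ g$ and $\Delta\mathsf{in}_i$ is readily seen to preserve pointwise suprema (coproducts in $\cat{Meas}$ being disjoint unions, on which a measure is determined by its restrictions to the summands), continuity again follows by the induction hypothesis.

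The only non-routine point is continuity of $\dstr$. On measurable rectangles one has $(\phi\otimes\psi)(M\times N)=\phi(M)\cdot\psi(N)$, and since $[0,\infty]$ is $\omega$-continuous, multiplication commutes with suprema of ascending chains, so the two sides of the displayed equality agree on all rectangles. To extend the equality to an arbitrary $E$ in the product $\sigma$-algebra I would use the explicit presentation $(\phi\otimes\psi)(E)=\int_{(X,\Sigma_X)}\psi(E_x)\,\phi(\diff x)$ of the double strength --- which follows by expanding $\dstr=\mu\circ\Delta\str\circ\cstr$ with $\str(x,\psi)=\delta_x\otimes\psi$ and $\cstr(\phi,y)=\phi\otimes\delta_y$ --- where $E_x$ is the $x$-section of $E$. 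Then $\bigl(\bigvee_{i<\omega}(\phi_i\otimes\psi_i)\bigr)(E)=\sup_i\int\psi_i(E_x)\,\phi_i(\diff x)$, while $\bigl((\bigvee_j\phi_j)\otimes(\bigvee_k\psi_k)\bigr)(E)=\int\bigl(\sup_k\psi_k(E_x)\bigr)(\sup_j\phi_j)(\diff x)$, and two applications of the monotone convergence theorem --- first pulling $\sup_k$ out of the integral, then using that $\int g\,\diff(\bigvee_j\phi_j)=\sup_j\int g\,\diff\phi_j$ for an increasing chain of measures --- together with monotonicity in the indices, identify the two expressions. This monotone-convergence step is what avoids the non-uniqueness of product measures in the non-$\sigma$-finite setting; it is also where one uses that $\bigvee_{i<\omega}(\phi_i\otimes\psi_i)$ is a genuine measure, which is already granted by part \emph{(a)} of Proposition~\ref{prop:kl-delta-cpoj}. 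Everything else is a routine expansion of the canonical distributive law.
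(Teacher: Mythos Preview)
Your approach is the same as the paper's: structural recursion \`a la Hasuo, reducing local continuity to continuity of $\dstr$ and of the coproduct injections, with $\dstr$-continuity on rectangles following from $\omega$-continuity of multiplication in $[0,\infty]$. You go further than the paper in extending the $\dstr$ argument from rectangles to arbitrary sets in the product $\sigma$-algebra via the iterated-integral representation and monotone convergence---the paper stops at rectangles and an appeal to ``the definition of cartesian product of CPOs'', so your version is in fact more careful on this point (which is warranted, since measures in $\Delta$ need not be $\sigma$-finite and hence agreement on a generating $\pi$-system does not automatically propagate).
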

\begin{proof} 
	The proof is carried out by structural recursion akin to
	Hasuo's proof of \cite[Lem.~2.6]{hasuo07:trace}. 
	Because the ordering
	is pointwise it suffices to show that $\mathsf{dstr}$ and coprojections
	are continuous map between CPOs.

	Let $(\phi_i)_{i <\omega}$ and $(\psi_i)_{i < \omega}$
	be two ascending $\omega$-chains in the spaces $\Delta(X,\Sigma_X)$ and $\Delta(Y,\Sigma_Y)$,
	respectively. Continuity of $\dstr_{(X,\Sigma_X),(Y,\Sigma_Y)}(\phi,\psi)$
	follows by 
	\[\textstyle
	\bigvee_{i < \omega} \phi_i(M)\cdot\psi_i(N) = \left(
	\bigvee_{i < \omega} \phi_i(M)\right)\cdot\left(\bigvee_{i < \omega}\psi_i(N)\right)
	\]
	and by definition of cartesian product of CPOs.
	Continuity of $\Delta \mathsf{in}_i$ follows directly by 
	definition of coproducts in	$\cat{Meas}$.	
\end{proof} 
Continuity and existence of zero morphisms allow us define monads for 
functors like $\Delta(F + \Id)$ and s.t.~their Kleisli
categories are enriched over \cpoj.

\begin{proposition}
	Let $F$ be a polynomial endofunctor on $\cat{Meas}$.
	The category $\kl(\Delta(F + \Id))$ is \cpoj-enriched.
\end{proposition}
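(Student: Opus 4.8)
The plan is to assemble the pieces already established for the measure monad in exactly the same way as in the weighted (Section~\ref{sec:ex-weighted}) and Vietoris (Section~\ref{sec:ex-weighted} via $\mathcal{V}$) cases, i.e.~by a routine invocation of Theorems~\ref{thm:monadic-fg-zero-morph} and~\ref{thm:t_kleisli_s_parameterised_saturation}. First I would observe that $\cat{Meas}$ has binary coproducts (disjoint unions with the evident $\sigma$-algebra) and that $\kl(\Delta)$ has zero morphisms: for any pair of measurable spaces the constant map sending every point to the zero measure $x \mapsto 0$ is measurable, and it is an annihilator for Kleisli composition since $\Delta$ and $\mu$ preserve the zero measure. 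This is precisely the hypothesis needed to apply Theorem~\ref{thm:monadic-fg-zero-morph}.

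Next, since $F$ is polynomial, Lemma~\ref{lem:kl-delta-polynomial-continuous} gives that the canonical lifting $\overline{F}$ of $F$ to $\kl(\Delta)$ is locally continuous, and hence so is $\overline{F+\Id} = \overline{F}+\Id$ (coproduct coprojections are continuous, by the same argument used in the proof of Lemma~\ref{lem:kl-delta-polynomial-continuous}, and the identity functor is trivially continuous). Then Theorem~\ref{thm:monadic-fg-zero-morph} applies with $T=\Delta$: $\overline{F+\Id}$ carries a monad structure on $\kl(\Delta)$ whose unit is $\mathsf{inr}$ and whose multiplication uses the zero morphisms above to discard consecutive observables, exactly as in the displayed formula there.

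Finally, I would invoke Theorem~\ref{thm:t_kleisli_s_parameterised_saturation} with $T=\Delta$ and $\overline{S}=\overline{F+\Id}$: Proposition~\ref{prop:kl-delta-cpoj} tells us $\kl(\Delta)$ is \cpoj-enriched, and the previous paragraph tells us that $\overline{F+\Id}$ is a locally continuous lifting that is a monad, so $\kl(\overline{F+\Id}) = \kl(\Delta(F+\Id))$ is \cpoj-enriched, as claimed. Since every ingredient is already established in the excerpt, there is no genuine obstacle; the only point meriting a line of care is checking that $\kl(\Delta)$ has zero morphisms, which reduces to the trivial fact that the zero measure is preserved by $\Delta$ and by $\mu$, hence is absorbing for Kleisli composition.
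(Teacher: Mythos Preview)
Your proposal is correct and follows essentially the same approach as the paper: identify the zero morphisms in $\kl(\Delta)$ as the constant zero-measure maps, invoke Lemma~\ref{lem:kl-delta-polynomial-continuous} and Theorem~\ref{thm:monadic-fg-zero-morph} to obtain a locally continuous monad $\overline{F+\Id}$ on $\kl(\Delta)$, and then conclude by Proposition~\ref{prop:kl-delta-cpoj} and Theorem~\ref{thm:t_kleisli_s_parameterised_saturation}. The paper's proof is terser but invokes exactly the same four results in the same order.
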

\begin{proof} 
	Zero morphisms map everything to the 
	zero measure.
	By Lemma~\ref{lem:kl-delta-polynomial-continuous} and Theorem~\ref{thm:monadic-fg-zero-morph} $\overline{F+\Id}$ is 
	locally continuous and part of a monad over $\kl(\Delta)$. 
	We conclude by Proposition~\ref{prop:kl-delta-cpoj} and 
	Theorem~\ref{thm:t_kleisli_s_parameterised_saturation}.
\end{proof} 

In the following we instantiate the result on ``measurable LTS''
like those described e.g.~by FlatCCS \cite{bm:2015stocsos}: a calculus
for CCS-like processes living in the Euclidean plane. To this end,
let $(A,\Sigma_A)$ be the chosen measurable space of labels and
let $F = (A,\Sigma_A) \times \Id$. 
The canonical lifting of $F + \Id$ is defined, on every 
$f \colon (X,\Sigma_X) \to (Y,\Sigma_Y) \in \kl(\Delta)$, as:
\[
	((\overline{F + \Id})f)(a,x)(M \times N) = \chi_{M}(a)\cdot f(x)(N)\text{.}
\]
Unit and multiplication of the canonical monad $(\overline{F + \Id},\nu,\theta)$
are defined, on each measurable space $(X,\Sigma_X)$ as
$\theta_{(X,\Sigma_X)}(x) = \mathsf{inr}\circ\delta_{(\tau,x)}$
and as
\begin{align*}
	\nu_{(X,\Sigma_X)}(a,\tau,x)(M) &=  \chi_M(a,x) &
	\nu_{(X,\Sigma_X)}(\tau,a,x)(M) &=  \chi_M(a,x) \\
	\nu_{(X,\Sigma_X)}(\tau,\tau,x)(M) &=  \chi_M(\tau,x) &
	\nu_{(X,\Sigma_X)}(a,b,x)(M) &= 0
\end{align*}
respectively.
Composition in $\kl(\overline{F + \Id})$ follows directly
by \eqref{diag:kl-s-composition}. In particular, for any two 
morphisms $f \colon (X,\Sigma_X) \to (Y,\Sigma_Y)$ and $g \colon (Y,\Sigma_Y) \to (Z,\Sigma_Z)$
the composite $(g \bullet f)$ is defined as follows, where $S = F + \Id$:
\[
	(g \bullet f)(x)(M) = 
	\int_{\Delta S(Z,\Sigma_Z)} \psi(M) f(x)\circ 
	(\mu_{S(Z,\Sigma_Z)}\circ \Delta\nu_{(Z,\Sigma_Z)}\circ \overline Sg)^{-1}
	(\diff \psi)
	\text{.}
\]

\paragraph{Weak behavioural equivalence}
A \emph{measurable transition systems} with labels in $(A,\Sigma_A)$
and unobservable moves is an endomorphism in the Kleisli category
for the monad $(\overline{F+\Id},\nu,\theta)$ defined above.
$\Delta((A,\Sigma_A)\times\Id + \Id)$. 

By instantiating Definition~\ref{def:weak-behavioural-morphism-ps} in this setting,
it is clear that a measure function $i\colon (X,\Sigma_X) \to (Y,\Sigma_Y)$
is a weak behavioural morphism for a measurable LTS $\alpha$
whenever there exist a measurable LTS $\beta$ such that
$\alpha^\ast_{i^\sharp}  = \beta\bullet i^\sharp = \beta \circ i$.
This allows us to instantiate Definition~\ref{def:weak-behavioural-equivalence} obtaining the following new behavioural equivalence for this kind of systems.
\begin{definition}
	Let $\alpha$ be a measurable LTS carried by $(X,\Sigma_X)$.
	A relation $(R,\Sigma_R) \rightrightarrows (X,\Sigma_X)$ 
	is a \emph{weak (measurable) behavioural equivalence} for $\alpha$ whenever $p$,
	the canonical projection to the quotient space, 
	is a weak behavioural morphisms
	for $\alpha$ i.e.:
	\[\alpha^\ast_{p^\sharp}(x) = \alpha^\ast_{p^\sharp}(x')\]
	for every $x,x' \in X$ such that $p(x) = p(x')$.
\end{definition}
Although the refining system $\beta$ is not mentioned explicitly in the above
definition, it is uniquely induced by $\alpha^\ast_{p^\sharp}$ and $p$ as:
\[
	\beta(p(x))(M) \defeq \alpha^\ast_{p^\sharp}(x)(M)
	\text{.}
\]
The endomorphism $\beta$ is well-defined since $p$ is the canonical projection
of the weak behavioural equivalence $(R,\Sigma_R) \rightrightarrows (X,\Sigma_X)$.
Moreover, since \cat{Meas} has epimorphism-embedding factorisation system
whose epis are also regular \cite{bm:2015stocsos} we can safely 
consider $J$ as the category of all regular epis in \cat{Meas}.

Since binary joins are given pointwise, we conclude by Lemma~\ref{lem:pointwise-jrd} and Theorem~\ref{thm:weak-is-complete-wrt-strong}
that every strong behavioural morphism is also a weak one.

\section{Conclusions and future work}
\label{sec:concl}

In this paper we have introduced a general framework for the definition
of behavioural equivalences for coalgebras with unobservable moves.
This framework is based on categories with a suitable order
enrichment; this structure allows us to saturate systems with respect to their
unobservable moves so that we cannot distinguish them on the base of
how many unobservable moves are needed to mimic an opponent move.
Remarkably, Kleisli categories of many monads used in concurrency
theory to define behavioural functors fit this setting: as we have
showed, these include the powerset monad, the convex combinations
monad, the generalised multiset (i.e., weighting) monad, the Vietoris
monad, etc. Using this framework, we have provided a general abstract
definition of weak behavioural equivalence; we have showed that this
notion covers several ``weak bisimilarities'' defined in literature
(for weighted LTS, Segala systems, calculi with names, etc.), just by
choosing the corresponding behavioural functor.  Notably, these results
apply also to categories different from \Set; e.g., we have also
considered presheaves, compact Hasdorff spaces and measurable spaces.

The benefits of this framework are manifold.  First, it
provides a general, uniform way for defining behavioural equivalences
in presence of unobservable moves; this allows us to readily obtain
new (weak) behavioural equivalences for new kinds of systems, even in categories like
$[\mathbb{I},\Set]$, $\cat{KHaus}$ or $\cat{Meas}$. 

Secondly, it has a ``normative'' value, in the sense that we can
characterise behavioural equivalences by looking at how
their saturation is defined. For instance, we can tell when a 
a variant of  Milner's double arrow construction or a variant of Baier-Hermann's saturation is used.

Furthermore, from a more fundamental point of view our work points out
the r\^ole of left and right distributivity in many properties of weak
behavioural equivalences, such as correctness and equivalence
with respect to strong behavioural equivalences over ``double-arrow'' saturated
systems.

Finally, our framework singles out the computational aspect of
``unobservability'' from the definition of system behaviours, thus
allowing for a modular development of the theory.  In fact, although
in this paper we have focused on weak behavioural equivalences, the
same approach can be adapted to other behavioural equivalences
just by changing the monadic structure
of (un)observable actions or the form of saturation.

\paragraph{Related work}
Several authors have previously investigated weak bisimulations under
the lens of coalgebraic theory; we mention
\cite{rothe2002towards,sokolova09:sacs}, the (independent) works from
the authors of this paper \cite{brengos2013:corr,mp2013:weak-arxiv},
and more recently \cite{gp:icalp2014}.

Likely, the closest to ours is \cite{gp:icalp2014}, where Goncharov
and Pattinson presented an account of weak bisimulation for
$T(A_\tau \times \Id)$-coalgebras for $T$ being a \Set-monad whose
Kleisli category is enriched over $\cat{dCPO}_\perp$, the category of
\emph{directed CPOs with bottom}.  The major difference between this
work and \cite{gp:icalp2014} is the approach to labels and internal
moves.  In fact, \cite{gp:icalp2014} builds on \emph{observational
  patterns} (i.e.~subsets of $A^{\ast}$ closed under Brzozowski
derivatives) as the driver of the recursive equations defining
saturation; on the other hand, in this work we treat internal moves as
a computational effect added to the ``behavioural stack'' to express
unobservability.  In \textit{loc.~cit.}, Goncharov and Pattinson
investigate the use of other binary operations on $T$ besides
pointwise joins, indicating \emph{algebraicity}
\footnote{A natural transformation $\rho \colon T^n \To T$ is \emph{algebraic} iff for any $f \colon X \to TY$ we have $\rho_Y\circ(f^\ddagger)^n = f^\ddagger \circ \rho_X$ where  $f^\ddagger = \mu_Y \circ Tf$.} 
as a condition for capturing weak
bisimulations as strong ones over saturated systems (i.e.~Milner's
double arrow construction).  Algebraicity of $\vee$ is sufficient but
not necessary to satisfy \eqref{law:LD} and, moreover, it implicitly
assumes $\vee$ to be pointwise which, again, is sufficient but not
necessary to satisfy \eqref{law:JRD}.  Interestingly,
\cite{gp:icalp2014} introduces a way to ``complete'' the continuous
function space monad $(-,D)_c$ (\emph{mutatis mutandis}, $\mathcal{F}_D$
where $D$ is a directed CPO with bottom) rendering $\vee$ algebraic; this
construction yields the ``continuous continuation monad''
$((-,D),D)_c$ and allows them to recover Baier-Hermann's weak
bisimulation as a strong bisimulation over a saturated system
(generalising \cite{sokolova09:sacs} where fully-probabilistic systems
are turned into suitable $\mathcal{DP}(A_\tau\times\Id)$-coalgebras,
for $\mathcal{DP} = ((-,2),[0,1])_c$).  In
Section~\ref{sec:2_cat_perspective} we have given an abstract
construction for achieving $\eqref{law:LD}$.
The generality of this construction is guaranteed by
Theorem~\ref{thm:saturation-independent}: any alternative
``\ref{law:LD}-completion'' is equivalent when it comes to
saturation, hence it yields the same weak behavioural
equivalence. Still, it would be of interest to study which completions
render $\vee$ algebraic.

Fixed point constructions in order enriched categories are the abstract
setting of many works investigating systems with ``structures on
labels'' from the coalgebraic perspective.  We mention the influential
account of trace semantics given in \cite{hasuo07:trace} and the
recent and neat understanding of silent $\varepsilon$-moves developed
in \cite{bonchi2015killing}.  The latter work can be seen as the
``dual'' approach of the present paper: while $\tau$-moves have to be
added, $\varepsilon$-moves have to be removed.
So, silent transitions are born from a star and killed with a dagger.

\paragraph{Final remarks and open questions}

The exact connection between \eqref{law:LD} and saturation is
still unclear.  Although we know (by Theorem
\ref{theorem:ordered_saturation_category_LD_RD}) that the latter
follows from former, we do not know whether there are some
reasonable and natural examples of categories which admit saturation
without left distributivity.  It is unclear what the weak
behavioural equivalence would be in this setting.

\paragraph{Acknowledgements}We thank the organizers Marco Bernardo, Daniel Gebler and Michele Loreti and all participants of the \emph{Open Problems in Concurrency
  Theory} seminar for useful discussions about a preliminary
version of these results, and the anonymous reviewers for their valuable suggestions and remarks on the journal version of this work.

{
\small

}

\ifappendix
	\appendix
	\omittedproofs[\section{Longer proofs}\label{proofatend:proofs}]
\fi
\end{document}